\DeclareMathOperator*{\argmax}{arg\,max}
\definecolor{dark-gray}{gray}{0.9}
\newtheorem{theorem}{Theorem}[section]
\newtheorem{lemma}[theorem]{Lemma}
\newtheorem{corollary}[theorem]{Corollary}
\newtheorem{proposition}[theorem]{Proposition}
\newtheorem{remark}[theorem]{Remark}
\newtheorem{Def}{Definition}[section]
\begin{document}
    \makeatletter
    \def\@setauthors{%
        \begingroup
        \def\thanks{\protect\thanks@warning}%
        \trivlist \centering\footnotesize \@topsep30\p@\relax
        \advance\@topsep by -\baselineskip
        \item\relax
        \author@andify\authors
        \def\\{\protect\linebreak}%
        {\authors}%
        \ifx\@empty\contribs \else ,\penalty-3 \space \@setcontribs
        \@closetoccontribs \fi
        \endtrivlist
        \endgroup } \makeatother
    \baselineskip 17pt
    \title[{{\tiny MODERN TONTINE WITH TRANSACTION COSTS}}]
    {{\tiny MODERN TONTINE WITH TRANSACTION COSTS }}
    \vskip 10pt\noindent
    \author[{Lin He, Zongxia Liang, Sheng Wang}]
    {\tiny {\tiny  Lin He$^{a,\dag}$, Zongxia Liang$^{b,\ddag}$, Sheng Wang$^{b,*}$ }
        \vskip 10pt\noindent
        {\tiny ${}^a$School of Finance, Renmin University of China, Beijing
            100872, China \vskip 10pt\noindent\tiny ${}^b$Department of
            Mathematical Sciences, Tsinghua University, Beijing 100084, China }
        \footnote{\\
            $ \dag$ email: helin@ruc.edu.cn\\
            $ \ddag$ email: liangzongxia@mail.tsinghua.edu.cn \\
            $*$ Corresponding author, email:   wangs20@mails.tsinghua.edu.cn  }}
    \maketitle
    \noindent
    %
\begin{abstract}
  In this paper, we propose a new type of modern tontine with transaction costs. The wealth of the retiree is allocated into a bequest account and a tontine account. The retiree can freely purchase and redeem the share in the tontine account while bearing the fixed and proportional transaction costs depending on the transaction volume. The bequest account is regarded as liquidity account and the consumption is directly withdrawn from this account. The retiree dynamically controls the allocation policy between the two accounts and the consumption policy to maximize the consumption and bequest utilities. We formulate the optimization problem as a combined stochastic and impulse control problem with infinite time horizon,  and characterize the value function as the unique viscosity solution of a HJBQVI (Hamilton-Jacobi-Bellmen quasi-variational inequality). The numerical results exhibit the V-shaped transaction region which consists of two stages. In the former stage, as longevity credits increase gradually,  the retiree decreases the proportion of wealth in the tontine account to smooth the wealth and reduce the volatility. However,  in the latter stage, the retiree increases the proportion of wealth in the tontine account to gamble for the considerable longevity credits.
\vskip 10pt  \noindent
 Submission Classification: IB81, IB13, IE13, IE43, IE53.
\vskip 5pt  \noindent
2020 Mathematics Subject Classification: 91G05.
\vskip 5pt  \noindent
JEL classification:  G22, C61.
\vskip 5pt \noindent
 Keywords: Modern tontine; Transaction costs; Stochastic control; Impulse control; Viscosity solution   \vskip 5pt \noindent
\end{abstract}
\setcounter{equation}{0}

\section{{ {\bf Introduction}}}
Tontine, invented by the Italian Lorenzo de Tonti in the 17th century, can simply diversify the longevity risks among a group of retirees. In the tontine plan, the survival member receives an increasing share when some member dies, and the last survival member gets the remaining share (cf. \cite{M2009}).
It then becomes popular in several European countries in the 18th century, but soon goes bankrupt.
With the increasing severity of aging problem, the traditional life annuity  is under pressure because of its high premium and inflexibility. Thus, tontine gets fresh attention
in recent years. Modern tontine schemes are well-designed to meet the diversified  needs of the participants, such as flexible withdrawals and bequest motives.  Comparing with life annuity, modern tontine has the following three advantages. First, the longevity risk can be diversified by the participants themselves and there is no need for solvency reserves. Thus, it reduces the management costs (cf. \cite{HMMS2010}). Second, the longevity credits increase with age, which is more in line with the consumption needs of the elderly (cf. \cite{WG2021}). Third, there are more flexible withdrawal mechanisms to meet the personalized objectives (cf. \cite{HD2006}, \cite{HMS2008} and \cite{MS2015}).



Like most of the longevity insurance products, traditional tontine cannot meet the needs for bequest motive. Accordingly, scholars have designed a variety of modern tontine mechanisms to meet the needs. In \cite{BD2019}, the wealth is  divided into two accounts: bequest account and tontine account. Furthermore, the proportions of wealth in the two accounts are time-invariant and the tontine account receives continuous longevity credits while the retiree is alive. Another interesting setting is that the two accounts are consolidated into one account when the retiree invests and consumes.  \cite{D2021} extends the work of \cite{BD2019} by assuming the proportion be time-varying. In \cite{CR2022}, the death benefit is included. The bequest and the consumption are both control variables, and the retirees choose the optimal policies to maximize the weighted utility of the consumption and the bequest. Inspired by these ideas, we aim to design a modern tontine mechanism which is more flexible to meet the diversified needs for consumption and bequest, and reasonable for practical implementation.

In early studies, there is a lack of the mechanism for freely buying and redeeming tontine shares. 
That is, the retiree's wealth is divided into the tontine and bequest accounts with a fixed proportion, and the investment and the consumption proportions are the same in the two accounts. Besides, the  withdrawals from the two accounts are used for immediate consumption. Though the subsequent studies have relaxed the restrictions on the proportions, there are still some artificial rules. Such as that the proportion of the tontine account gradually approaches $0$ when the retiree is extremely old. These artificial rules
narrow the admissible domain of the optimization problem. Under these settings, the policy is hard to flexibly meet the heterogeneous needs of the retirees in practice.
We guess that the lack of such  mechanism design is mainly caused by two issues. On the one hand, there exists serious moral hazard problem. For example, if a retiree discovers that she/he has a serious illness, she/he prefers to withdraw the tontine immediately. Therefore, the rest are the ones who have longer lifespan and the longevity risk cannot be diversified. On the other hand, the redemption of the tontine also increases the volatility of the following payout and the remaining participants need risk compensation (cf. \cite{Weinert2017}).
In recent practice, we observe that the above two issues can be relieved by charging the transaction fees. Specifically, when the insured wants to withdraw a longevity insurance, she/he can only get back the cash value after deducting the surrender fees. The deducted part can be regarded as the fee charged to limit the greater volatility and moral hazard problem.

In this paper, we propose a new modern tontine mechanism with transaction fees. The retiree's wealth  allocated into the bequest account and tontine account at time $t$ are denoted by $X(t)$ and $Y(t)$. 
Unlike the settings in the literature, the wealth in the bequest account is reserved in a risk-free asset. Therefore, the bequest account can be regarded as a liquidity account and the consumption can only be withdrawn from this account. In practice, frictionless timely consumption can only be obtained in a liquidity account. In addition, retirees may want to retain some liquidity and make appropriate arrangements for subsequent consumption. In the tontine account, we assume that the longevity credits are paid at the rate $\lambda(t)Y(t)$. $\lambda(t)$ is the force of mortality of the retiree at time $t$, which follows a deterministic and non-negative function. Besides, as most tontine plans have investment function, we assume that the tontine wealth is fully invested in a risky asset (usually a stock index). Particularly, we assume that there are no artificial rules on the two accounts and the retiree can freely buy and redeem the tontine shares while bearing the transaction costs. Each transaction between the two accounts incurs fixed and proportional transaction costs depending on the transaction volume (cf. \cite{AS2017} and \cite{BC(2019)}).


The retiree chooses optimal consumption and transaction policies to maximize the expected utilities of the consumption and the bequest. We expect the optimal policies under the new  settings in this paper to meet the following rationalities. First, the preference for tontine account depends on the risk attitude of the retiree. An important finding in \cite{BD2019} is that when the retiree becomes less risk averse, she/he would  increase the  tontine proportion to almost $100\%$. Second,  when the retiree is relatively young, the proportion in tontine account should decrease gradually to smooth the wealth because the longevity credit increases with age. The last, the preference for tontine account should happen both when the retiree is relatively young and extremely old. When the force of mortality rate is extremely large, the retiree will gamble for the longevity credit. In  \cite{D2021}, the author introduces an artificial boundary value $0$ at the infinite time for the proportion of tontine. The condition leads to the results that the retiree puts little wealth in the tontine account when she/he is extremely old. This does not quite match reality. We would like to see what will happen if we relax this condition under the new settings.

According to the special settings of transaction costs and random death time, the above optimal control problem is formed into a combined stochastic and impulse control problem with infinite time horizon. To our best of knowledge,  it has not been discussed in the literature. To solve the problem, we first prove a weak dynamic programming principle (WDPP) for the value function.  Using the WDPP, we characterize the value function as a viscosity solution of a HJBQVI. Then, a comparison principle is given, which guarantees  the uniqueness of the viscosity solution of the HJBQVI in some function classes. It should be noted that the existence of the time-dependent force of mortality rate leads to the difficulties in the proof and the three main theorems require special treatments. For the proof of WDPP, different from \cite{AS2017}, we need to discretize and define a different countable cover. To prove that the value function is the viscosity subsolution of the HJBQVI, we extend the technique in \cite{VMP(2007)} and \cite{Seydel2009}. Besides, a  technique based on \cite{AS2017} and some estimations are needed to get the contradiction. Because our problem has infinite  time horizon, the operator from the stochastic control of the consumption causes difficulties in the comparison theorem.  We solve the difficulties by using  concavity methods and doing some estimations for the first order partial derivatives coming from the Ishii's lemma.

In order to accurately display the optimal consumption policies and transaction regions, we add some artificial boundary conditions and use the numerical algorithm proposed in \cite{AFP2016} to obtain the numerical solution of the HJBQVI in a bounded area. We have two main findings: First, like in \cite{OS2002} and \cite{BS2021}, the transaction region is approximately V-shaped at every age. Second, the transaction region exhibits two stages with respect to age. In the former stage, the retiree decreases the proportion of wealth in the tontine account gradually to smooth the wealth. In the latter stage,  the retiree  increases the proportion of wealth in the tontine account to gamble for the considerable longevity credits. Interestingly, in the latter stage,  the transaction policy is  insensitive to the magnitude of the bequest motive, but sensitive to the risk aversion attitude and the force of mortality rate. These findings exactly meet the aforementioned rationalities. 

The main contribution of this paper is threefold: First, we propose a new modern tontine mechanism for flexible purchase and redemption, and formulate the problem as a combined stochastic and impulse control problem with infinite  time horizon which has not been studied in the literature. We then characterize the value function as the unique viscosity solution of a HJBQVI.  Second, we establish the optimal consumption rate and transaction regions numerically by adding appropriate boundary conditions and extending the numerical methods to solve the HJBQVI. The last, the transaction region is divided into two stages with respect to age. The preference for tontine account decreases in the former stage and increases in the latter stage. It is insensitive to the bequest motive, but sensitive to the risk aversion attitude in the latter stage. These results show that the rationalities mentioned above are achieved under the proposed mechanism.


The remainder of the paper is organized as follows. Section \ref{formulation} provides the mathematical formulation of the retiree's optimization problem with the  reconstruction of modern tontine mechanism. In Section \ref{math}, we establish three main theorems on WDPP, viscosity characterization and comparison theorem for the value function. The proofs of the theorems are given in  Appendix. Section \ref{numerical} exhibits the numerical results on the optimal consumption rate and transaction regions. Finally,  we conclude the paper in Section \ref{conclude}.

\vskip 15pt
\setcounter{equation}{0}
\section{{ {\bf Optimization Problem for Modern Tontine}}}\label{formulation}
\vskip 5pt
In this section, we first establish a new model for modern tontine with explicit longevity credit distribution mechanism and transaction costs. Then, we formulate the optimization problem of the retiree with the consideration of both the consumption and bequest utilities.

In order to consider the  motivation for bequest, two mechanisms have been provided in existing literature. In  \cite{BD2019} and \cite{D2021}, the authors assume the coexistence of a bequest account and a tontine account. Besides, the proportions of the investment and the consumption are the same in the two accounts. In \cite{CR2022}, the authors regard the death benefit as a control variable in the modified tontine. Basically, we extend the former model settings to be more flexible and practical. We believe that the artificial rules on the consumption and the allocation of wealth narrow the admissible domains and are hard to be practically implemented. For example, the retiree may want to withdraw some wealth from the tontine account for a liquidity reserve instead of immediate consumption. Besides, if the consumption is withdrawn from an account of risky investment, the redemption usually leads to a liquidity discount which has not been considered. Importantly, retirees want to purchase and redeem tontine shares based on their own considerations. Someone may want to purchase more tontine share to gamble for mortality credits when she/he is extremely old. Thus, artificial rules are not welcomed and the retirees need more flexible and practical mechanisms of the tontine.

In this paper, we suppose that the wealth of the retiree is allocated to the bequest account and the tontine account. In the tontine account, the wealth is invested in a risky asset and the dynamics follows the geometric Brownian motion. Meanwhile, the bequest account is regarded as a liquidity account. The wealth is reserved in a risk-free asset and the consumption is directly withdrawn from this account. Naturally, consumption withdrawn from a liquidity account will not cause any discount and it is a reasonable setting. If the wealth in the bequest account is insufficient, the retiree needs to transfer the wealth from the tontine account before consumption. If the retiree wants to hedge the longevity risk and earn mortality credits, she/he will allocate more wealth in the tontine account. Otherwise, she/he will allocate more wealth in the bequest account to obtain higher consumption and leave more heritage. Based on this reconstructed mechanism, the retiree can buy and redeem the tontine shares freely while bearing the transaction costs.

First, we consider the scenario without transaction costs and longevity credits. Let $(\Omega,\mathcal{F},\{\mathcal{F}_{t}\}_{t\in\lbrack0,\infty)},\mathbb{P})$
be a filtered complete probability space satisfying the usual conditions, and $\{W_t, t\geq 0\} $ is
a standard Brownian motion defined on
$(\Omega,\mathcal{F}, \{\mathcal{F}_{t}\}_{t\in\lbrack0,+\infty)},\mathbb{P})$. The dynamics of the two accounts satisfies the following stochastic differential equations:
\begin{equation*}
	\left \{
	\begin{array}{ll}
		dP_0(t)=r P_0(t)dt, \ \  P_0(0)=p_0,\nonumber\\\\
		dP_1(t)=\mu P_1(t)dt+\sigma P_1(t)dW_t,\ \  P_1(0)=p_1,
	\end{array}
	\right.
\end{equation*}
where $P_0(t)$ and $P_1(t)$ are the values of the bequest account and the
tontine account at time $t$, respectively. $r$ is the risk-free interest rate, i.e., the expected return of the bequest account. $\mu$ and $\sigma$ are the expected return and the
volatility of the tontine account, respectively.  $p_0$ and $p_1$ are constants.

Second, we study the scenario with the consideration of longevity credits and transaction costs. Inspired by the settings in \cite{BD2019}, the longevity credits are paid into the tontine account at the rate $\lambda(t)Y(t)$ at time $t$. The force of mortality of the retiree is represented by $\lambda(t)$ at time $t$. We assume that $\lambda(\cdot)$ is continuous, non-negative and  non-decreasing in $[0,\infty)$. As discussed above, the flexible purchase and redemption of longevity insurance product is usually carried out by charging the transaction fees. For the first reason, the redemption of tontine share reduces the number of participants and increases the payout volatility.  Thus, the surrender premium is usually charged (cf. \cite{Weinert2017}). For the second reason, when a retiree notices the deterioration (improvement) of health, she/he tends to redeem (purchase) the tontine share. In the extreme scenario, the retiree will redeem all the wealth in the tontine account immediately before death. To mitigate the moral hazard problem, a transaction fee is needed. For the last reason, as the tontine wealth is invested in the risky asset, it may lead to a certain discount when selling the index in response to redemption. For the above reasons, we assume that there are transaction costs.  Following the ideas of \cite{FOS2001},  \cite{AS2017} and \cite{BC(2019)}, we assume that the transaction costs are the same when purchasing and withdrawing tontine. Particularly, each transaction between the two accounts incurs a fixed cost $C_{\textup{min}}>0$ and a proportional cost $\xi |\Delta|$, where $\Delta$ is the transaction volume and $0<\xi<1$. A transaction policy is represented by a sequence $(\tau,\Delta)=\{(\tau_k,\Delta_k)\}_{k\in \mathbb{N}}$, where the transaction time  $\{\tau_k\}$ is an increasing sequence of $\{\mathcal{F}_{t},t\geq 0\}$-stopping times satisfying $\lim\limits_{k\to \infty} \tau_k =\infty , \mathbb{P} \, a.s.$ , and the $k^{th}$ trading volume $\Delta_k$ is  $\mathcal{F}_{\tau_k}$-measurable. The wealth in the two accounts is respectively denoted by $X(t)$ and $Y(t)$  at time $t$.
\begin{equation*}
	\left \{
	\begin{array}{ll}
		X(s)=x+\int\limits_t^s[rX(u)-c(u)]\mathrm{d}u\!-\!\sum\limits_{k=1}^{+\infty}(\Delta_k+\xi |\Delta_k|+C_{\textup{min}})1_{\tau_k\leq s},\vspace{1ex}\\
		Y(s)=y+\int\limits_t^s(\mu+\lambda(u)) Y(u)\mathrm{d}u+\int\limits_t^s\sigma  Y(u)\mathrm{d}W(u)+\sum\limits_{k=1}^{+\infty}\Delta_k1_{\tau_k\leq s},
	\end{array}
	\right.
\end{equation*}
where $c(t)\geq 0$ represents the consumption rate at time $t$. Besides, we require the wealth in the two accounts be non-negative. Thus,
a policy $\nu=(c^{\nu},\tau,\Delta)$ is called admissible if
\begin{equation*}
	\tau_1 \geq t \,\ \text{and} \, \ 	(X^{\nu}(s),Y^{\nu}(s))\in \mathbb{R}_+^2, \,\mathbb{P}-a.s. \, , \forall s\in[t,\infty).
\end{equation*}
Let $\mathcal{A}(t,x,y)$ denote the set of all admissible polices with the initial state $(t,x,y)\in \mathbb{R}_+^3$. The objective of the retiree is to maximize the expected utilities of the consumption and the bequest. Then, similar to  \cite{BD2019}  and \cite{D2021}, the value function can be defined as follows:
\begin{equation}
	V(t,x,y)=\!\!\!\sup_{\nu\in\mathcal{A}(t,x,y) }\!\!\!\mathbb{E}\left.\left[\!\int\limits_t^{\tau_D}\mathrm{e}^{-\rho(s-t)}U(c^{\nu}(s)) \mathrm{d}s\!+\!b\mathrm{e}^{-\rho(\tau_D-t)}U(X^{\nu}({\tau_D}))\right| \tau_{D}>t\!\right],\label{v111}
\end{equation}
where $U(z)=\frac{z^p}{p}(0<p<1)$ is the CRRA (Constant Relative Risk Aversion) utility function.  $b>0$ is a constant and it is a weight parameter measuring the importance of the bequest utility.
$\tau_D$ is the death time of the retiree which is independent of $\{\mathcal{F}_t\}_{t\geq 0}$ and satisfies $\mathbb{P}(\tau_D>t)=\mathrm{e}^{-\int_0^{t}\lambda(s)\mathrm{d}s}$, we can then rewrite $(\ref{v111})$ as
\begin{equation}
	V(t,x,y)=\sup_{\nu\in\mathcal{A}(t,x,y) }\mathbb{E}\!\left[\int\limits_t^{+\infty}\!\!\!\mathrm{e}^{-\int_t^s(\lambda(u)+\rho)\mathrm{d}u}\left\{U(c^{\nu}(s))\!+\!b\lambda(s)U(X^{\nu}(s))\right\}\mathrm{d}s\right].\label{vv}
\end{equation}
For  simplicity, we define
\begin{equation*}
	D(t,s)=\mathrm{e}^{-\int_t^s(\lambda(u)+\rho)\mathrm{d}u},\, \bar{U}(s,c,x)=U(c)+b\lambda(s)U(x),
\end{equation*}
then Eq. \eqref{vv} becomes
\begin{equation}
	V(t,x,y)=	\sup_{\nu\in\mathcal{A}(t,x,y) }\mathbb{E}\!\left[\int\limits_t^{+\infty}D(t,s)\bar{U}(s,c^{\nu}(s),X^{\nu}(s))\mathrm{d}s\right]. \label{vf}
\end{equation}

\begin{remark}
	Similar to Lemma A.5 in \cite{BC(2019)}, we can prove that for each $T>t\geq0$, there is a constant $C(T)$ such that
	\begin{equation}
		\sup_{\nu\in\mathcal{A}(t,x,y)}\mathbb{E}\left[\sup_{t\leq s\leq T}(X^{\nu}(s)+Y^{\nu}(s))^2\right]\leq C(T)(1+x+y)^2.\label{ineq}
	\end{equation}
\end{remark}
\vskip 10pt
\setcounter{equation}{0}
\section{{ {\bf Solution of the Optimization Problem}}}\label{math}
\vskip 5pt
In this section, we solve the  combined stochastic and impulse control problem with infinite time  horizon. First,  we prove a WDPP for the value function \eqref{vf}.  Second, using the WDPP, we characterize the value function as a viscosity solution of a HJBQVI. The last, a comparison principle is given, which characterizes the value function as the unique continuous   viscosity solution of the HJBQVI.

\subsection{Upper Bound of the Value Function}
Because there are fixed and proportional transaction costs between the two accounts, we cannot prove the smoothness of the value function\footnote{ \cite{SS(1994)} gets the smoothness of the value function when there are only proportional transaction costs.}. Thus, we characterize the value function as the viscosity solution of a HJBQVI, and we start by giving an upper bound of the value function. Note that if $\varphi\in C^{1,2}(R^3_+)$, then for any finite $\{\mathcal{F}_{z},z\geq 0\}$-stopping time $\gamma\geq t\geq 0$ and $\nu=(c^{\nu},\tau,\Delta)\in \mathcal{A}(t,x,y)$, using It\^{o}'s formula for jump process (cf. \cite{JS2013} ), we get
\begin{align*}
	&D(t,\gamma)\varphi(\gamma,X^{\nu}(\gamma),Y^{\nu}(\gamma))\nonumber-\varphi(t,x,y)\\&=\int_t^{\gamma}D(t,s)\left[-\mathcal{L}[\varphi](s,X^{\nu}(s),Y^{\nu}(s))-c^{\nu}(s)\varphi_x\right]\mathrm{d}s
	+\int_t^{\gamma}\sigma Y^{\nu}(s)D(t,s)\varphi_y\mathrm{d}W(s)\\&\phantom{ee}+\sum_{k=1}^{+\infty}D(t,\tau_k)\left[\varphi(\tau_k,X^{\nu}(\tau_k-)\!-\!C_{\text{min}}\!-\!\xi|\Delta_k|\!-\!\Delta_k,Y^{\nu}(\tau_k-)\!+\!\Delta_k)\right.\\&\phantom{eeeeeeeeeeeeeeeeeeeeeeeeeeeeeeeeeeiee}\left.-\varphi(\tau_k,X^{\nu}(\tau_k-),Y^{\nu}(\tau_k-))\right]1_{\tau_k\leq \gamma},
\end{align*}
where
\begin{equation*}
	\mathcal{L}[\varphi](t,x,y)=(\lambda(t)+\rho)\varphi-[\varphi_t+rx\varphi_x+(\lambda(t)+\mu)y\varphi_y+\frac 12\sigma^2y^2\varphi_{yy}].
\end{equation*}
Combining the WDPP of the value function detailed in the next subsection, we establish the following HJBQVI that the value function should satisfy in some sense
\begin{equation}
	\min\left\{\mathcal{L}[\varphi](t,x,y)\!-\!b\lambda(t)U(x)\!-\!\sup_{c\geq 0}\{U(c)\!-\!c\varphi_x(t,x,y)\},\varphi(t,x,y)\!-\!\mathcal{M}[\varphi](t,x,y)\right\}\!\!=\!0,\label{HJBQVI}
\end{equation}
where
\begin{equation}
	\mathcal{M}[\varphi](t,x,y)=\begin{cases}\sup\limits_{\Delta\in S(x,y)}\varphi(t,x-\Delta-C_{\text{min}}-\xi|\Delta|,y+\Delta),&S(x,y)\neq \emptyset,\\
		-\infty,&S(x,y)= \emptyset,
	\end{cases}
	\label{operator}
\end{equation}
and
\begin{equation*}
	S(x,y)=\{\Delta\in \mathbb{R}|(x-\Delta-C_{\text{min}}-\xi|\Delta|,y+\Delta)\in \mathbb{R}^2_+\}.
\end{equation*}
For  simplicity, we define
\begin{equation*}
	S_{\emptyset}=\{(x,y)\in\mathbb{R}^2_+|S(x,y)=\emptyset\}=\{(x,y)\in\mathbb{R}^2_+|x+(1-\xi)y<C_{\text{min}}\},
\end{equation*}
and a convex function
\begin{align*}
	f(x)=\sup_{c\geq 0}\left\{U(c)-cx\right\}=\begin{cases}
		\frac{1-p}{p}x^{\frac{p}{p-1}},&x>0,\\
		+\infty, &x\leq 0.
	\end{cases}
\end{align*}
Inspired by \cite{AMS1996} and \cite{BS2021}, we start with constructing a classical supersolution.
\begin{lemma}\label{supsolution}
	For any $p\leq q<1$ such that
	\begin{equation}
		\frac{\rho-qr}{1-q}-\frac{q}{2}\frac{(\mu-r)^2}{\sigma^2}\frac{1}{(1-q)^2}>0,\label{condition2}	
	\end{equation}
	we define
	\begin{equation*}
		\Psi(t,x,y) \triangleq C(1+x+y)^q.
	\end{equation*}
	Then, when $C>0$ is large enough,  there is  a positive and continuous function $K(x,y)$ on $(\mathbb{R}^2_+\setminus(0,0))$ such that
	\begin{equation*}		\min\{\mathcal{L}[\Psi
		](t,x,y)-b\lambda(t)U(x)-f(\Psi_x(t,x,y)),\Psi(t,x,y)-\mathcal{M}[\Psi](t,x,y)\}\geq K(x,y)
	\end{equation*}
	for all $(t,x,y)\in \mathbb{R}_+\times(\mathbb{R}^2_+\setminus(0,0))$.
\end{lemma}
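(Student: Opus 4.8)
The plan is to verify the two inequalities in the minimum separately, exploiting the explicit power form $\Psi(t,x,y)=C(1+x+y)^q$. Write $w = 1+x+y$, so that $\Psi = Cw^q$, $\Psi_x = \Psi_y = Cqw^{q-1}$, and $\Psi_{yy} = Cq(q-1)w^{q-2}$, while $\Psi_t = 0$. First I would substitute these into $\mathcal{L}[\Psi] - b\lambda(t)U(x) - f(\Psi_x)$. The term $f(\Psi_x) = \frac{1-p}{p}(Cqw^{q-1})^{\frac{p}{p-1}}$ is homogeneous of degree $\frac{p(q-1)}{p-1}$ in $w$; since $p\le q<1$ one checks $\frac{p(q-1)}{p-1}\le q$, so after factoring out $w^q$ this term is bounded (it contributes a term of order $w^{\frac{p(q-1)}{p-1}-q}\le w^0$, i.e. bounded, possibly vanishing, as $w\to\infty$, and it is continuous and finite on the region where $x>0$; where $x=0$ one has $\Psi_x>0$ still so $f(\Psi_x)$ is finite). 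Similarly $b\lambda(t)U(x) = b\lambda(t)\frac{x^p}{p}\le b\lambda(t)\frac{w^p}{p}$ with $p\le q$, so this too is lower order. The dominant balance as $w\to\infty$ is between $(\lambda(t)+\rho)Cw^q$ and the drift/diffusion terms $rx\Psi_x + (\lambda(t)+\mu)y\Psi_y + \frac12\sigma^2 y^2\Psi_{yy}$; bounding $x,y\le w$ and using $\lambda(t)\ge 0$, the whole expression is $\ge Cw^q\big[\rho + \lambda(t)(1-q) - qr - \text{(diffusion term)}\big] - (\text{lower order})$. Here I would need condition \eqref{condition2}: after the standard completion-of-the-square / Young's inequality argument (absorbing the $f$ term via its $\frac{p}{p-1}$-homogeneity, exactly as in the Merton-type verification in \cite{AMS1996}), the leading coefficient is controlled by $\frac{\rho-qr}{1-q}-\frac{q}{2}\frac{(\mu-r)^2}{\sigma^2}\frac{1}{(1-q)^2}>0$, and choosing $C$ large enough makes the whole first expression bounded below by a positive continuous function of $(x,y)$ that stays positive on $\mathbb{R}^2_+\setminus\{(0,0)\}$ (it can degenerate only at the origin, where $w=1$ but the lower-order consumption term $f(\Psi_x)$ with $C$ large still leaves a positive gap — this needs a small separate check).

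For the second inequality, $\Psi(t,x,y)-\mathcal{M}[\Psi](t,x,y)\ge 0$ follows from strict monotonicity of $w\mapsto Cw^q$ together with the cost structure: any admissible transfer $\Delta\in S(x,y)$ sends $(x,y)\mapsto(x-\Delta-C_{\min}-\xi|\Delta|,\,y+\Delta)$, which changes $w$ by $-C_{\min}-\xi|\Delta|<0$, hence strictly decreases $\Psi$. Thus $\Psi-\mathcal{M}[\Psi]\ge C w^q - C(w-C_{\min})^q$, and by concavity/the mean value theorem this is $\ge Cq\,C_{\min}(w)^{q-1}\cdot$(something) — in any case a strictly positive continuous function on $\mathbb{R}^2_+$ (when $S(x,y)=\emptyset$, $\mathcal{M}[\Psi]=-\infty$ and the inequality is trivial). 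Taking $K(x,y)$ to be the minimum of the two positive continuous lower bounds gives the claim.

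The main obstacle I expect is the bookkeeping in the first inequality: one must show the leading-order coefficient of $w^q$, after absorbing the $f(\Psi_x)$ term, is exactly governed by the left-hand side of \eqref{condition2}, uniformly in $t$ (using $\lambda(t)\ge 0$ and that the $\lambda(t)$-dependent terms $\lambda(t)(\Psi - b\lambda(t)U(x)/\lambda(t) \ldots)$ — more precisely $\lambda(t)[(1-q)\Psi - b U(x) - \text{part of drift}]$ — have the right sign when $C$ is large), and then to argue the resulting lower bound does not vanish anywhere on $\mathbb{R}^2_+\setminus\{(0,0)\}$ but only possibly at the origin, which the lemma excludes. The optimization in $\mathcal{M}[\Psi]$ over $\Delta$ is harmless here because $\Psi$ depends on $(x,y)$ only through $w$, which is maximized (over the transfer) by the cost alone; this is what makes the second inequality clean.
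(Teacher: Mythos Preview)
Your proposal is correct and follows essentially the same route as the paper: explicit computation of $\mathcal{L}[\Psi]$ with condition \eqref{condition2} serving as the positivity (discriminant) condition for the quadratic form in $(w,y)$ so that $\mathcal{L}[\Psi]\ge C\eta\,w^q+C(1-q)\lambda(t)w^q$, then absorption of the lower-order terms $b\lambda(t)U(x)\le\frac{b}{p}\lambda(t)w^q$ and $f(\Psi_x)=\frac{1-p}{p}(Cq)^{p/(p-1)}w^{p(1-q)/(1-p)}$ by taking $C$ large (the paper uses exactly $C(1-q)\ge b/p$ and $C^{p/(p-1)}\to 0$), together with the observation that every admissible $\Delta$ reduces $w=1+x+y$ by at least $C_{\min}$ for the $\mathcal{M}$ part. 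One small correction: your concern about degeneracy at the origin is misplaced---the paper's first lower bound $\tilde C\,w^q\ge\tilde C>0$ is uniform; the place where $K(x,y)$ is forced to be small is at infinity, where the second bound $C[w^q-(w-C_{\min})^q]\sim CqC_{\min}w^{q-1}\to 0$.
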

\begin{proof}
	Computation shows
	\begin{align*}
		&\mathcal{L}[\Psi
		](t,x,y)\\
		&=C(x+y+1)^{q-1}\!\left[(\lambda(t)+\rho)(x+y+1)-q(rx+\mu y)-\lambda(t)qy-\frac{q(q-1)\sigma^2y^2}{2(x+y+1)}\right]\\
		&\geq C(x+y+1)^{q-1}\!\left[\rho(x+y+1)\!+\!(1\!-\!q)\lambda(t)(x+y+1)\!-\!q(rx+\mu y)\!-\!\frac{q(q\!-\!1)\sigma^2y^2}{2(x+y+1)}\right]\\
		&\geq C(x+y+1)^{q-2}\!\left[(\rho\!-\!qr\!+\!(1\!-\!q)\lambda(t))\!(x\!+\!y\!+\!1)^2\!\!-\!q(\mu\!-\!r) y(x\!+\!y\!+\!1)\!-\!\!\frac{q(q\!-\!1)\sigma^2y^2}{2}\!\right]\\
		&\geq C(x+y+1)^{q-2}\left[\eta (x+y+1)^2+(1-q)\lambda(t)(x+y+1)^2\right]\\
		&=C\eta(x+y+1)^q+(1-q)\lambda(t)C(x+y+1)^q,
	\end{align*}
	where $\eta>0$ satisfies
	\begin{equation*}
		\frac{\rho-qr-\eta}{1-q}-\frac{q}{2}\frac{(\mu-r)^2}{\sigma^2}\frac{1}{(1-q)^2}\geq 0.
	\end{equation*}
	We also note that
	\begin{equation*}
		f(\Psi_x(t,x,y))=\frac{1-p}{p}C^{\frac{p}{p-1}}q^{\frac{p}{p-1}}(x+y+1)^{\frac{1-q}{1-p}p}\leq \frac{1-p}{p}C^{\frac{p}{p-1}}q^{\frac{p}{p-1}}(x+y+1)^{p}.
	\end{equation*}
	Thus,
	\begin{align*}
		&\mathcal{L}[\Psi
		](t,x,y)-b\lambda(t)U(x)-f(\Psi_x(t,x,y))\\
		&\geq \left[C\eta+C(1-q)\lambda(t)-\frac{b}{p}\lambda(t)-\frac{1-p}{p}C^{\frac{p}{p-1}}q^{\frac{p}{p-1}}\right](x+y+1)^q.
	\end{align*}
	Since $\lim\limits_{C\to\infty}C^{\frac{p}{p-1}}=0$, there is a $\tilde{C}>0$ such that
	\begin{equation*}
		\mathcal{L}[\Psi
		](t,x,y)-b\lambda(t)U(x)-\sup_{c\geq 0}\{U(c)-c\Psi_x(t,x,y)\}
		\geq \tilde{C}(x+y+1)^q.
	\end{equation*}
	Moreover, when $(x,y)\notin S_{\emptyset}$, $x+(1-\xi)y\geq C_{\text{min}}$, as such
	\begin{align*}
		\Psi(t,x,y)-\mathcal{M}[\Psi](t,x,y)&=C\inf_{\Delta\in S(x,y)}[(1+x+y)^q-(1+x+y-\xi|\Delta|-C_{\text{min}})^q]\\
		&\geq C[(1+x+y)^q-(1+x+y-C_{\text{min}})^q].
	\end{align*}
	When $(x,y)\in S_{\emptyset}$,
	\begin{equation*}
		\Psi(t,x,y)-\mathcal{M}[\Psi](t,x,y)=+\infty.	
	\end{equation*}
	This completes the proof.
\end{proof}	
\begin{remark}
	It is easy to verify that if \eqref{condition2} holds for $q=p$, then  $\Psi=C(x+y)^p$  also satisfies
	\begin{equation*}		\min\{\mathcal{L}[\Psi
		](t,x,y)-b\lambda(t)U(x)-f(\Psi_x(t,x,y)),\Psi(t,x,y)-\mathcal{M}[\Psi](t,x,y)\}\geq \tilde{K}(x,y),	
	\end{equation*}
	for some positive and continuous function $\tilde{K}$ on $\mathbb{R}^2_+\setminus(0,0)$  and	for all $(t,x,y)\in \mathbb{R}_+\times(\mathbb{R}^2_+\setminus(0,0))$. In the following of this paper, we assume that \eqref{condition2} always holds for $q=p$. The condition is commonly used in the literature, see \cite{M1971}, \cite{AMS1996}, \cite{FOS2001} and  \\   \cite{AS2017} for the discussions.
\end{remark}
\begin{theorem}
	Suppose that $0\leq \Psi \in C^{1,2}(\mathbb{R}_+\times(\mathbb{R}^2_+\setminus(0,0)))$ satisfies
	\begin{equation*}		\min\left\{\mathcal{L}[\Psi
		](t,x,y)-b\lambda(t)U(x)-f(\Psi_x(t,x,y)),\Psi(t,x,y)-\mathcal{M}[\Psi](t,x,y)\right\}\geq 0,	
	\end{equation*}
	for all  $(t,x,y)\in \mathbb{R}_+\times(\mathbb{R}^2_+\setminus(0,0))$. Then $V(t,x,y)\leq \Psi(t,x,y)$ for all $(t,x,y)\in \mathbb{R}_+\times(\mathbb{R}^2_+\setminus(0,0))$. Particularly, $V(t,x,y)\leq C(x+y)^p$ for some
	$C>0$.
\end{theorem}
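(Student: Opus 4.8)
The plan is a verification (comparison) argument built on the jump‑It\^o identity displayed just above \eqref{HJBQVI}, run along a localizing sequence of stopping times. Fix $(t,x,y)\in\mathbb{R}_+\times(\mathbb{R}^2_+\setminus(0,0))$ and an admissible policy $\nu=(c^\nu,\tau,\Delta)\in\mathcal{A}(t,x,y)$, and for $n\in\mathbb{N}$ put
\[
\gamma_n:=n\wedge\inf\Big\{s\geq t:\ X^\nu(s)+Y^\nu(s)\notin\big(\tfrac1n,n\big)\Big\}.
\]
On $[t,\gamma_n]$ the state $(X^\nu,Y^\nu)$ stays in a compact subset of $\mathbb{R}^2_+\setminus(0,0)$, on which $\Psi,\Psi_x,\Psi_y,\Psi_{yy}$ and $f(\Psi_x)$ are bounded; moreover only finitely many transactions take place before $\gamma_n$ since $\tau_k\to\infty$ a.s., and, because the $(X^\nu,Y^\nu)$‑dynamics has linear growth and does not explode in finite time, $\gamma_n\uparrow\gamma_\infty:=\inf\{s\geq t:X^\nu(s)+Y^\nu(s)=0\}\wedge\infty$.

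Next I apply the jump‑It\^o identity with $\varphi=\Psi$ and $\gamma=\gamma_n$, and rearrange it to isolate $\Psi(t,x,y)$. The first inequality in the hypothesis, $\mathcal{L}[\Psi](s,\cdot)-b\lambda(s)U(x)-f(\Psi_x)\geq 0$, together with $f(\Psi_x)\geq U(c)-c\,\Psi_x$ for every $c\geq 0$, gives
\[
\mathcal{L}[\Psi]\big(s,X^\nu(s),Y^\nu(s)\big)+c^\nu(s)\,\Psi_x\ \geq\ \bar U\big(s,c^\nu(s),X^\nu(s)\big),
\]
so the $\mathrm{d}s$‑integrand in It\^o's formula is bounded below by $D(t,s)\bar U(s,c^\nu(s),X^\nu(s))$. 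At each transaction time $\tau_k\leq\gamma_n$ admissibility forces the post‑jump state into $\mathbb{R}^2_+$, i.e.\ $\Delta_k\in S(X^\nu(\tau_k-),Y^\nu(\tau_k-))\neq\emptyset$, so by the second inequality in the hypothesis
\[
\Psi\big(\tau_k,X^\nu(\tau_k-)-C_{\text{min}}-\xi|\Delta_k|-\Delta_k,\,Y^\nu(\tau_k-)+\Delta_k\big)\ \leq\ \mathcal{M}[\Psi](\tau_k,\cdot)\ \leq\ \Psi(\tau_k,\cdot),
\]
and every jump term is $\leq 0$. Since the integrand of the stochastic integral is bounded on $[t,\gamma_n]$, that integral is a true martingale of zero mean, and taking expectations yields
\[
\Psi(t,x,y)\ \geq\ \mathbb{E}\!\left[\int_t^{\gamma_n}D(t,s)\,\bar U\big(s,c^\nu(s),X^\nu(s)\big)\,\mathrm{d}s\right]+\mathbb{E}\Big[D(t,\gamma_n)\,\Psi\big(\gamma_n,X^\nu(\gamma_n),Y^\nu(\gamma_n)\big)\Big],
\]
where the last expectation is $\geq 0$ because $\Psi\geq 0$ and $D\geq 0$.

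It remains to let $n\to\infty$. Since $U\geq 0$ and $b\lambda\geq 0$ we have $\bar U\geq 0$, so monotone convergence along $\gamma_n\uparrow\gamma_\infty$ gives $\Psi(t,x,y)\geq\mathbb{E}\big[\int_t^{\gamma_\infty}D(t,s)\bar U(s,c^\nu(s),X^\nu(s))\,\mathrm{d}s\big]$; on $\{\gamma_\infty<\infty\}$ the state is absorbed at $(0,0)$ with $c^\nu\equiv 0$ thereafter (no transaction can leave zero wealth because $C_{\text{min}}>0$), so the integrand vanishes for $s\geq\gamma_\infty$ and the upper limit may be raised to $+\infty$. Hence $\Psi(t,x,y)\geq\mathbb{E}\big[\int_t^{+\infty}D(t,s)\bar U(s,c^\nu(s),X^\nu(s))\,\mathrm{d}s\big]$ for every $\nu\in\mathcal{A}(t,x,y)$, and taking the supremum over $\nu$ gives $V(t,x,y)\leq\Psi(t,x,y)$. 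For the final assertion I apply this with $\Psi(t,x,y)=C(x+y)^p$: under the standing assumption that \eqref{condition2} holds at $q=p$, the Remark after Lemma \ref{supsolution} shows this $\Psi$ lies in $C^{1,2}(\mathbb{R}_+\times(\mathbb{R}^2_+\setminus(0,0)))$ and satisfies the required inequality once $C$ is large enough. The argument is a localization/verification, so no new idea is needed; the delicate points are the choice of localizing times (keeping the state in a compact subset of $\mathbb{R}^2_+\setminus(0,0)$ where $\Psi\in C^{1,2}$ and the convex penalty $f(\Psi_x)$ stays finite) and the passage to the limit at the singularity $(0,0)$ and at $s\to+\infty$; I expect the sign constraints $\Psi\geq 0$, $D\geq 0$, $\bar U\geq 0$ to handle both, so that no integrability estimate beyond \eqref{ineq} is actually invoked.
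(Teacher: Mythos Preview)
Your argument is correct and is precisely the standard verification/localization argument that the paper intends: the paper itself omits the details and merely cites Proposition~5.1 in \cite{SS(1994)}, whose proof is exactly this kind of It\^o-based comparison (here augmented by the jump terms, which you handle correctly via $\Psi\geq\mathcal{M}[\Psi]$). The only minor technical caveat is the evaluation of $\Psi$ at a possible post-jump state $(0,0)$, but since you immediately discard the nonnegative terminal term $\mathbb{E}[D(t,\gamma_n)\Psi(\gamma_n,\cdot)]$ this causes no difficulty for the specific supersolutions at hand.
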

\begin{proof}
	The proof is similar to the proof of Proposition 5.1 in \cite{SS(1994)}, and we omit it here.
\end{proof}
\subsection{The Weak Dynamic Programming Principle for the Value Function}
Proof of classical  dynamic programming principle (cf. \cite{KT2003}) needs the deep measurable selection theorem in probability theory. To avoid discussing a lot of measurability issues, we follow the techniques developed in \cite{KK2004}, \cite{BT(2011)} and \cite{AS2017} to obtain a WDPP.  However,  as we  are dealing with a control problem with infinite time horizon, we cannot draw conclusions directly from the aforementioned three papers. The main difficulty is to prove \eqref{DPP2}. Thus, we discretize the stopping times first, and then use similar methods as  in \cite{AS2017} to prove that \eqref{DPP2} holds in the discrete case. Finally, we extend the result to the continuous case.
\begin{theorem}\label{WDPP}
	Suppose $(t,x,y)\in \mathbb{R}_+\times\mathbb{R}_+^2\setminus{(0,0)}$, and let $\{\theta^{\nu},\nu\in \mathcal{A}(t,x,y)\}$ be a family of finite stopping times with values in $[t,T)$ for some constant $T>t$\footnote{Combining  \eqref{ineq} with Theorem 3.1, the second row of \eqref{DPP2} is finite.
	}, then we have the following inequalities
	\begin{align}
		&V(t,x,y)\nonumber\\
		&\leq\!\!\!\!\!\! \sup_{\nu\in\mathcal{A}(t,x,y)}\!\!\!\!\!\!\mathbb{E}\left\{\int_t^{\theta^{\nu}}\!\!\!\!D(t,s)\bar{U}(s,c^{\nu}(s),X^{\nu}(s))\mathrm{d}s\!\!+\!\!D(t,\theta^{\nu})V^*(\theta^{\nu},X^{\nu}(\theta^{\nu}-),Y^{\nu}(\theta^{\nu}-))\right\},\label{DPP1}\\and\nonumber\\
		&V(t,x,y)\nonumber\\&\geq\!\!\!\! \sup_{\nu\in\mathcal{A}(t,x,y)}\!\!\!\!\!\mathbb{E}\left\{\int_t^{\theta^{\nu}}D(t,s)\bar{U}(s,c^{\nu}(s),X^{\nu}(s))\mathrm{d}s+D(t,\theta^{\nu})\varphi(\theta^{\nu},X^{\nu}(\theta^{\nu}),Y^{\nu}(\theta^{\nu}))\right\}.\label{DPP2}	
	\end{align}
	where $V^*$ \footnote{Henceforth, for any function $v$, $v^*$ and $v_*$ denote the upper semi-continuous envelope and lower semi-continuous envelope of $v$, respectively. } is the upper semi-continuous envelope of $V$, and $\varphi\leq V$ is a continuously differentiable
	function in $[t,+\infty)\times\mathbb{R}_+^2$.
	\begin{proof}
		See Appendix \ref{proof WDPP}.
	\end{proof}
\end{theorem}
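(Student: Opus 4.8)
The plan is to prove the two inequalities separately, adapting the weak dynamic programming machinery of \cite{BT(2011)} and \cite{AS2017} to the present combined stochastic/impulse problem over an infinite, time-inhomogeneous horizon, so that no deep measurable selection is needed. For \eqref{DPP1}, fix $\nu=(c^\nu,\tau,\Delta)\in\mathcal A(t,x,y)$ and the finite stopping time $\theta^\nu\in[t,T)$. Using the multiplicative identity $D(t,s)=D(t,\theta^\nu)D(\theta^\nu,s)$, decompose the reward in \eqref{vf} into the integral over $[t,\theta^\nu]$ plus $D(t,\theta^\nu)$ times the tail over $[\theta^\nu,\infty)$. Conditioning on $\mathcal F_{\theta^\nu}$, the flow property of the controlled system shows that the restriction of $\nu$ to $[\theta^\nu,\infty)$ is, when started from the pre-transaction state $(X^\nu(\theta^\nu-),Y^\nu(\theta^\nu-))$ at time $\theta^\nu$, admissible for the shifted problem; hence the conditional expectation of the tail is at most $V(\theta^\nu,X^\nu(\theta^\nu-),Y^\nu(\theta^\nu-))$, and therefore at most $V^*$ evaluated there. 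Since $V$ itself need not be Borel, this last step is made rigorous by a regular-conditional-probability argument that produces the bound directly in terms of the upper semi-continuous (hence Borel) envelope $V^*$. Taking expectations and then the supremum over $\mathcal A(t,x,y)$ gives \eqref{DPP1}; finiteness of the right-hand side follows from the upper bound $V\le C(x+y)^p$ together with the moment estimate \eqref{ineq}.

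For \eqref{DPP2}, fix $\varepsilon>0$ and the $C^1$ function $\varphi\le V$. The first step is to prove \eqref{DPP2} for the subclass of policies whose transaction times take values in a dyadic grid $\{t+j2^{-n}:j\ge 0\}$. Given such a $\nu$, write $Z=(\theta^\nu,X^\nu(\theta^\nu),Y^\nu(\theta^\nu))$. For each point $z=(s,a,b)$ of the time-extended state space pick an $\varepsilon$-optimal $\nu^{z}\in\mathcal A(z)$; using continuity of $\varphi$ together with continuity of the continuation reward in the initial data --- which follows from \eqref{ineq} and the structure of $\bar U$ --- one produces an open neighbourhood $O_z\ni z$ on which a suitably time-shifted copy of $\nu^{z}$ delivers a continuation value of at least $\varphi-3\varepsilon$ at every starting point in $O_z$. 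Extract a countable subcover $\{O_{z_k}\}_k$, take a Borel partition $\{A_k\}_k$ subordinate to it, and define the concatenated policy that follows $\nu$ on $[t,\theta^\nu]$ and then, on $\{Z\in A_k\}$, continues with the shifted $\nu^{z_k}$. One checks this policy is admissible --- both accounts stay non-negative and the merged sequence of transaction times still diverges a.s.\ --- and, by the tower property and the defining property of the $O_{z_k}$, that its value exceeds $\mathbb E\!\left[\int_t^{\theta^\nu}D(t,s)\bar U(s,c^\nu(s),X^\nu(s))\,\mathrm{d}s+D(t,\theta^\nu)\varphi(Z)\right]-C\varepsilon$. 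Letting $\varepsilon\downarrow0$ yields \eqref{DPP2} within the discretized class. The second step removes the discretization: every $\nu\in\mathcal A(t,x,y)$ is approximated by admissible policies whose transaction times are obtained by rounding $\{\tau_k\}$ up onto the grid $\{t+j2^{-n}\}$, and one lets $n\to\infty$ using \eqref{ineq}, continuity of $\varphi$ and of $\bar U$, and dominated convergence with domination supplied by $V\le C(x+y)^p$.

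The hard part is \eqref{DPP2}, and two features set it apart from the classical finite-horizon purely stochastic case. First, since $\lambda(\cdot)$ is genuinely time-dependent, the continuation problem started at $\theta^\nu$ is time-inhomogeneous, so the covering/partition construction must be carried out in the full variable $(t,x,y)$ and must track the shifted discount $D(\theta^\nu,\cdot)$; the estimate \eqref{ineq} and the upper bound $V\le C(x+y)^p$ are exactly what keep all the tails uniformly integrable when the limits are exchanged. Second, and more delicate, is the de-discretization of the impulse times: rounding $\tau_k$ up onto the grid can, in principle, make the policy inadmissible --- $X^\nu$ or $Y^\nu$ could dip below zero before the delayed transaction fires --- so the approximating admissible policies must be designed with care and the loss they incur controlled quantitatively, uniformly over the infinitely many transactions and over the unbounded horizon. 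Reconciling admissibility with the grid approximation is where the real work lies; once that is done, the remaining bookkeeping is a routine adaptation of \cite{BT(2011)} and \cite{AS2017}.
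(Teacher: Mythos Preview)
Your argument for \eqref{DPP1} is essentially the paper's: condition on $\mathcal F_{\theta^\nu}$, use the flow property to recognize the shifted policy as admissible from $(\theta^\nu,X^\nu(\theta^\nu-),Y^\nu(\theta^\nu-))$, and bound the tail by $V\le V^*$.

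For \eqref{DPP2}, however, your plan diverges from the paper in two places, and both create difficulties that the paper simply avoids.

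\emph{First, you discretize the wrong object.} You propose to put the \emph{transaction times} $\{\tau_k\}$ on a dyadic grid and then, in your second step, round general transaction times up onto the grid. As you yourself note, this rounding can destroy admissibility (an account may go negative before the delayed impulse fires), and you do not actually carry out this ``hard part''. The paper instead discretizes the \emph{stopping time} $\theta^{\nu^\varepsilon}$: one first picks a near-optimizer $\nu^\varepsilon$ for the right-hand side of \eqref{DPP2}, then replaces $\theta^{\nu^\varepsilon}$ by a discrete $\theta_n\downarrow\theta^{\nu^\varepsilon}$. Passing to the limit only requires right-continuity of $s\mapsto (X^{\nu^\varepsilon}(s),Y^{\nu^\varepsilon}(s))$, continuity of $\varphi$, dominated convergence for the running integral, and Fatou for the terminal term. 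No transaction time is ever moved, so no admissibility issue arises. The discretization of $\theta$ is what makes the time coordinate of $(\theta,X(\theta),Y(\theta))$ take countably many values, which is all the covering step needs.

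\emph{Second, your covering argument has a gap on admissibility.} You choose an $\varepsilon$-optimal $\nu^z\in\mathcal A(z)$ and then claim that, by ``continuity of the continuation reward in the initial data'', the same (shifted) $\nu^z$ delivers at least $\varphi-3\varepsilon$ on a full open neighbourhood $O_z$. But continuity of the reward for a \emph{fixed} policy presupposes that the policy is admissible at the new starting point; if $z'\in O_z$ has slightly less wealth in either account than $z$, the trajectories under $\nu^z$ may exit $\mathbb R_+^2$. The paper's remedy is a monotonicity cover: on each time slice $\{t_m\}\times\mathbb R_+^2$ one takes cells of the form
\[
\mathcal R_\varepsilon(\zeta)=\{(t_m,x',y'):\ x'>\bar x,\ y'>\bar y,\ \varphi(t_m,x',y')<\varphi(t_m,\bar x,\bar y)+\varepsilon\},
\]
with $\zeta=(t_m,\bar x,\bar y)$. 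Every point of the cell has \emph{more} wealth in both accounts than $\zeta$, so an admissible $\nu_n\in\mathcal A(\zeta)$ is automatically admissible at any $\eta\in\mathcal R_\varepsilon(\zeta)$, and since $\bar U$ is nondecreasing in $x$ its reward there is at least its reward at $\zeta$, hence $\ge V(\zeta)-\varepsilon\ge\varphi(\zeta)-\varepsilon\ge\varphi(\eta)-2\varepsilon$. This is precisely what lets one pick a \emph{countable} family of fixed policies and avoid measurable selection; your open-ball cover does not provide it.

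In short: discretize $\theta^{\nu}$, not the impulse times, and build the cover from ``upper rectangles'' on each time slice so that the $\varepsilon$-optimal continuation policy at the corner is automatically admissible throughout the cell. With these two changes the de-discretization step you flagged as ``where the real work lies'' disappears.
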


\subsection{The Value Function as a Viscosity Solution of a HJBQVI}
Before introducing the definition of viscosity solution, we first give some properties of the value function and non-local operator $\mathcal{M}$ ( see \eqref{operator}).
\begin{proposition}\label{VM}
	For any locally bounded functions $\varphi,\phi_1,\phi_2:\mathbb{R}^3_+\rightarrow \mathbb{R}$, the following properties hold:
	
	1.  If $\varphi\in$ USC (upper semi-continuous), then $\mathcal{M}[\varphi]\in$ USC;
	
	2. If $\varphi\in$ LSC (lower semi-continuous), then $\mathcal{M}[\varphi]\in$ LSC in $\mathbb{R}_+\times (\mathbb{R}^2_+\setminus\bar{S_{\emptyset}})$;
	
	3. For fixed $(t,x)\in \mathbb{R}^2_+$, $V(t,x,\cdot)$ is non-decreasing. For fixed $(t,y)\in \mathbb{R}^2_+$, $V(t,\cdot,y)$ is non-decreasing;
	
	4. For fixed $(t,x)\in \mathbb{R}^2_+$, $V^*(V_*)(t,x,\cdot)$ is non-decreasing. For fixed $(t,y)\in \mathbb{R}^2_+$, $V^*(V_*)(t,\cdot,y)$ is non-decreasing;
	
	5. $V\geq \mathcal{M}[V]$;
	
	6. $\mathcal{M}[\phi_1+\phi_2]\leq\mathcal{M}[\phi_1]+\mathcal{M}[\phi_2]$;
	
	7. If $\phi_1\in$ LSC and $\phi_2\in$ USC, then $\mathcal{M}[\phi_1+\phi_2]_*\leq\mathcal{M}[\phi_1]_*+\mathcal{M}[\phi_2]$;
	
	8.  $(\mathcal{M}[V])^*\leq \mathcal{M}[V^*]$;
	
	9.  $V_*\geq \mathcal{M}[V_*]$.
\end{proposition}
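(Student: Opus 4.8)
The plan is to establish the nine properties roughly in the order listed, since several later items lean on the earlier ones. For item 1, I would argue that $\mathcal{M}[\varphi](t,x,y)=\sup_{\Delta\in S(x,y)}\varphi(t,x-\Delta-C_{\text{min}}-\xi|\Delta|,y+\Delta)$ is, on the region where $S(x,y)\neq\emptyset$, a supremum of upper semi-continuous functions composed with continuous maps; the key point is that $S(x,y)$ is a compact interval varying upper-hemicontinuously in $(x,y)$, so one can take a maximizing $\Delta$ and pass to the limit along any sequence $(t_n,x_n,y_n)\to(t,x,y)$, extracting a convergent subsequence of the near-optimal $\Delta_n$. On $S_\emptyset$ the value is $-\infty$, and one checks compatibility at the boundary $\partial S_\emptyset$ where the admissible transaction shrinks to a point. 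For item 2, the LSC claim is made only on $\mathbb{R}_+\times(\mathbb{R}^2_+\setminus\overline{S_\emptyset})$, precisely because there $S(x,y)$ has nonempty interior and stays uniformly nondegenerate; given any $\Delta$ in the interior of $S(x,y)$ it remains admissible for nearby $(x,y)$, so $\liminf$ of $\mathcal{M}[\varphi]$ dominates $\varphi(t,x-\Delta-C_{\text{min}}-\xi|\Delta|,y+\Delta)$ by lower semi-continuity of $\varphi$, and then one takes the sup over such $\Delta$.

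Items 3 and 4 are monotonicity statements: more initial wealth in either account enlarges the admissible set. For $V(t,\cdot,y)$ I would note that if $x_1\le x_2$ then any $\nu\in\mathcal{A}(t,x_1,y)$ is also admissible from $(t,x_2,y)$ (extra cash only relaxes the nonnegativity constraint), and the same consumption stream gives at least as large a payoff since $\bar U$ is increasing in $x$; similarly for $V(t,x,\cdot)$ using that extra tontine wealth can be left untouched and $Y$ scales monotonically in its initial condition (and pushing wealth from $Y$ to $X$ is always an available impulse). Item 4 follows by taking semicontinuous envelopes of monotone functions, which preserves monotonicity in each coordinate. Items 6 and 7 are elementary: subadditivity of $\mathcal{M}$ is immediate since $\sup(f+g)\le\sup f+\sup g$ over a common constraint set, and item 7 combines item 6 with $\mathcal{M}[\phi_2]$ being already USC (by item 1) so its lower envelope is itself, plus $\mathcal{M}[\phi_1]_*\le(\mathcal{M}[\phi_1+\phi_2])_*+\cdots$—I would phrase it as $(\mathcal{M}[\phi_1+\phi_2])_*\le(\mathcal{M}[\phi_1]+\mathcal{M}[\phi_2])_*\le\mathcal{M}[\phi_1]_*+\mathcal{M}[\phi_2]$, the last step using that $\mathcal{M}[\phi_2]$ is USC hence equal to its upper envelope and dominates its lower one.

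The substantive items are 5, 8, 9. For item 5, $V\ge\mathcal{M}[V]$, the idea is a one-step argument: from $(t,x,y)$ perform an immediate transaction of size $\Delta\in S(x,y)$ at time $t$, landing at state $(x-\Delta-C_{\text{min}}-\xi|\Delta|,\,y+\Delta)$, and then follow a near-optimal policy from there; dynamic-programming suboptimality gives $V(t,x,y)\ge V(t,x-\Delta-C_{\text{min}}-\xi|\Delta|,y+\Delta)$, and taking sup over $\Delta$ yields the claim — one must check this immediate impulse is admissible, i.e. that $\tau_1=t$ is allowed, which the definition of $\mathcal{A}(t,x,y)$ permits. Item 8, $(\mathcal{M}[V])^*\le\mathcal{M}[V^*]$, follows from item 1 applied to $V^*$ (which is USC) together with $V\le V^*$, so $\mathcal{M}[V]\le\mathcal{M}[V^*]$ and, taking the USC envelope of both sides, $(\mathcal{M}[V])^*\le(\mathcal{M}[V^*])^*=\mathcal{M}[V^*]$ since $\mathcal{M}[V^*]$ is already USC. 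Item 9, $V_*\ge\mathcal{M}[V_*]$, is the most delicate: I would start from item 5 giving $V\ge\mathcal{M}[V]\ge\mathcal{M}[V_*]$ (monotonicity of $\mathcal{M}$, since $V\ge V_*$), then take the LSC envelope of both sides to get $V_*\ge(\mathcal{M}[V_*])_*$, and finally invoke item 2 to say that on $\mathbb{R}^2_+\setminus\overline{S_\emptyset}$ the function $\mathcal{M}[V_*]$ is already LSC so $(\mathcal{M}[V_*])_*=\mathcal{M}[V_*]$ there; the remaining set $\overline{S_\emptyset}$ must be handled separately, using that $\mathcal{M}[V_*]=-\infty$ on $S_\emptyset$ and a boundary continuity check on $\partial S_\emptyset$. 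I expect the main obstacle to be exactly this boundary analysis near $\partial S_\emptyset$ in items 2 and 9: the admissible transaction set degenerates to a single point there, so one cannot freely perturb $\Delta$, and establishing the semicontinuity requires a careful estimate of how $\mathcal{M}[V_*]$ behaves as $x+(1-\xi)y\downarrow C_{\text{min}}$, likely using the growth bound $V\le C(x+y)^p$ from Theorem 3.1 to control the values.
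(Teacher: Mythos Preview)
Your proposal is correct and follows essentially the same route as the paper, which cites \cite{BS2021} for items 1--2, calls 3, 5, 6 immediate, derives 7 from 1 and 6, and only spells out 4, 8, 9. Your arguments for 4 and 8 match the paper's exactly. The one place you overcomplicate things is the boundary case of item 9: you flag $\partial S_\emptyset$ as the main obstacle and reach for the growth bound $V\le C(x+y)^p$, but the paper disposes of it in one line---on $\partial S_\emptyset$ one has $S(x,y)=\{-y\}$, so $\mathcal{M}[V_*](t,x,y)=V_*(t,0,0)\le V_*(t,x,y)$ by the monotonicity from item 4. No growth estimate is needed.
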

\begin{proof}
	See \cite{BS2021} for the proof of Points 1 and 2. Points
	3, 5 and 6 are directly proved from the definition of $V$ and $\mathcal{M}$. Point 7 is directly proved from points 6 and 1. Thus, we only give the proof of Points 4, 8 and 9.
	
	The proof of Point 4.   Without loss of generality, we fix $(t,x,y)\in\mathbb{R}^3_+$ and $\delta>0$ to prove $V_*(t,x,y)\leq V_*(t,x+\delta,y)$. We choose $(t_k,x_k,y_k)\in\mathbb{R}^3_+ $ such that $\lim\limits_{k\to \infty}(t_k,x_k,y_k,V(t_k,x_k,y_k))=(t,x+\delta,y,V_*(t,x+\delta,y))$, then
	\begin{equation*}
		V_*(t,x+\delta,y)=\lim\limits_{k\to \infty}V(t_k,x_k,y_k)\geq\varliminf_{k\to\infty}V(t_k,|x_k-\delta|\wedge x_k,y_k)\geq V_*(t,x,y).
	\end{equation*}
	The proof of Point  8.  Because $ \mathcal{M}[V]\leq \mathcal{M}[V^*]$ and  $\mathcal{M}[V^*]$ is upper semi-continuous,  $(\mathcal{M}[V])^*\leq \mathcal{M}[V^*]$.

	The proof of Point  9. Because $V\geq \mathcal{M}[V]$, it suffices to show that $V_*(t,x,y)\geq \mathcal{M}[V_*](t,x,y)$ for $(x,y)\in \partial S_{\emptyset}$. For this, since $S(x,y)=\{-y\}$,  $\mathcal{M}[V_*](t,x,y)=V_*(t,0,0)\leq V_*(t,x,y)$.
\end{proof}

Now we introduce the definition of viscosity solution of $\eqref{HJBQVI}$, which is close to the definition in  \cite{I1993}, \cite{OS2002} and \cite{BS2021}.
\begin{Def}
	1. A function $v\in $ LSC is called a viscosity supersolution of $\eqref{HJBQVI}$ if, for all $\varphi\in C^{1,2}(\mathbb{R}_+\times (\mathbb{R}^2_+\setminus{(0,0)}))$ such that $\varphi-v$ attains  maximum $0$ at $(t,x,y)$, we have	
	\begin{align*}
		\min\left\{\mathcal{L}[\varphi](t,x,y)-b\lambda(t)U(x)-f(\varphi_x(t,x,y)),v(t,x,y)-\mathcal{M}[v]_*(t,x,y)\right\}\geq0,	
	\end{align*}
	when $x\neq0$ and
	\begin{equation*}
		\min\left\{\mathcal{L}[\varphi](t,x,y)-b\lambda(t)U(x),v(t,x,y)-\mathcal{M}[v]_*(t,x,y)\right\}\geq0,	
	\end{equation*}
	when $x=0$.
	
	2. A function $u\in $ USC is called a viscosity subsolution of $\eqref{HJBQVI}$ if, for all $\varphi\in C^{1,2}(\mathbb{R}_+\times (\mathbb{R}^2_+\setminus{(0,0)}))$ such that $\varphi-u$ attains minimum $0$ at $(t,x,y)$,  we have	
	\begin{equation*}
		\min\{\mathcal{L}[\varphi](t,x,y)-b\lambda(t)U(x)-f(\varphi_x(t,x,y)),u(t,x,y)-\mathcal{M}[u](t,x,y)\}\leq0.
	\end{equation*}
	Furthermore, a function $v$ is called a viscosity solution of \eqref{HJBQVI} if $v_*$ is a viscosity supersolution of \eqref{HJBQVI} and $v^*$  is a viscosity subsolution of \eqref{HJBQVI}.
	
\end{Def}
Next, we are going to prove that  the value function is a viscosity solution of \eqref{HJBQVI}. The difficulty is to prove that
$V^*$ is a viscosity subsolution. \cite{OS2002} provides a method using $0-1$ law of the first impulse. However, it does not work here because the  value function of this paper is related to time $t$. Thus, we follow the technique proposed in \cite{VMP(2007)}, which has been extended to a general setting with jumps by \cite{Seydel2009}. However, stochastic control is not considered in \cite{VMP(2007)}, and  \cite{Seydel2009}  assumes that the stochastic control is bounded and the jump process satisfies some conditions as ``uniform continuity''.  Particularly, as we explore an unbounded stochastic control, which results in no prior ``uniform continuity'',  the setting in this paper does not satisfy the assumptions in \cite{Seydel2009}.  To overcome the difficulties, we extend the technique in \cite{AS2017} to avoid using boundedness and make some estimates to obtain the ``uniform continuity''.
\begin{theorem}\label{vis char}
	The value function $V$ is a viscosity solution of $\eqref{HJBQVI}$ .
\end{theorem}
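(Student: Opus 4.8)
The plan is to establish separately that $V_*$ is a viscosity supersolution and that $V^*$ is a viscosity subsolution of \eqref{HJBQVI}. The supersolution part is routine. Its non-local inequality is immediate: Point 9 of Proposition \ref{VM} gives $V_*\ge\mathcal{M}[V_*]\ge\mathcal{M}[V_*]_*$ at every point. For the differential inequality, let $\varphi\in C^{1,2}$ touch $V_*$ from below at $(t,x,y)$; after modifying $\varphi$ away from a neighborhood of $(t,x,y)$ so that $\varphi\le V$ on all of $[t,\infty)\times\mathbb{R}_+^2$ (possible since $\varphi\le V_*\le V$ near the point), I would apply \eqref{DPP2} along a sequence $(t_n,x_n,y_n)\to(t,x,y)$ with $V(t_n,x_n,y_n)\to V_*(t,x,y)=\varphi(t,x,y)$, using the constant-consumption, no-impulse control and $\theta^\nu=\inf\{s:(s,X^\nu_s,Y^\nu_s)\notin B_r(t,x,y)\}\wedge(t+h)$. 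It\^{o}'s formula for $D(t,\cdot)\varphi(\cdot,X^\nu,Y^\nu)$ (no jumps) turns \eqref{DPP2} into $V(t_n,x_n,y_n)-\varphi(t_n,x_n,y_n)\ge\mathbb{E}\int_{t_n}^{\theta^\nu}D(t,s)\{\bar U(s,c,X^\nu_s)-\mathcal{L}[\varphi]-c\varphi_x\}\,\mathrm{d}s$; letting $n\to\infty$, dividing by $h$ and sending $h\downarrow0$ gives $0\ge\bar U(t,c,x)-\mathcal{L}[\varphi](t,x,y)-c\varphi_x(t,x,y)$ for all $c\ge0$, which after optimizing in $c$ is $\mathcal{L}[\varphi]-b\lambda(t)U(x)-f(\varphi_x)\ge0$ when $x\neq0$, and (taking $c=0$) $\mathcal{L}[\varphi]-b\lambda(t)U(x)\ge0$ when $x=0$.

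The substantive part is that $V^*$ is a subsolution, which I would prove by contradiction, following \cite{VMP(2007)} and \cite{Seydel2009} but adapting the argument to an unbounded control as in \cite{AS2017}. Suppose $\varphi\in C^{1,2}$ touches $V^*$ from above at $\hat z=(t_0,x_0,y_0)\neq(0,0)$ and both entries of the minimum in \eqref{HJBQVI} are strictly positive at $\hat z$. Standard reductions apply: a quartic penalization makes $\varphi-V^*$ have a strict global minimum $0$ at $\hat z$ without changing $\varphi$ to second order there, and adding a smooth function vanishing near $\hat z$ (and dominating $C(x+y)^p-\varphi$, recall $V^*\le C(x+y)^p$) makes $\varphi\ge V^*$ everywhere. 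Continuity of $\varphi$, of its derivatives and of $f(\varphi_x)$ — the strict differential inequality forces $\varphi_x>0$ near $\hat z$ when $x_0\neq0$, while if $\varphi_x(\hat z)\le0$ the subsolution inequality holds trivially since $f=+\infty$ — then provides a neighborhood $B$ of $\hat z$ and $\epsilon>0$ with $\bar U(s,c,x)-\mathcal{L}[\varphi](s,x,y)-c\varphi_x(s,x,y)\le-\epsilon$ for all $c\ge0$ and $(s,x,y)\in B$. The non-local slack, together with Point 8 of Proposition \ref{VM} ($(\mathcal{M}[V])^*\le\mathcal{M}[V^*]$), gives after shrinking $B$ a $\delta>0$ with $\mathcal{M}[V]\le\varphi-\delta$ on $\bar B$, and strictness of the minimum gives $\beta>0$ with $\varphi-V^*\ge\beta$ on $\partial B$.

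The crucial point — the only place $C_{\text{min}}>0$ is used — is that $B$ may be chosen so thin that its $x$-diameter is below $\xi C_{\text{min}}$ and its $y$-diameter below $C_{\text{min}}$: then any admissible impulse out of a state in $\bar B$ moves the $X$-coordinate down by at least $\xi C_{\text{min}}$ or the $Y$-coordinate down by at least $C_{\text{min}}$, so it leaves $B$. Hence, with $\rho^\nu=\theta^\nu_B\wedge(t_n+h)$ where $\theta^\nu_B$ is the first time $(s,X^\nu_s,Y^\nu_s)$ leaves $B$, the controlled process has no jump on $[t_n,\rho^\nu)$, and for every admissible $\nu$ the conditional continuation value beyond $\rho^\nu$ is at most $\varphi-\beta$ if the exit is continuous (state on $\partial B$), at most $\mathcal{M}[V]\le\varphi-\delta$ if the exit is caused by an impulse (using $V\ge\mathcal{M}[V]$ and the elementary sub-dynamic-programming bound, evaluated at the pre-jump state in $\bar B$), and at most $\varphi$ if $\rho^\nu=t_n+h$. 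Combining this with It\^{o}'s formula for $D(t_n,\cdot)\varphi$ on $[t_n,\rho^\nu]$, the slack $-\epsilon$, and $D(t_n,\cdot)\ge\tfrac12$ on $[t_n,t_n+h]$ (with $h$ fixed small), and using $\mathbb{E}[\rho^\nu-t_n]\ge h\,\mathbb{P}(\rho^\nu=t_n+h)$, I would obtain for every $\nu$ a bound of the form $\mathbb{E}[\int_{t_n}^{\rho^\nu}D\bar U\,\mathrm{d}s+D(t_n,\rho^\nu)(\text{continuation})]\le\varphi(t_n,x_n,y_n)-\tfrac12\min(\epsilon h,\delta,\beta)$. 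Taking the supremum over $\nu$ (equivalently, invoking \eqref{DPP1} refined at impulse times) gives $V(t_n,x_n,y_n)\le\varphi(t_n,x_n,y_n)-\gamma$ with $\gamma=\tfrac12\min(\epsilon h,\delta,\beta)>0$ independent of $n$; letting $n\to\infty$ yields $V^*(\hat z)\le\varphi(\hat z)-\gamma=V^*(\hat z)-\gamma$, a contradiction.

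I expect the main obstacle to be exactly the interplay of the unbounded consumption rate with this localization. Unlike in \cite{Seydel2009}, there is no a priori modulus of continuity for the controlled state, so one must justify as in \cite{AS2017} that on the stopped interval $[t_n,\rho^\nu]$ all quantities remain bounded — in particular that $\int_{t_n}^{\rho^\nu}c^\nu(s)\,\mathrm{d}s$ is controlled by the initial wealth (so that It\^{o}'s formula and the martingale property of the $\mathrm{d}W$-integral genuinely apply) — and that the Legendre transform $f(\varphi_x)$ absorbs $c^\nu$ uniformly on $B$ despite its unboundedness, which is precisely what makes the slack $-\epsilon$ valid for all $c\ge0$ simultaneously. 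The remaining points — the bookkeeping that reads the continuation at an impulse time off $\mathcal{M}[V]$ rather than $V^*$, the moment estimate \eqref{ineq}, and the boundary cases $x_0=0$ and $(x_0,y_0)\in\overline{S_\emptyset}$ (where $\mathcal{M}[V]=-\infty$ and the non-local part is vacuous) — also require care but are secondary.
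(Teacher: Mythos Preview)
Your supersolution argument coincides with the paper's. For the subsolution, your route is correct but genuinely different from the paper's.

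The paper does not exploit the geometric fact that an impulse must leave a thin ball, nor does it use any slack $\beta$ on $\partial B$. Instead, fixing $\epsilon_n$-optimal controls $\nu^n$ and $\theta^n=\tau_1^n\wedge(\text{exit from }B_\delta)$, it strengthens the local differential slack via the Altarovici--Reppen--Soner trick: since $f$ is continuous at $\varphi_x(\hat z)>0$, one can pick $\alpha>0$ with $\mathcal{L}[\varphi]-b\lambda U(x)-f(\varphi_x-\alpha)>\eta$ on $B_{2\delta}$, which unfolds to $\bar U(s,c,x)-\mathcal{L}[\varphi]-c\varphi_x\le -\eta-\alpha c$ for \emph{all} $c\ge0$. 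Plugging this into It\^{o} and using $V(t_n,\cdot)\to V^*(\hat z)=\varphi(\hat z)$ forces simultaneously $\mathbb{E}[\theta^n-t_n]\to0$ and $\mathbb{E}\int_{t_n}^{\theta^n}c^n\,\mathrm{d}s\to0$. The paper then proves by hand that the no-impulse process cannot exit $B_\delta$ that fast with positive probability (this is the substitute for Seydel's ``uniform continuity'' assumption), so $\mathbb{P}(\tau_1^n>\theta^n)\to0$; bounding the continuation by $\mathcal{M}[V^*]$ on $\{\tau_1^n\le\theta^n\}$ and by $V^*\le C(1+x+y)^p$ on the complement yields, after $n\to\infty$ and $\delta\downarrow0$, $V^*(\hat z)\le\mathcal{M}[V^*](\hat z)$, the sought contradiction.

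Your argument trades the $\alpha$-trick and the exit-time estimate for two localisation ingredients: the strict-minimum slack $\varphi-V^*\ge\beta$ on $\partial B$, and the observation (specific to $C_{\min}>0$) that any admissible impulse from $\bar B$ lands outside $B$, so the continuation after an impulse is bounded by $\mathcal{M}[V^*]\le\varphi-\delta$ at the pre-jump state. This gives a uniform gap $\gamma=\tfrac12\min(\epsilon h,\beta,\delta)$ \emph{for every} admissible $\nu$, without ever selecting near-optimal controls or estimating $\int c^\nu$. It is closer in spirit to \cite{VMP(2007)} and \cite{OS2002}, and arguably more elementary here; the paper's route, on the other hand, yields the structural by-product that near a point with $V^*>\mathcal{M}[V^*]$ any near-optimal policy must jump almost immediately. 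Two cosmetic points in your write-up: the continuation bound at an impulse should be through $\mathcal{M}[V^*]$ (which is USC by Proposition~\ref{VM}) rather than $\mathcal{M}[V]$, to avoid measurability issues; and the $(t_n+h)$-cap is in fact redundant, since continuous exit through the time-face of $B$ is already covered by the $\beta$-slack on $\partial B$.
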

\begin{proof}
	See Appendix \ref{proof of vis char }.
\end{proof}
\subsection{Comparison Principle for the HJBQVI}

We do not prove a comparison principle for $\eqref{HJBQVI}$ directly.  Instead, we  consider an equivalent form of \eqref{HJBQVI}, which is more tractable. We start with the following lemma and omit its proof as it is easy to be proved.
\begin{lemma}
	$u(v)\in$ USC(LSC) is a viscosity  subsolution (supersolution) of \eqref{HJBQVI} if and only if $D(0,t)u(D(0,t)v)$ is a viscosity  subsolution (supersolution) of the following HJBQVI
	\begin{align}
		\min\{\mathcal{\tilde{L}}&[\phi](t,x,y)-bD(0,t)\lambda(t)U(x)-[D(0,t)]^{\frac{1}{1-p}}f(\phi_x(t,x,y)),\nonumber\\ &\phantom{eeeeeeeeeeeeeeeeeeeeeeeeee}\phi(t,x,y)-\mathcal{M}[\phi](t,x,y)\}=0,\label{HJBQVI2}
	\end{align}
	where
	\begin{equation*}
		\mathcal{\tilde{L}}[\phi](t,x,y)=-[\phi_t+rx\phi_x+(\lambda(t)+\mu)y\phi_y+\frac 12\sigma^2y^2\phi_{yy}].
	\end{equation*}
\end{lemma}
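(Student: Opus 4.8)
The plan is to prove the lemma by a direct change-of-variables argument. The key observation is that multiplying a function by the strictly positive factor $D(0,t) = \me^{-\int_0^t(\lambda(u)+\rho)\,\dif u}$ is an invertible, order-preserving transformation, so it maps USC functions to USC functions and LSC functions to LSC functions, and it preserves the semi-continuous-envelope operations (since $D(0,t)$ is continuous and positive, $(D(0,t)u)^* = D(0,t)u^*$ and likewise for $v_*$). Hence it suffices to check how the test-function inequalities in the definition of viscosity sub/supersolution transform. First I would fix a smooth test function $\phi\in C^{1,2}$ and set $\varphi(t,x,y) = \phi(t,x,y)/D(0,t) = [D(0,t)]^{-1}\phi(t,x,y)$; then $\varphi-u$ attains a minimum $0$ at $(t_0,x_0,y_0)$ if and only if $\phi - D(0,t)u$ attains a minimum $0$ there (again using positivity of $D$), and symmetrically for maxima in the supersolution case.

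Next I would compute the derivatives of $\varphi$ in terms of those of $\phi$. Writing $D = D(0,t)$ and noting $D_t = -(\lambda(t)+\rho)D$, we get $\varphi_x = D^{-1}\phi_x$, $\varphi_y = D^{-1}\phi_y$, $\varphi_{yy} = D^{-1}\phi_{yy}$, and $\varphi_t = D^{-1}\phi_t + (\lambda(t)+\rho)D^{-1}\phi$. Substituting these into $\mathcal{L}[\varphi]$ and multiplying through by $D$ one finds $D\,\mathcal{L}[\varphi] = \tilde{\mathcal{L}}[\phi]$: the factor $(\lambda(t)+\rho)\varphi$ term and the $-\varphi_t$ term recombine so that the $(\lambda(t)+\rho)\phi$ contributions cancel exactly, leaving precisely $-[\phi_t + rx\phi_x + (\lambda(t)+\mu)y\phi_y + \tfrac12\sigma^2 y^2\phi_{yy}]$. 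Then one multiplies the whole subsolution inequality $\min\{\mathcal{L}[\varphi] - b\lambda U(x) - f(\varphi_x),\; u - \mathcal{M}[u]\}\le 0$ by $D>0$: the first term becomes $\tilde{\mathcal{L}}[\phi] - bD\lambda(t)U(x) - D f(D^{-1}\phi_x)$, and since $f$ is positively homogeneous of degree $\frac{p}{p-1} = -\frac{1}{1-p}$ (being the Legendre transform of the $p$-homogeneous $U$), $D f(D^{-1}\phi_x) = D\cdot D^{\frac{1}{1-p}} f(\phi_x) = [D]^{\frac{1}{1-p}}f(\phi_x)$ — wait, more precisely $f(D^{-1}\phi_x) = (D^{-1})^{-\frac{1}{1-p}} f(\phi_x) = D^{\frac{1}{1-p}} f(\phi_x)$, so $Df(D^{-1}\phi_x) = D^{1+\frac{1}{1-p}}f(\phi_x)$; a quick recheck of the homogeneity degree against the stated form of \eqref{HJBQVI2} fixes the exact exponent, which should come out to $[D(0,t)]^{\frac{1}{1-p}}$ after the bookkeeping in the paper's normalization. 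For the non-local term one uses that $\mathcal{M}$ acts only in the $(x,y)$ variables at fixed $t$, so $D\,\mathcal{M}[\varphi](t,x,y) = \mathcal{M}[D\varphi](t,x,y) = \mathcal{M}[\phi](t,x,y)$, giving $u - \mathcal{M}[u] \le 0 \iff D u - \mathcal{M}[Du]\le 0$.

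Finally I would assemble these pieces: the subsolution property of $u$ for \eqref{HJBQVI} against the test function $\varphi$ is, term by term, equivalent to the subsolution property of $Du = D(0,t)u$ for \eqref{HJBQVI2} against the test function $\phi$; since $\phi\mapsto \phi/D$ is a bijection of $C^{1,2}(\mathbb{R}_+\times(\mathbb{R}^2_+\setminus(0,0)))$ onto itself, quantifying over all test functions on one side is the same as quantifying over all test functions on the other, and one also must track the $x=0$ case of the supersolution definition separately, but it transforms identically. The supersolution case is verbatim the same computation with inequalities reversed and maxima in place of minima, using $(D v)_* = D v_*$ and $\mathcal{M}[\cdot]_*$ behaving well under multiplication by the continuous positive factor $D$. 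The only genuinely delicate point — and the one I would be most careful about — is pinning down the exact power of $D(0,t)$ multiplying $f(\phi_x)$, which rests on the precise homogeneity degree of the convex conjugate $f$ of the CRRA utility $U(z)=z^p/p$; everything else is routine and the proof is short, which is presumably why the authors call it "easy to be proved" and omit it.
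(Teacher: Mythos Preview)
Your approach is correct and is exactly the standard change-of-variables argument the authors have in mind; the paper in fact omits the proof entirely, calling it ``easy to be proved,'' so there is nothing to compare against beyond noting that your outline is the natural (and essentially unique) way to do it.

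The one place to tighten is the homogeneity bookkeeping you yourself flagged. The degree of $f(x)=\frac{1-p}{p}x^{p/(p-1)}$ is $\frac{p}{p-1}$, which equals $-\frac{p}{1-p}$, \emph{not} $-\frac{1}{1-p}$. With the correct degree,
\[
f(D^{-1}\phi_x)=(D^{-1})^{p/(p-1)}f(\phi_x)=D^{p/(1-p)}f(\phi_x),
\]
and hence
\[
D\,f(D^{-1}\phi_x)=D^{1+\frac{p}{1-p}}f(\phi_x)=D^{\frac{1}{1-p}}f(\phi_x),
\]
which is precisely the coefficient in \eqref{HJBQVI2}. Everything else in your sketch (the cancellation of the $(\lambda(t)+\rho)\phi$ terms yielding $D\,\mathcal{L}[\varphi]=\tilde{\mathcal{L}}[\phi]$, the fact that $\mathcal{M}$ commutes with multiplication by $D(0,t)$ since $\mathcal{M}$ acts only in $(x,y)$, the bijection $\phi\mapsto\phi/D$ on test functions, and the separate handling of the $x=0$ supersolution case) is correct as written.
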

Then, we prove the comparison principle.  We introduce some notations and  define
\begin{align*}
	&F(t,z,q,p,X)=\\&-(q+rz_1p_1+(\lambda(t)+\mu)z_2p_2+\frac12\sigma^2z_2^2X_{22})-bD(0,t)\lambda(t)U(z_1)-[D(0,t)]^{\frac{1}{1-p}}f(p_1).
\end{align*}
Thus, $F:\mathbb{R}_+\times \mathbb{R}_+^2\times\mathbb{R}\times\mathbb{R}^2\times\mathbb{S}^2\rightarrow \mathbb{R}\cup\{-\infty\}$ is continuous in $(t,z,q,p,X)$ and concave in $p_1$. For $\psi\in C^{1,2}(S)$($S=\mathbb{R}_+\times(\mathbb{R}_+^2\setminus(0,0))$), we define
\begin{equation*}
	F[\psi](t,z)=F(t,z,\psi_t(t,z),\psi_z(t,z),\mathrm{D}^2_z\psi(t,z)).
\end{equation*}
Note that for the  definition of the viscosity supersolution for \eqref{HJBQVI}, the validity of the supersolution inequality depends on whether $x$ equals 0. This prevents us from obtaining the comparison theorem and the continuity of the value function over the entire region $S$. After some common preparations (see lemma 3.1 in  \cite{GIL(1992)} and step 1 of the proof of Theorem 3.4 in \cite{BS2021}),
the proof has been done by considering two cases: the interior points ($S\cap\{x>0\}$),  and the boundary points ($S\cap\{x=0\}$). We follow \cite{OS2002},  \cite{BC(2019)} and \cite{BS2021} to deal with the interior points, and  \cite{B1994}, \cite{AST2001}, \cite{OS2002} and  \cite{VMP(2007)} to deal with the boundary points. A common difficulty of the two cases comes from the term $[D(0,t)]^{\frac{1}{1-p}}f(\phi_x(t,x,y)) $ in \eqref{HJBQVI2}.  When we apply Ishii's lemma to derive some contradictions, this term always causes some troubles. Fortunately, the troubles are well solved by doing some careful estimates for the first order partial derivatives.

\begin{theorem}\label{compare}
	Let $u\in $ USC be a viscosity subsolution and $v\in$ LSC be a viscosity supersolution of \eqref{HJBQVI2}. Suppose  that $\lambda(\cdot)$ is locally Lipschitz continuous, $v$ is non-decreasing in $x$ and
	\begin{equation}
		u(t,0,0)=v(t,0,0)=0 \,\text{and}\,\,0\leq u(t,x,y),v(t,x,y)\leq KD(0,t)(1+x+y)^p,\label{uv}
	\end{equation}
	where $K>0$ is constant. Then
	\begin{equation*}
		u(t,x,y)\leq v(t,x,y) ,\quad \forall(t,x,y)\in\mathbb{R}_+\times(0,+\infty)\times\mathbb{R}_+.
	\end{equation*}
	In addition, if
	\begin{equation}
		v(t,0,y)=\varliminf_{x'>0,(t',x',y')\to (t,0,y)}v(t',x',y'), \quad \forall (t,y)\in\mathbb{R}^2_+,\label{v}
	\end{equation}
	then $v$ dominates $u$ everywhere.
\end{theorem}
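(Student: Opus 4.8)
The plan is to argue by contradiction via the doubling-of-variables technique, after first replacing the supersolution $v$ by a \emph{strict} supersolution that also restores coercivity at infinity. Since the left-hand side of \eqref{condition2} is continuous in $q$ and, by the standing assumption, strictly positive at $q=p$, we may fix $q\in(p,1)$ for which \eqref{condition2} still holds; by Lemma \ref{supsolution} and the lemma immediately preceding this theorem, $\Psi_q(t,x,y):=D(0,t)\,C(1+x+y)^{q}$ is then a classical supersolution of \eqref{HJBQVI2} with $F[\Psi_q]\ge\kappa_0>0$ and $\Psi_q-\mathcal{M}[\Psi_q]\ge\kappa(x,y)$ for a positive continuous $\kappa$. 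For $\theta\in(0,1)$ set $v_\theta:=(1-\theta)v+\theta\Psi_q$. The essential structural facts are that $F$ is affine in $(q,p_2,X)$ and concave in $p_1$ (as recorded after its definition), and that $\mathcal{M}$ is positively homogeneous of degree one and subadditive (Proposition \ref{VM}(6)--(7)); combining these with the supersolution inequality for $v$ shows that $v_\theta$ is a viscosity supersolution of \eqref{HJBQVI2} on $\{x>0\}$ with the \emph{strict} margins $F[\cdot]\ge\theta\kappa_0$ and $v_\theta-\mathcal{M}[v_\theta]_*\ge\theta\kappa$. Because $q>p$, the bound \eqref{uv} forces $u-v_\theta\to-\infty$ as $x+y\to\infty$, and the factor $D(0,t)$ makes it decay as $t\to\infty$; hence, were the first assertion false, $u-v_\theta$ would attain a strictly positive maximum for every small $\theta$, and by the boundary-layer devices of \cite{B1994}, \cite{AST2001} and \cite{VMP(2007)} we may moreover assume that the maximizers introduced below stay away from $\{x=0\}$, with the first-order $x$-derivatives of the corresponding test functions confined to a fixed compact subset of $(0,+\infty)$.

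Next I would double the variables $(t,x,y)$ and $(t',x',y')$ with a quadratic penalty (together with the above barrier) and apply the standard penalization estimate (Lemma 3.1 in \cite{GIL(1992)} and Step 1 of the proof of Theorem 3.4 in \cite{BS2021}): the penalty terms vanish, the maximizer $(t_\varepsilon,x_\varepsilon,y_\varepsilon,t'_\varepsilon,x'_\varepsilon,y'_\varepsilon)$ converges to a common maximizer of $u-v_\theta$ with positive $x$-coordinate, and for small $\varepsilon$ both $x_\varepsilon$ and $x'_\varepsilon$ are positive. At this point the subsolution property of $u$ gives either $u\le\mathcal{M}[u]$ there, or a PDE inequality $F(t_\varepsilon,z_\varepsilon,q_1,p^1,X^1)\le0$ with $(q_1,p^1,X^1)$ supplied by Ishii's lemma, while the supersolution property of $v_\theta$ gives both $v_\theta-\mathcal{M}[v_\theta]_*\ge\theta\kappa$ and $F(t'_\varepsilon,z'_\varepsilon,q_2,p^2,X^2)\ge\theta\kappa_0$. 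In the impulse case one uses that a transaction strictly decreases $x+y$ by at least $C_{\text{min}}$, so iterating it drives the state into $S_\emptyset$ after finitely many steps, where the impulse inequality cannot hold; combined with the strict margin $\theta\kappa$ and Proposition \ref{VM}(6)--(9), this contradicts maximality, in the spirit of \cite{OS2002} and \cite{BC(2019)}.

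In the PDE case I would subtract the two inequalities. The second-order contributions $\tfrac12\sigma^2\bigl(y_\varepsilon^2X^1_{22}-(y'_\varepsilon)^2X^2_{22}\bigr)$ are absorbed by the matrix inequality of Ishii's lemma, the first-order ones by the boundedness of the penalty gradients, and the reward terms $bD(0,\cdot)\lambda(\cdot)U(\cdot)$ by the local Lipschitz continuity of $\lambda$ and the closeness of $(t_\varepsilon,x_\varepsilon)$ to $(t'_\varepsilon,x'_\varepsilon)$. The term $-[D(0,t_\varepsilon)]^{1/(1-p)}f(p^1_1)+[D(0,t'_\varepsilon)]^{1/(1-p)}f(p^2_1)$ is the delicate one, but since $p^1_1-p^2_1\to0$ and, by the first step, $p^1_1$ stays in a compact subset of $(0,+\infty)$ on which $f$ is uniformly continuous, it too tends to $0$. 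The estimates then force $\theta\kappa_0\le o_\varepsilon(1)$, impossible for $\varepsilon$ small; this contradiction, holding for every small $\theta$, together with $v_\theta\to v$ as $\theta\downarrow0$, gives $u\le v$ on $\mathbb{R}_+\times(0,+\infty)\times\mathbb{R}_+$. For the global statement one reruns the scheme without excluding maximizers with $x'_\varepsilon=0$ on the $v_\theta$-side: hypothesis \eqref{v} is precisely what allows replacing such a point by a nearby interior one at the cost of $o(1)$ in value, so the supersolution inequality containing the $f$-term stays usable; equivalently, $u\le v$ on $\{x=0\}$ follows by approaching $(t,0,y)$ from within $\{x>0\}$ and invoking \eqref{v} on the $v$-side.

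The step I expect to be the main obstacle is the control near $\{x=0\}$ of the singular term $[D(0,t)]^{1/(1-p)}f(\varphi_x)$, which blows up as $\varphi_x\downarrow0$ and, moreover, is absent from the supersolution inequality on $\{x=0\}$. Securing the two-sided, $\varepsilon$-uniform bound on the first-order $x$-derivatives at the penalized maximizers is where the growth bound \eqref{uv}, the monotonicity of $v$ in $x$, and an extension of the technique of \cite{AS2017} (needed because the stochastic control is unbounded, so that no a priori uniform continuity is available) all have to be combined. This difficulty is compounded by the infinite horizon, which makes $t$ a genuine variable that must itself be doubled and tamed through the decay of $D(0,t)$, rather than a fixed terminal time as in the classical comparison arguments.
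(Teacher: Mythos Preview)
Your overall architecture---perturb $v$ to a strict supersolution $v_\theta=(1-\theta)v+\theta\Psi_q$ with $q\in(p,1)$, use concavity of $F$ in $p_1$ and subadditivity of $\mathcal{M}$ to transfer the strict margins, exploit the decay in $t$ and the gap $q>p$ to localize the supremum, then double variables and apply Ishii's lemma---is exactly the paper's. The impulse branch you sketch (iterating transactions into $S_\emptyset$) is a legitimate alternative to the paper's direct estimate $\Gamma\le\Gamma-\tfrac14\epsilon\tilde k$; either works.

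Where your proposal is genuinely incomplete is precisely the point you flag as the obstacle, and it is \emph{not} settled by the references you cite. You need $p'_1=n(x_n-\hat x_n)$ bounded below by some $\eta>0$ (an upper bound is unnecessary, since $f$ is bounded on $[\eta,\infty)$). The paper's argument is short but specific: freeze $(t_n,z_n,\hat t_n,\hat y_n)$ and view $h_n$ as a function of $\hat x$ alone. Because $v$ is non-decreasing in $x$, the term $-(1-\epsilon)v(\hat t_n,\cdot,\hat y_n)$ is non-increasing, so the maximizer $\hat x_n$ of $h_n(\cdot)$ lies to the \emph{left} of the maximizer $\tilde x_n$ of the remaining (smooth, strictly concave for large $n$) part $m_n(x)=-\epsilon\Phi(\hat t_n,x,\hat y_n)-\tfrac n2(x-x_n)^2$. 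The first-order condition $m_n'(\tilde x_n)=0$ reads $n(x_n-\tilde x_n)=\epsilon qC\,D(0,\hat t_n)(1+\tilde x_n+\hat y_n)^{q-1}$, which is uniformly bounded below on the compact set carrying the maximizers; hence $n(x_n-\hat x_n)\ge n(x_n-\tilde x_n)\ge\eta>0$. This is the ``estimation for the first order partial derivatives'' alluded to in the introduction, and it is where the hypothesis that $v$ is non-decreasing in $x$ actually enters.

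Your treatment of the boundary $\{x=0\}$ is also too loose. For the \emph{first} conclusion (comparison on $\{x>0\}$) the global maximizer of $u-v_\epsilon$ may still sit at $\bar x=0$, so you cannot simply ``assume the maximizers stay away from $\{x=0\}$'' by citing barrier devices. The paper first redefines $v$ on $\{x=0\}$ by the $\varliminf$ from $\{x>0\}$ (this can only increase $v$ there and does not disturb the supersolution property on relatively open subsets of $\{x>0\}$), then, in the case $\bar x=0$, picks a sequence $(t_n,x_n,y_n)$ with $x_n>0$ realizing that $\varliminf$ and uses the asymmetric penalty $\tfrac14\bigl(\tfrac{\hat x-x}{x_n}-1\bigr)^4$ on top of the usual quadratic one. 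This forces $\hat x^n>x^n\ge0$, so the supersolution inequality for $v_\epsilon$---the one containing the singular $f$ term---is available at $(\hat t^n,\hat z^n)$, and the same one-variable monotonicity trick as above gives $\varliminf_n p'_n>0$. Condition \eqref{v} then says the redefinition did not change $v$, which is exactly what upgrades the conclusion from $\{x>0\}$ to the whole domain; your last paragraph has the right intuition but not the mechanism.
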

\begin{proof}
	See Appendix \ref{proof of compare}.
\end{proof}

\begin{corollary}
	The value function is  continuous in $\mathbb{R}_+\times(0,+\infty)\times\mathbb{R}_+$. The viscosity solution of the HJBQVI which satisfies the conditions in Theorem  \ref{compare}  is unique.
\end{corollary}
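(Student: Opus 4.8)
The plan is to read the corollary off Theorems~\ref{vis char} and~\ref{compare}, after transporting everything to the equivalent equation \eqref{HJBQVI2} via the preceding lemma. By Theorem~\ref{vis char} and the definition of a viscosity solution, $V_*$ is a viscosity supersolution and $V^*$ a viscosity subsolution of \eqref{HJBQVI}; since $t\mapsto D(0,t)$ is continuous and strictly positive, $v:=D(0,t)V_*$ (which is LSC) is then a viscosity supersolution and $u:=D(0,t)V^*$ (which is USC) a viscosity subsolution of \eqref{HJBQVI2}.

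Next I would verify that $u,v$ satisfy every hypothesis of Theorem~\ref{compare}. Local Lipschitz continuity of $\lambda$ is assumed. Monotonicity of $v$ in $x$ is part (4) of Proposition~\ref{VM}, and is unaffected by multiplication by the $x$-independent positive factor $D(0,t)$. For the growth bound, note $0\le V\le C(x+y)^p\le C(1+x+y)^p$: the lower bound comes from the admissible policy with $c\equiv0$ and no transactions (under which $X,Y\ge0$ and the integrand is $b\lambda(s)U(X(s))\ge0$), and the upper bound is the estimate $V\le C(x+y)^p$ proved above. Since a locally bounded function squeezed between $0$ and a continuous function keeps both its semicontinuous envelopes in the same range, $0\le u,v\le KD(0,t)(1+x+y)^p$ with $K=C$. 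For the value at the origin, $(0,0)\in S_{\emptyset}$ forbids any transaction from $(0,0)$, which forces $c\equiv0$ and $X\equiv Y\equiv0$, so $V(t,0,0)=0$; combined with $0\le V\le C(x+y)^p\to0$ as $(x,y)\to(0,0)$ this gives $V^*(t,0,0)=V_*(t,0,0)=0$, i.e. $u(t,0,0)=v(t,0,0)=0$. Theorem~\ref{compare} then yields $u\le v$ on $\mathbb{R}_+\times(0,+\infty)\times\mathbb{R}_+$, that is $V^*\le V_*$ there; since always $V_*\le V\le V^*$, this forces $V=V_*=V^*$, so $V$ is continuous on $\mathbb{R}_+\times(0,+\infty)\times\mathbb{R}_+$.

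Uniqueness is then immediate: if $w_1,w_2$ are two viscosity solutions of \eqref{HJBQVI} satisfying the conditions of Theorem~\ref{compare}, the same transformation makes $D(0,t)w_1^*$ a subsolution and $D(0,t)(w_2)_*$ a supersolution of \eqref{HJBQVI2} (the bounds, the monotonicity in $x$, and the vanishing at $(0,0)$ again pass to the envelopes as above), so the comparison principle gives $w_1\le w_1^*\le (w_2)_*\le w_2$ on $\{x>0\}$, and by symmetry $w_2\le w_1$; hence $w_1=w_2$ there, and everywhere if in addition the boundary condition \eqref{v} holds. I do not expect a genuine obstacle here, since the substance is entirely contained in Theorems~\ref{vis char} and~\ref{compare}; the only points needing care are the bookkeeping ones — that the growth bound, the monotonicity in $x$, and the vanishing at $(0,0)$ survive passage to the upper and lower semicontinuous envelopes and the multiplication by $D(0,t)$, and that, absent the extra boundary condition \eqref{v}, the conclusion is only available on $\{x>0\}$, which is precisely why the corollary is stated on $\mathbb{R}_+\times(0,+\infty)\times\mathbb{R}_+$.
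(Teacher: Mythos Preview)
Your proposal is correct and is precisely the standard argument the paper has in mind; the paper itself states the corollary without proof, treating it as immediate from Theorems~\ref{vis char} and~\ref{compare} together with the transformation lemma, and your write-up faithfully fills in those details (verification of the growth bound, the monotonicity in $x$ via Proposition~\ref{VM}(4), and the vanishing at the origin, all of which pass to the semicontinuous envelopes and survive multiplication by $D(0,t)$).
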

\vskip 10pt
\setcounter{equation}{0}
\section{{ {\bf Numerical Analysis}}}\label{numerical}
\vskip 5pt
In this section, we first establish the  parameter settings according to practice and reasonable mechanism design. Then, we exhibit the behaviors of  the optimal transaction regions and consumption rate. The next, we study the impacts of the bequest motive, risk averse attitude, force of mortality rate and the transaction cost on the optimal control policies. The last, we study the utility improvement brought by participation in the tontine account.
\begin{figure}[]
	\centering
	\begin{subfigure}[t]{0.4\textwidth}
		\centering
		\includegraphics[width=1\textwidth]{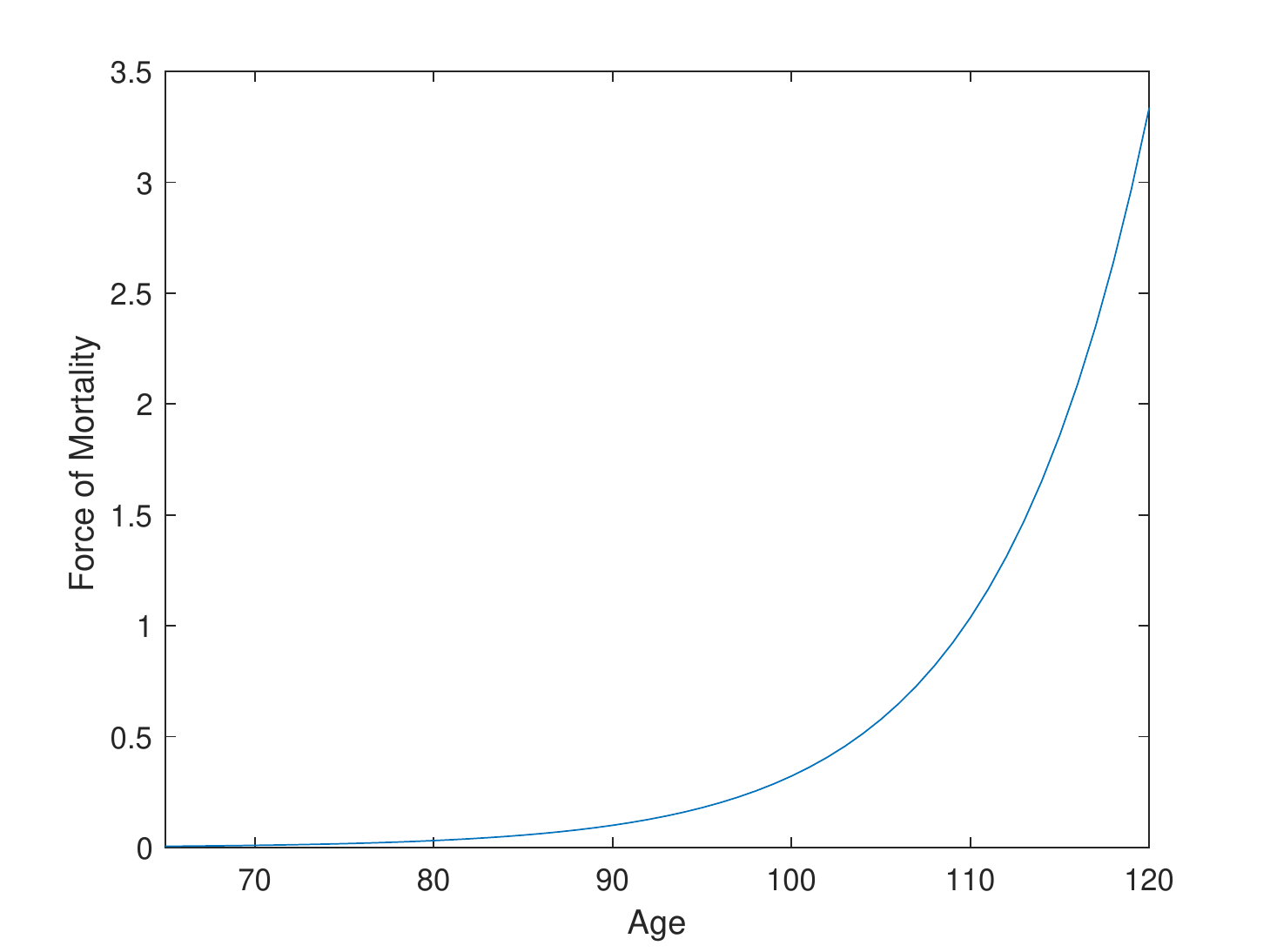}
	\end{subfigure}
	\begin{subfigure}[t]{0.4\textwidth}
		\centering
		\includegraphics[width=1\textwidth]{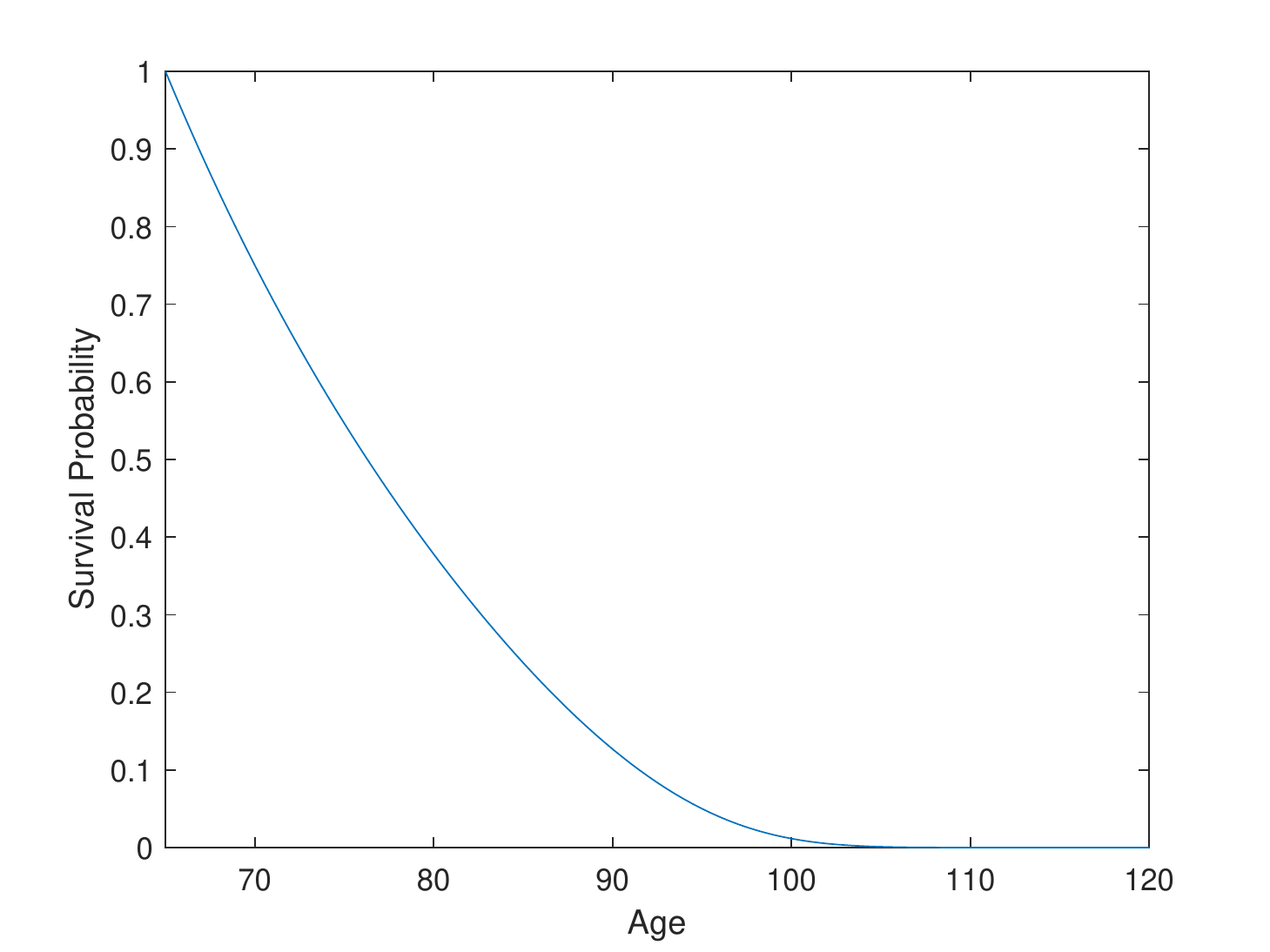}
	\end{subfigure}
	\caption{The force of mortality and survival  probability of the retiree}
	\label{mortality and survival}
\end{figure}

First, we explore Makeham's law to depict the force of mortality, that is,
\begin{equation*}
	\lambda(t)= A+B C^{65+t},
\end{equation*}
where $A=2.2\times10^{-4}$, $B=2.7\times 10^{-6}$ and $C=1.124$. The evolutions of the force of mortality rate and the survival probability of the retiree are shown in Fig. \ref{mortality and survival}. We can observe that the mortality rate rises rapidly, and the survival probability approaches $0$ around the age of $100$. This will lead to the gambling behaviors of the elderly retirees.
For the market parameters, we choose $r=\rho=0.05$, $\mu=0.085$ and $\sigma=0.25$. The magnitude of bequest motive and the relative risk aversion parameter are $b=3$ and $1-p=0.7$, respectively.   The above parameter settings are consistent with the ones in \cite{BD2019}. Because of the special longevity credit distribution mechanism of the tontine, the moral hazard problem is less serious than that in the life annuity. Accordingly, we suppose that the transaction cost is lower. The fixed and proportional cost parameters are $C_{\textup{min}}=0.5$ and $\xi=0.05$, respectively.

Using \eqref{uv}, we  introduce the artificial boundary conditions $V(T,\cdot,\cdot)=0$ into \eqref{HJBQVI2}. We choose $T=55$ because $D(0,55)= 2.6516\times 10^{-14}
$ is really small. The numerical results are then obtained by solving \eqref{HJBQVI2} and using a penalty method detailed in \cite{AFP2016}. Inspired by \cite{BS2021}, we perform computations on a triangular grid ($x+y\leq 2000$),  and the transaction regions and the consumption rate are reported on a square grid ($0\leq x,y\leq 900$). Moreover, the maximal consumption rate is assumed to be $1000$. For all  the  transaction regions in the later subsections, at time $t$, the square  grid is divided into two parts, the blue part and the other part. The blue part is the transaction region, which can be denoted by $S_1=\{(x,y)\in [0,900]^2|V(t,x,y)=\mathcal{M}[V](t,x,y)\}$, the other region is the No Transaction region, which can be denoted by $S_2= [0,900]^2\setminus S_1=\{(x,y) \in[0,900]^2|V(t,x,y)>\mathcal{M}[V](t,x,y)\}$.

\subsection{Behaviors of the Optimal Transaction Regions and Consumption Rate}
In Fig. \ref{base_region}, we study the optimal transaction regions for the $65$, $80$, $95$ and $110$ years old retiree. The results exhibit the V-shaped domain consists of three regions: Buy tontine region (the lower blue region), Sell tontine region (the upper blue region) and No Transaction region (the other region). Once the retiree's wealth level hits the Buy (Sell) region, a transaction from the bequest (tontine) account to the tontine (bequest) account occurs immediately and the new wealth level reaches the lower (upper) red line. The exact volume of the transaction is  the maximal point $\Delta^*$ of $\mathcal{M}$ in \eqref{operator}. No transaction is generated in the No Transaction region.
\begin{figure}[]
	\centering
	\begin{subfigure}[t]{0.35\textwidth}
		\centering
		\includegraphics[width=1\textwidth]{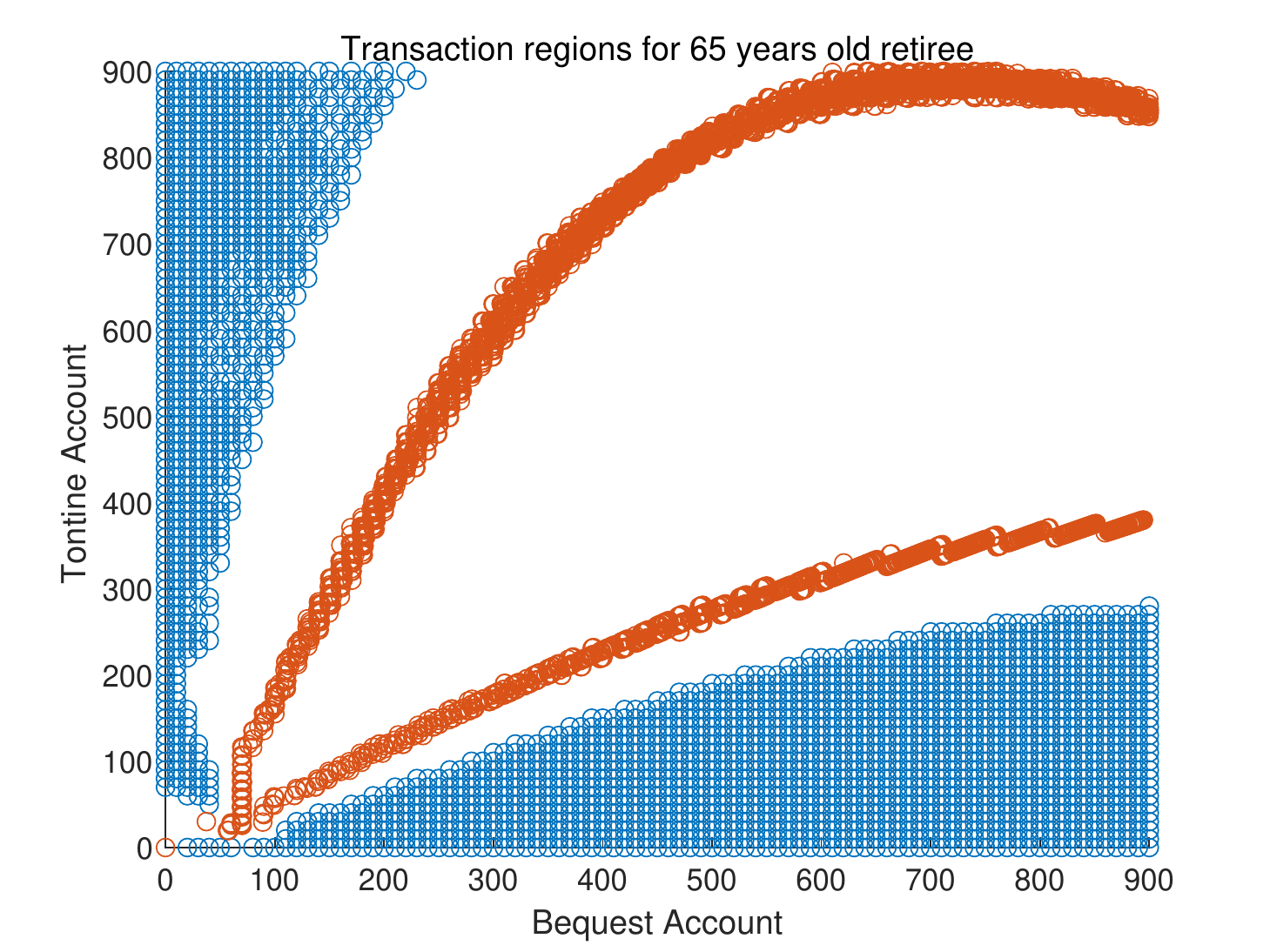}
	\end{subfigure}
	\begin{subfigure}[t]{0.35\textwidth}
		\centering
		\includegraphics[width=1\textwidth]{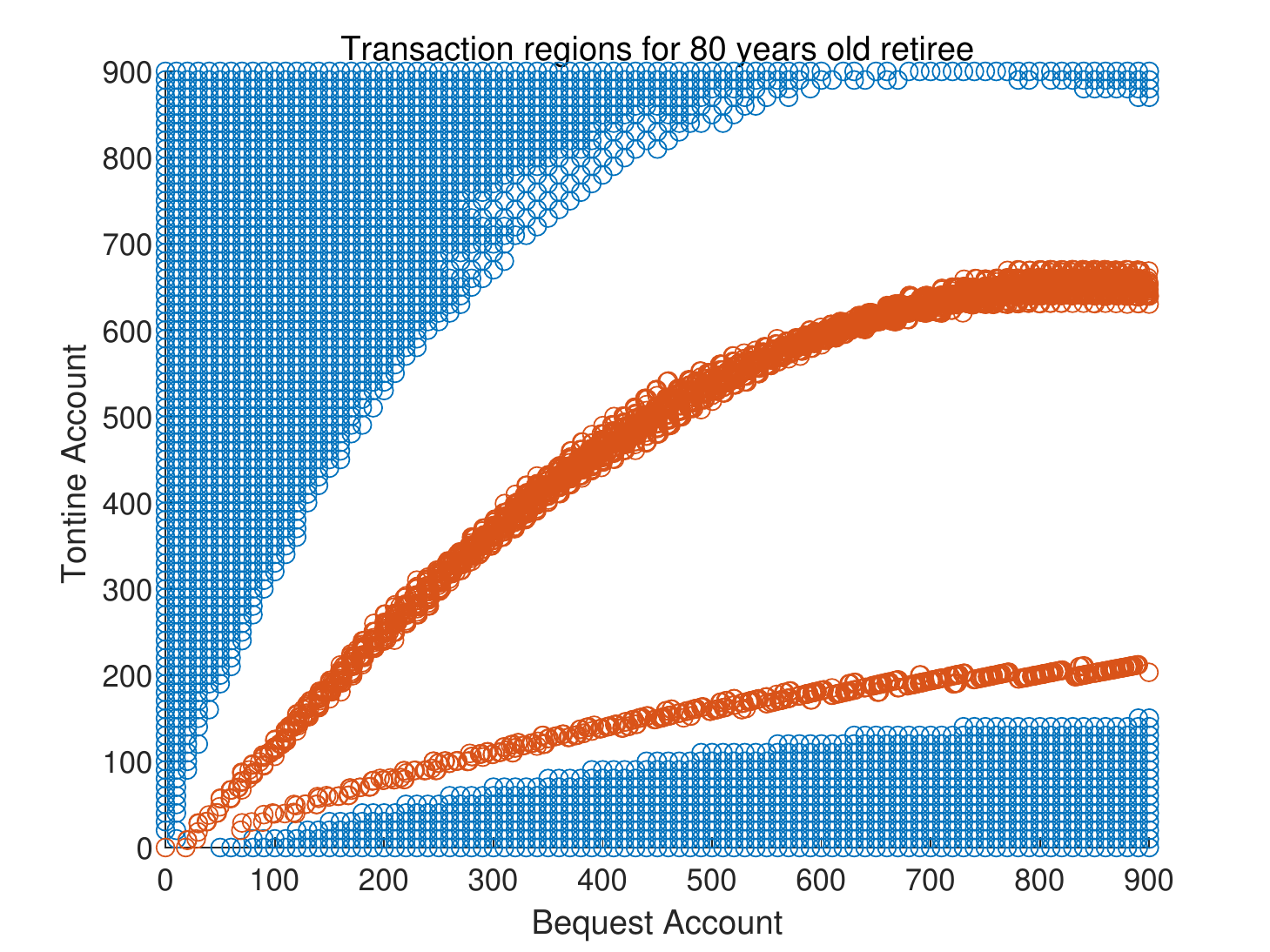}
	\end{subfigure}
	\begin{subfigure}[t]{0.35\textwidth}
		\centering
		\includegraphics[width=1\textwidth]{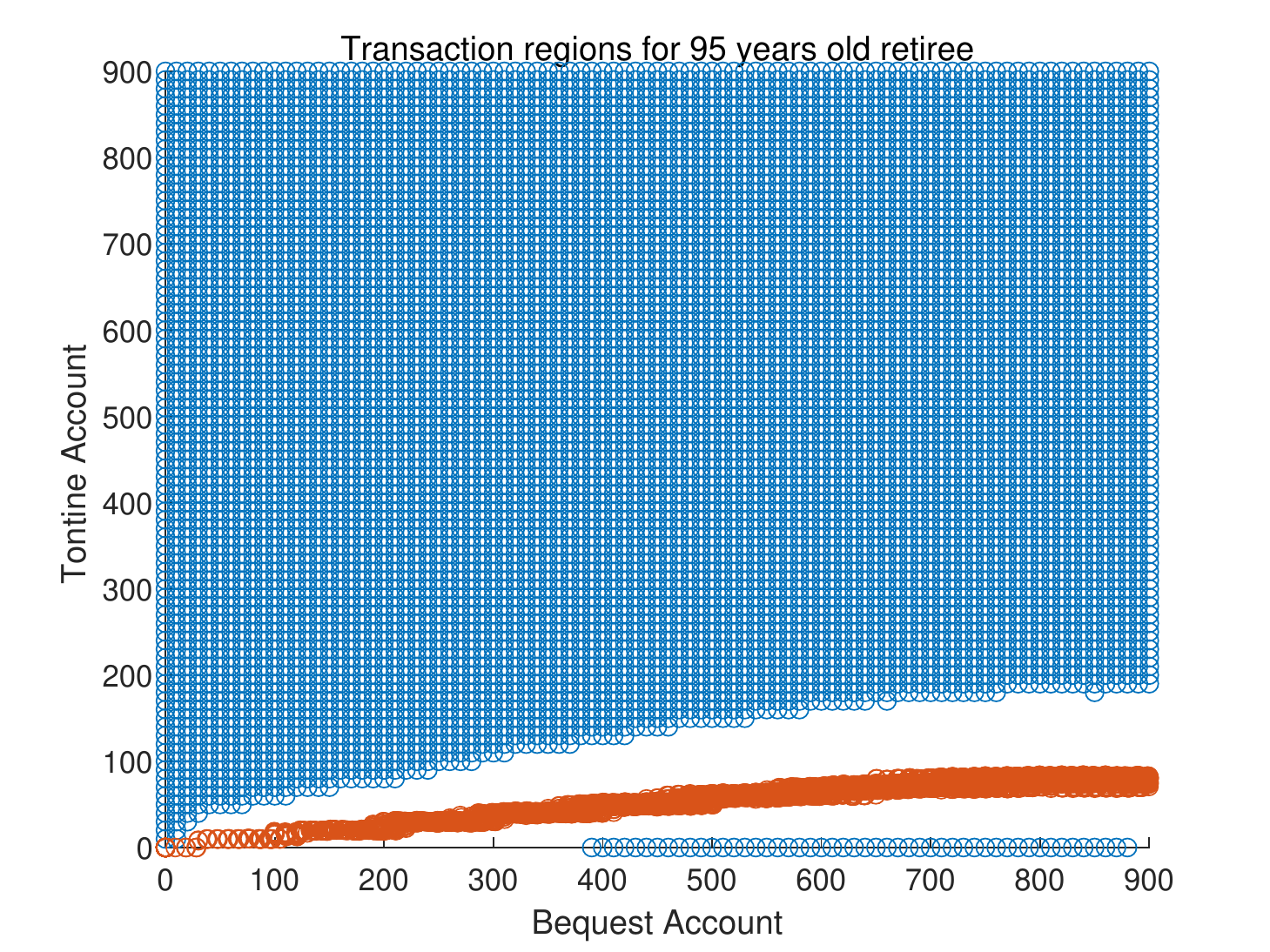}
	\end{subfigure}
	\begin{subfigure}[t]{0.35\textwidth}
		\centering
		\includegraphics[width=1\textwidth]{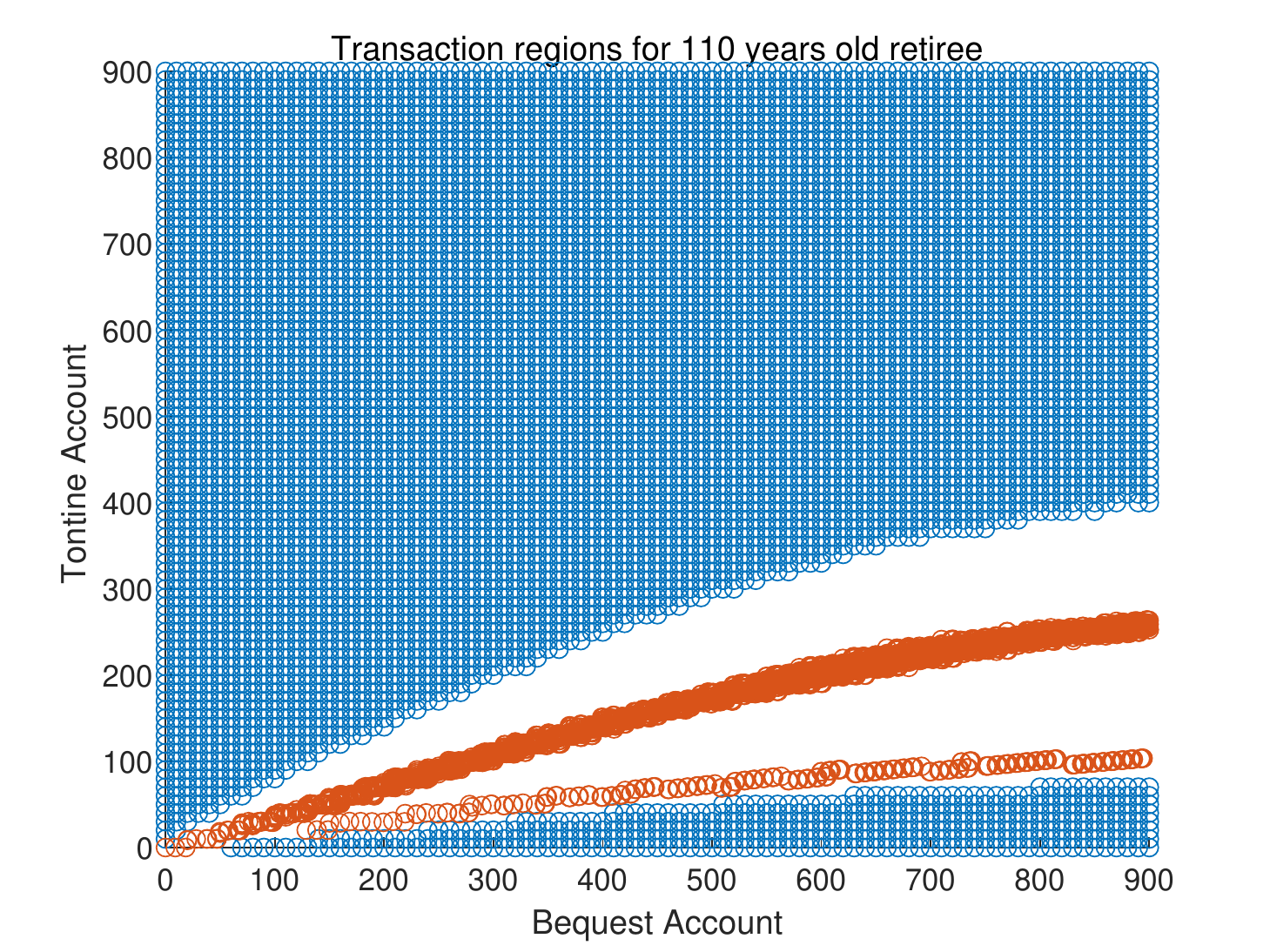}
	\end{subfigure}
	\caption{The optimal transaction regions}
	\label{base_region}
\end{figure}
\begin{figure}[]
	\centering
	\begin{subfigure}[t]{0.35\textwidth}
		\centering
		\includegraphics[width=1\textwidth]{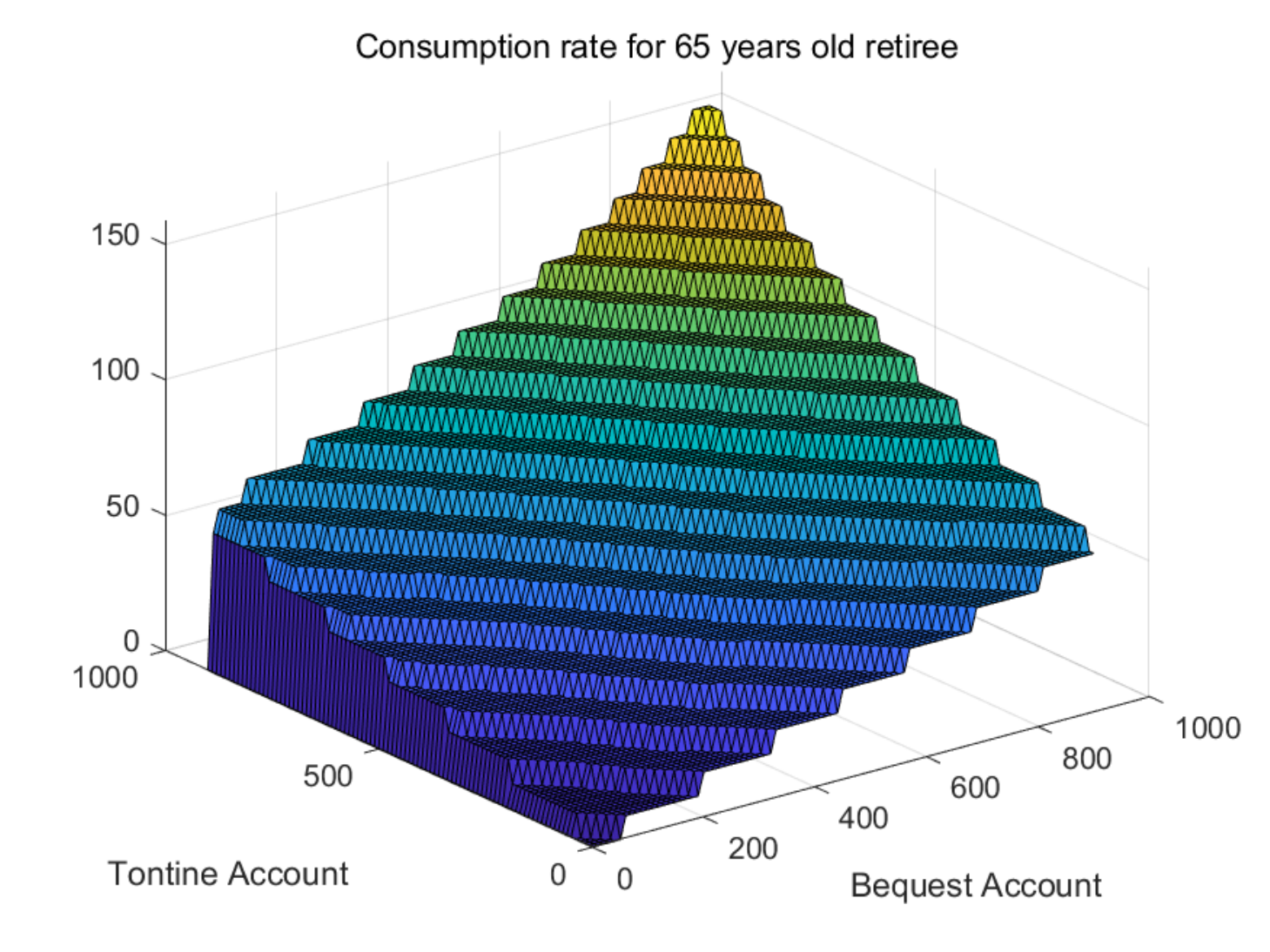}
	\end{subfigure}
	\begin{subfigure}[t]{0.35\textwidth}
		\centering
		\includegraphics[width=1\textwidth]{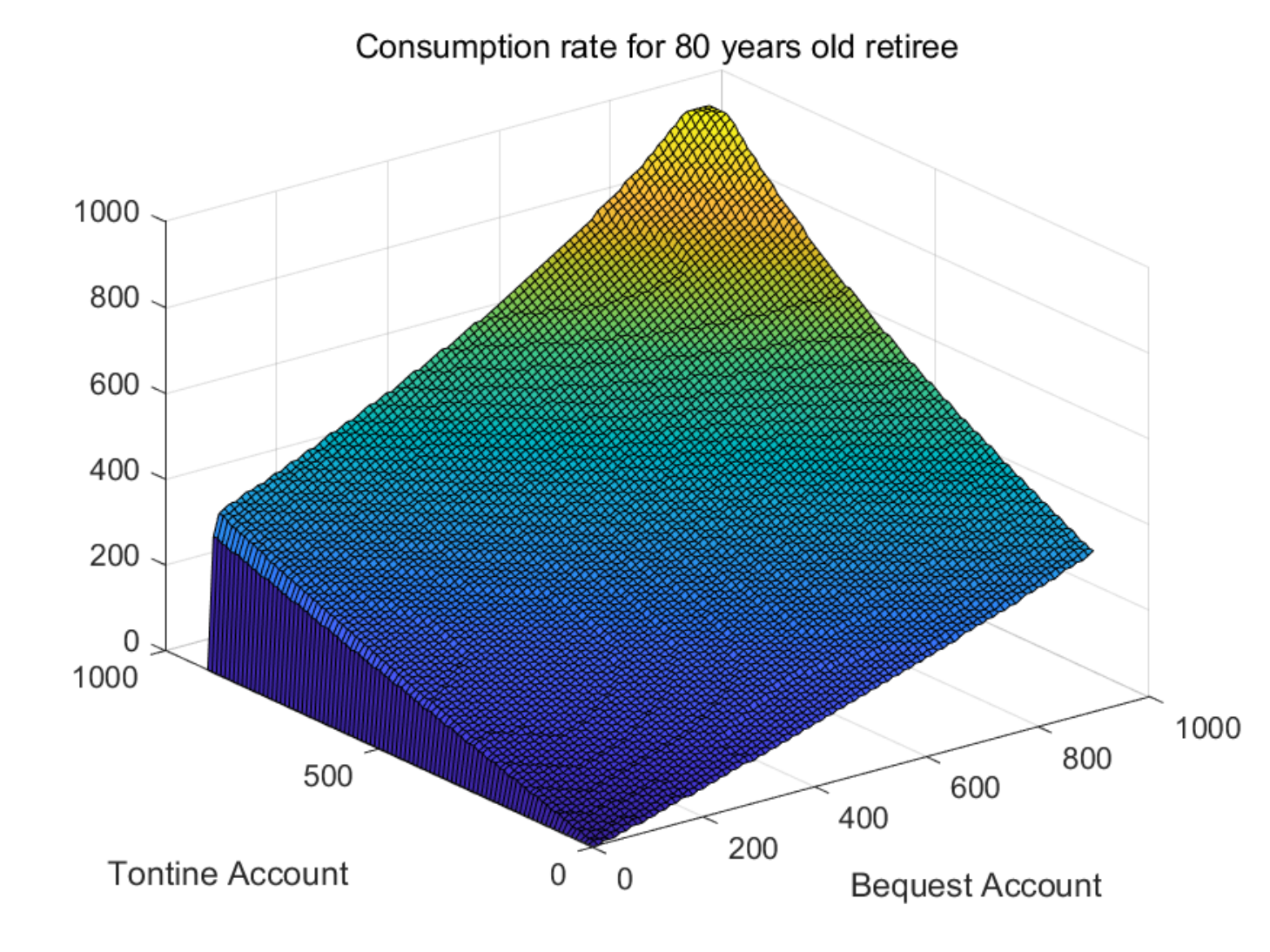}
	\end{subfigure}
	\begin{subfigure}[t]{0.35\textwidth}
		\centering
		\includegraphics[width=1\textwidth]{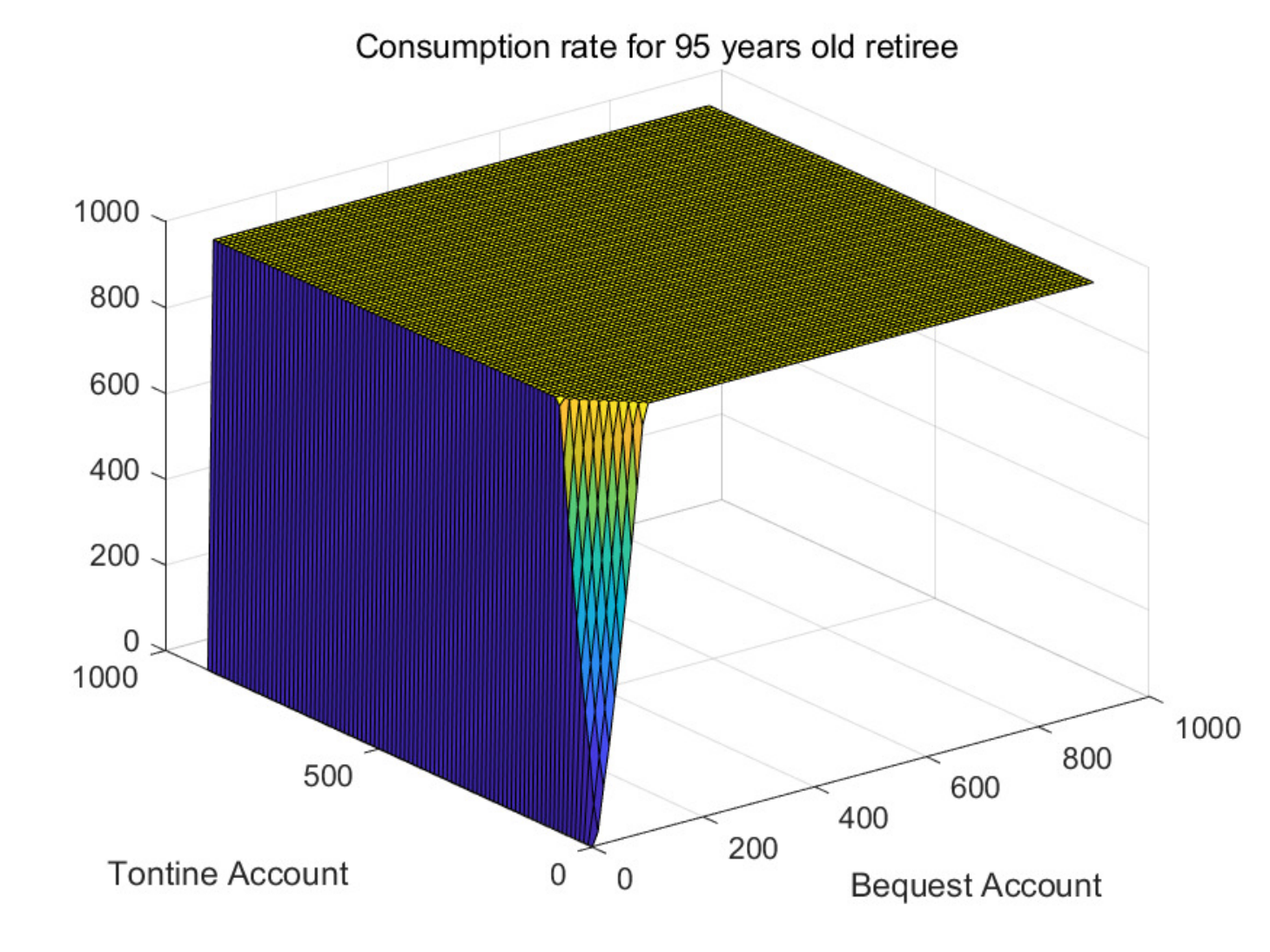}
	\end{subfigure}
	\begin{subfigure}[t]{0.35\textwidth}
		\centering
		\includegraphics[width=1\textwidth]{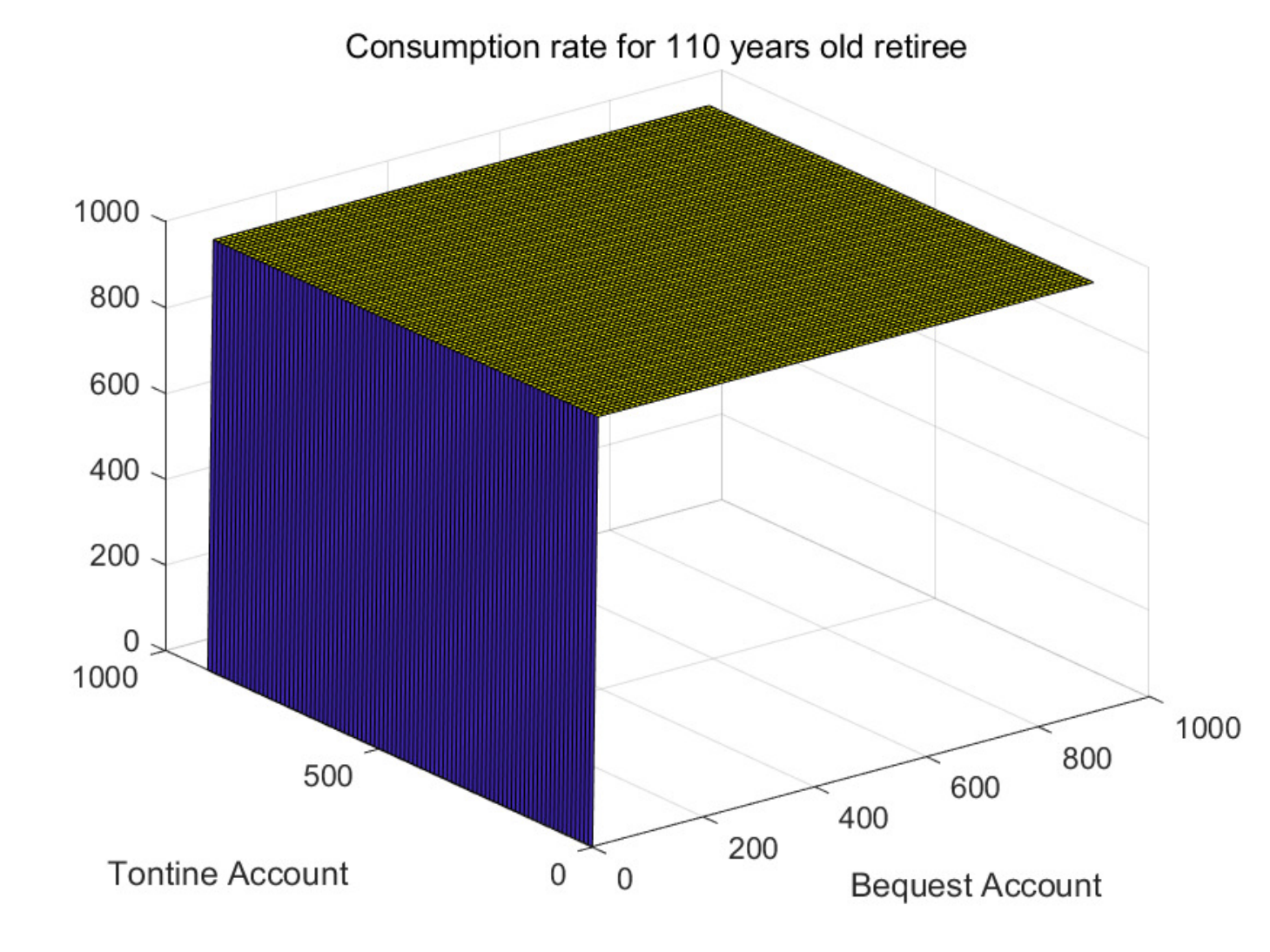}
	\end{subfigure}
	\caption{The optimal consumption rate}
	\label{base_c}
\end{figure}
Interestingly, the evolutions of the  transaction regions are divided into two stages. In the former stage (65-95 years old cases), the Sell region of tontine enlarges and the Buy region shrinks with respect to the retiree's age.  Thus,
the retiree prefers to decrease the wealth in the tontine account when she/he grows older. Because the force of mortality rate increases with age, the retiree can receive increasing longevity credit from the tontine account. Thus, reducing the tontine wealth gradually is an optimal choice for smoothing the wealth and achieving stability. In the latter stage (110 years old case), the Sell region of tontine shrinks and the Buy region enlarges in some extent. In this stage, because the force of mortality rate becomes extremely large, the longevity credit is so attractive. Such that, the retiree prefers to allocate more wealth in the tontine account to gamble for the huge longevity credit. The heterogeneous results in the two stages confirm the rationality requirements aforementioned.

In Fig. \ref{base_c}, we study the optimal consumption rate for the  65, 80, 95 and 110 years old retiree. Particularly, the consumption rate is $0$ when the wealth in the bequest account is $0$. Basically, the consumption rate is larger when the total wealth is greater, which is consistent with Merton's model of portfolio selection (cf. \cite{M1969}). Moreover, the retiree consumes at a higher rate when she/he grows older. Due to the decreasing survival probability, the elderly retiree prefers to obtain higher utility via higher consumption rate.

\subsection{The Impacts of the magnitude of Bequest Motive on Transaction Regions}
In the literature, the magnitude of bequest motive  plays an important role in determining the tontine allocation policies. In Fig. \ref{figure_b}, we study the impacts of bequest motive on the transaction regions for the 65 and 110 years old retiree. At age 65, as bequest motive increases, the Sell region of tontine enlarges and the Buy region of tontine shrinks. The retiree prefers to allocate more wealth in the bequest account when she/he has higher bequest motive. Interestingly, at age 110, the transaction regions are almost the same for different bequest motive magnitudes.  When the bequest motive is large, the retiree wants to leave more bequest. However, the retiree will also allocate in the tontine account to gamble for the abundant longevity credit at the extremely old age.  When the bequest motive is small or even approaches $0$, the retiree will also allocate in the bequest account, because that the consumption withdrawn from this account dominates the bequest in the utility function. Overall, we observe a relatively stable transaction policy for the extremely old retirees regardless of their bequest motive. This is a new result obtained by relaxing the restrictions in \cite{D2021}.

%
%
%
%
%

\begin{figure}[]
	\centering
	\begin{subfigure}[t]{0.4\textwidth}
		\centering
		\includegraphics[width=1\textwidth]{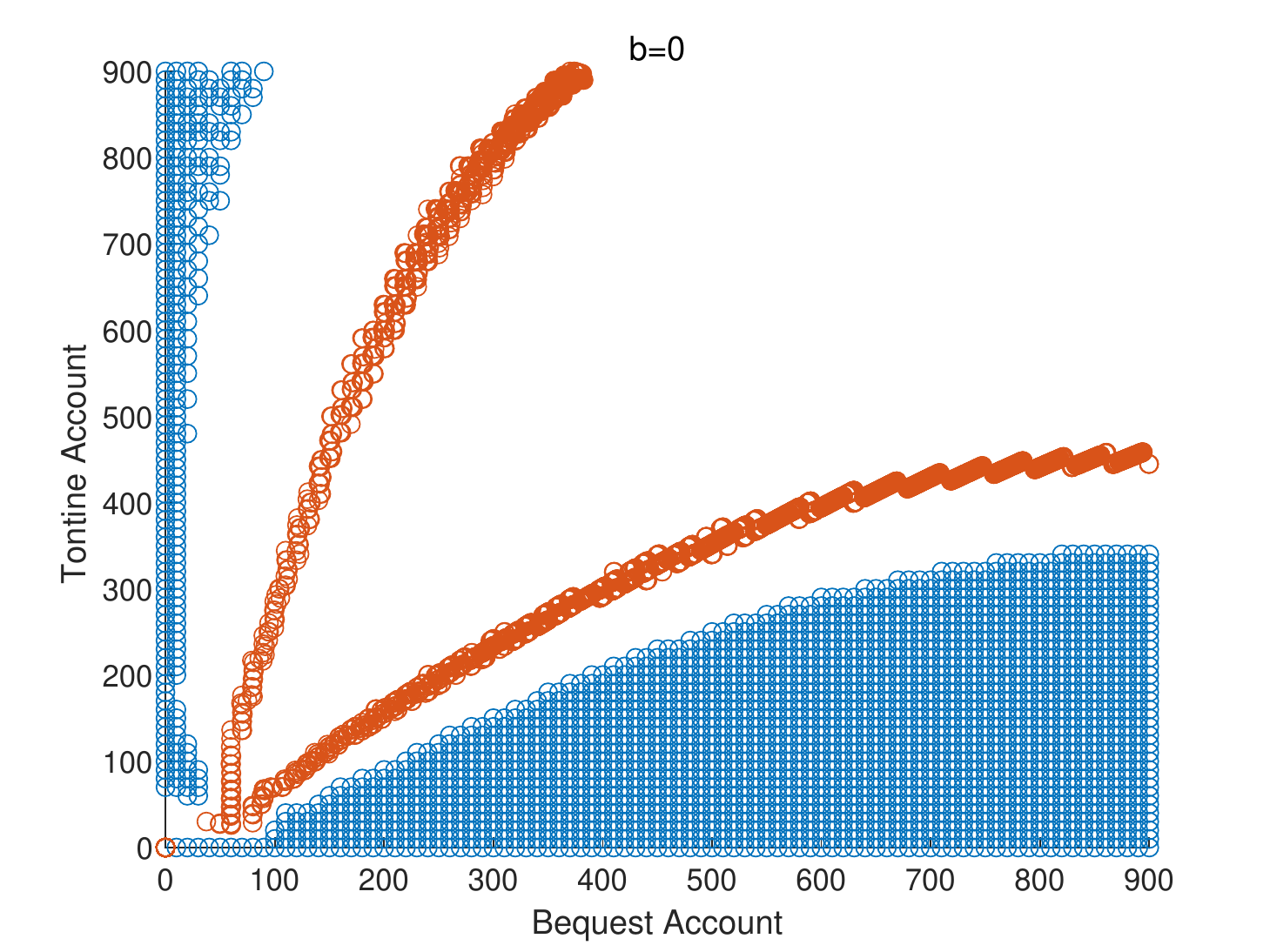}
	\end{subfigure}
	\begin{subfigure}[t]{0.4\textwidth}
		\centering
		\includegraphics[width=1\textwidth]{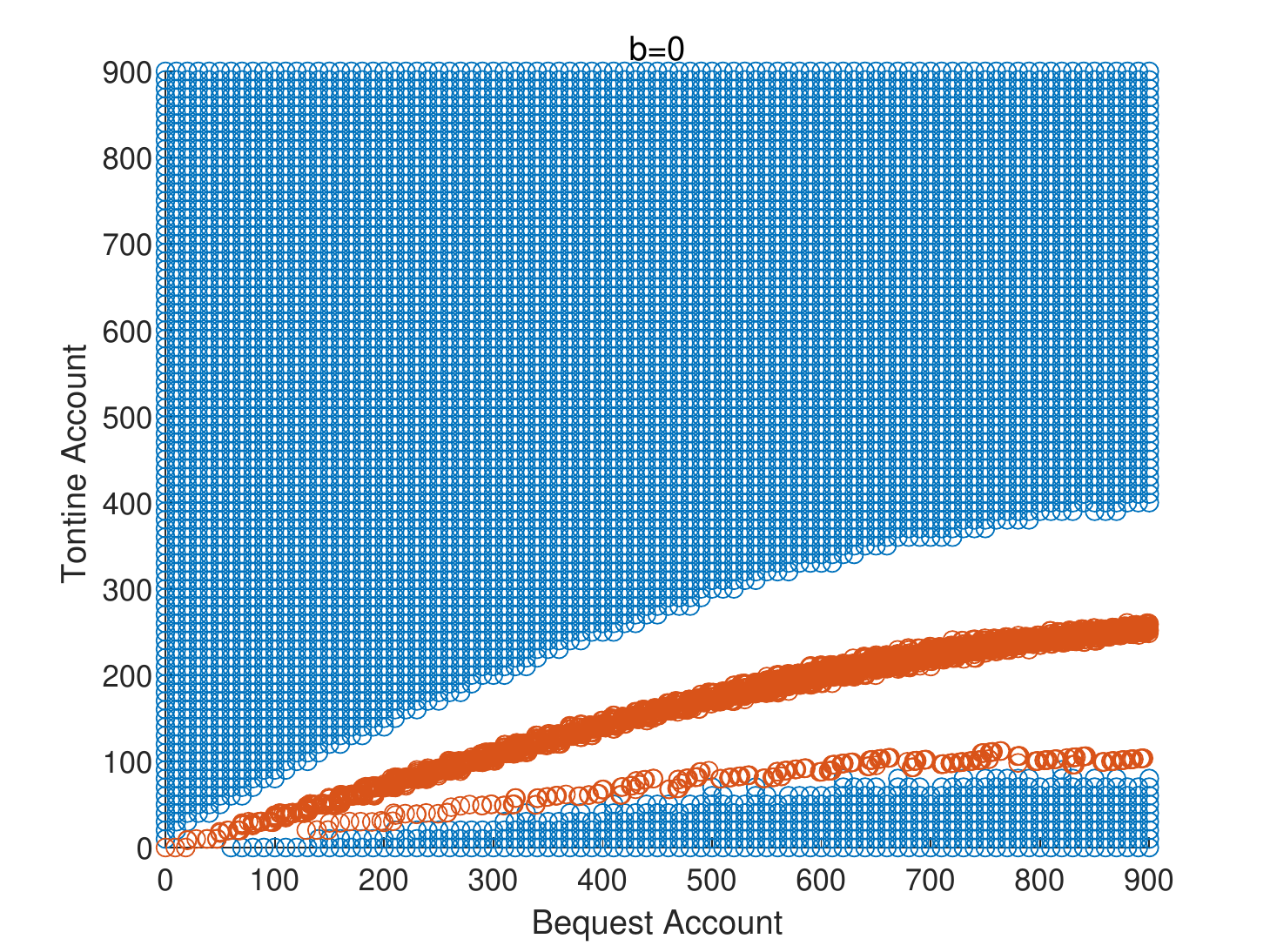}
	\end{subfigure}
	\begin{subfigure}[t]{0.4\textwidth}
		\centering
		\includegraphics[width=1\textwidth]{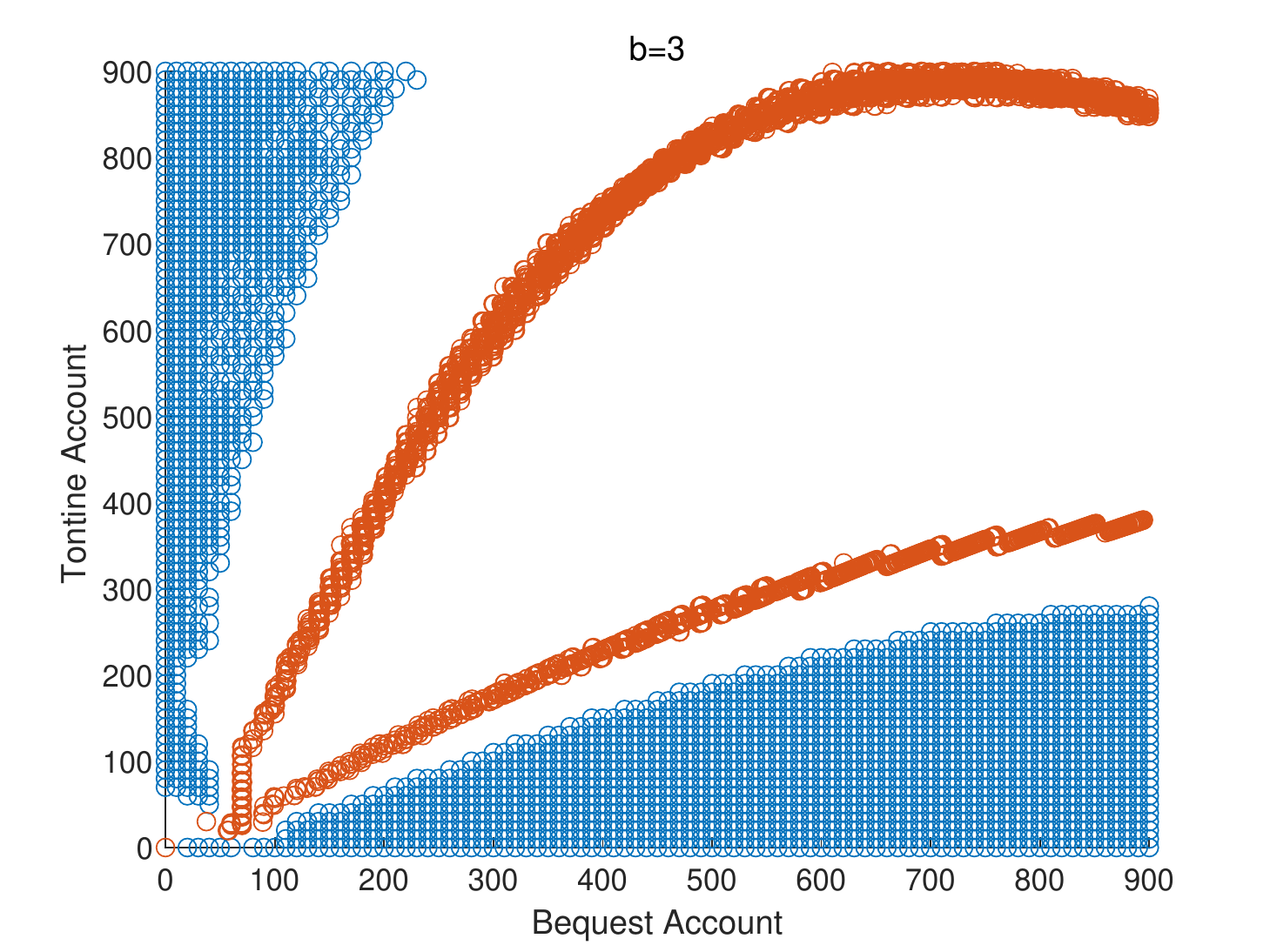}
	\end{subfigure}
	\begin{subfigure}[t]{0.4\textwidth}
		\centering
		\includegraphics[width=1\textwidth]{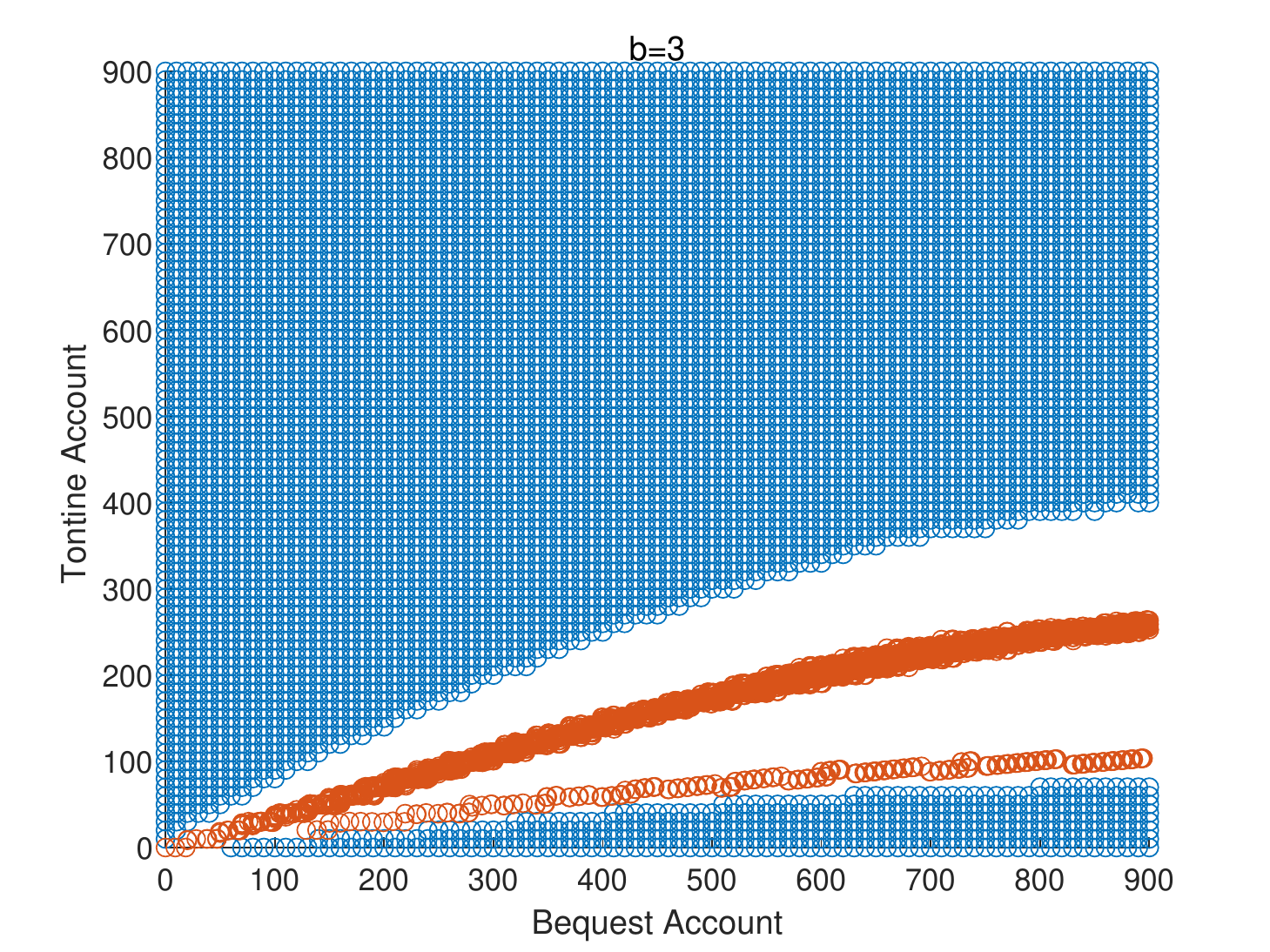}
	\end{subfigure}
	
	\begin{subfigure}[t]{0.4\textwidth}
		\centering
		\includegraphics[width=1\textwidth]{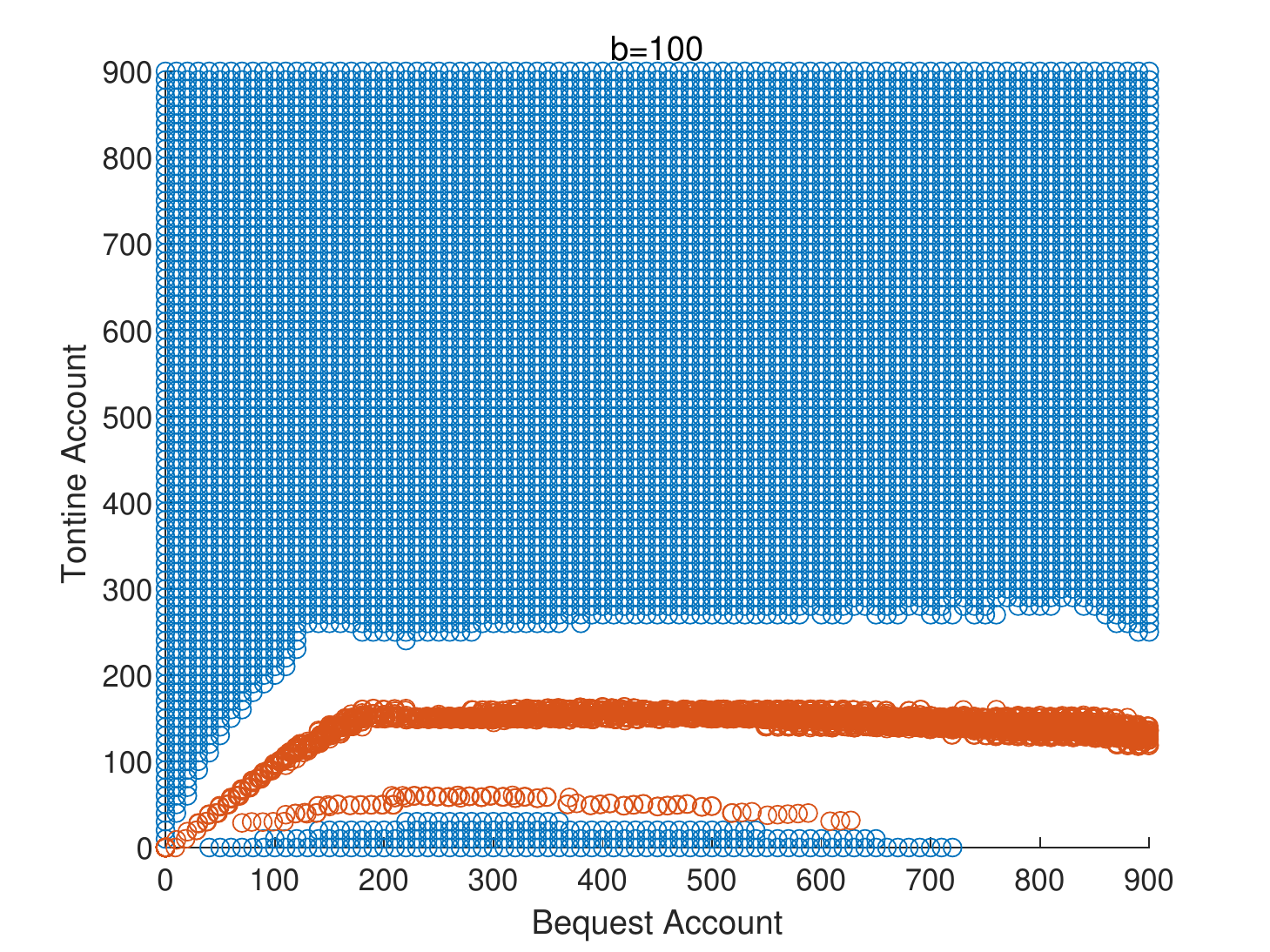}
		\caption*{ 65 years old}
	\end{subfigure}
	\begin{subfigure}[t]{0.4\textwidth}
		\centering
		\includegraphics[width=1\textwidth]{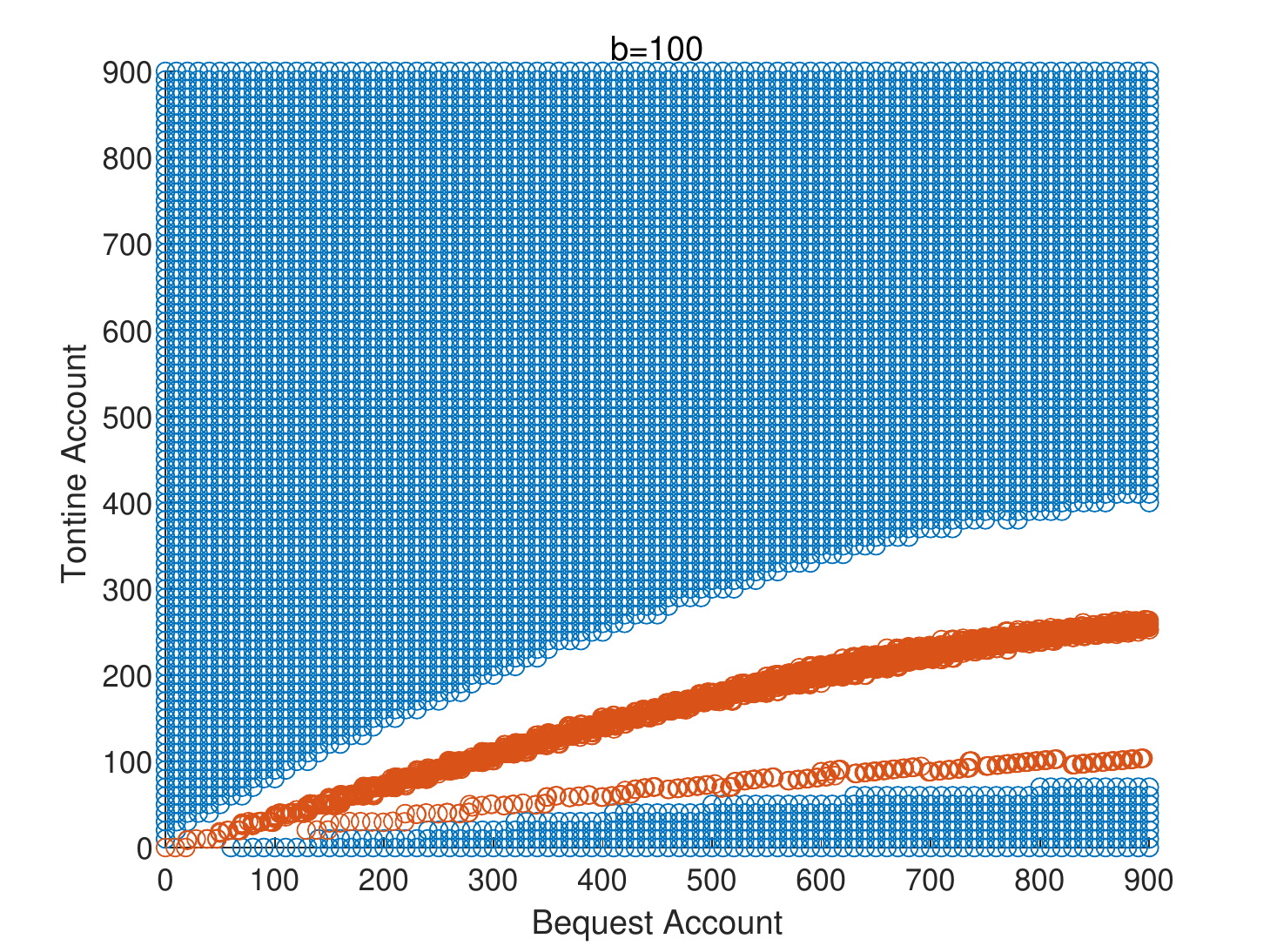}
		\caption*{ 110 years old}
	\end{subfigure}
	
	\caption{The impacts of the magnitude of bequest motive on transaction regions }  \label{figure_b}
\end{figure}

\subsection{The Impacts of the Relative Risk Aversion Parameter and Force of Mortality on Transaction Regions}
In this subsection, we further study the factors that have  important impacts on the optimal transaction policies, especially for the extremely old retirees.
%
%
%

In Fig. \ref{figure_p}, we study the impacts of relative risk aversion parameter on the transaction regions. When the relative risk aversion parameter $1-p$ decreases from $0.9$ to $0.3$,  the retiree becomes less risk averse. Accordingly, the Sell tontine region shrinks. At age 65, less risk averse retiree prefers to allocate more wealth in the tontine account to obtain higher utility via bearing higher return volatility. Similar to  the results in Fig. \ref{base_region}, the proportion allocated in the tontine account gradually shrinks when the retiree grows older. Naturally, the retiree with less risk averse attitude keeps higher tontine proportion. Interestingly, at the extremely old age of $110$, risk averse attitude  becomes the decisive factor in determining  the optimal tontine allocation policies. The less risk averse retiree allocates significantly higher proportion in the tontine account. In this circumstance, she/he is more willing to gamble for the longevity credit in the tontine account. These results also confirm the rationalities aforementioned.

%
\begin{figure}[]
	\centering
	\begin{subfigure}[t]{0.22\textwidth}
		\centering
		\includegraphics[width=1\textwidth]{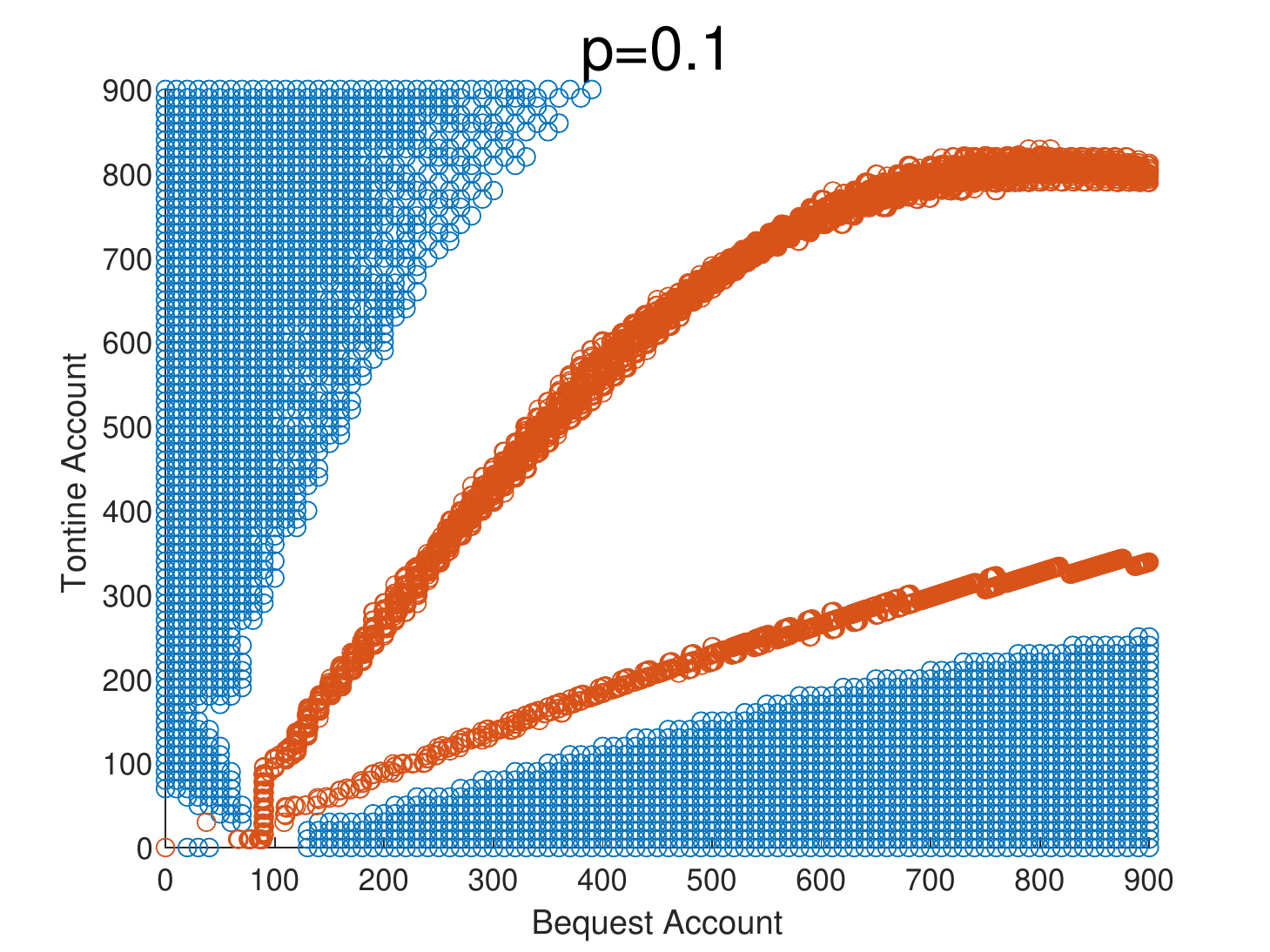}
	\end{subfigure}
	\begin{subfigure}[t]{0.22\textwidth}
		\centering
		\includegraphics[width=1\textwidth]{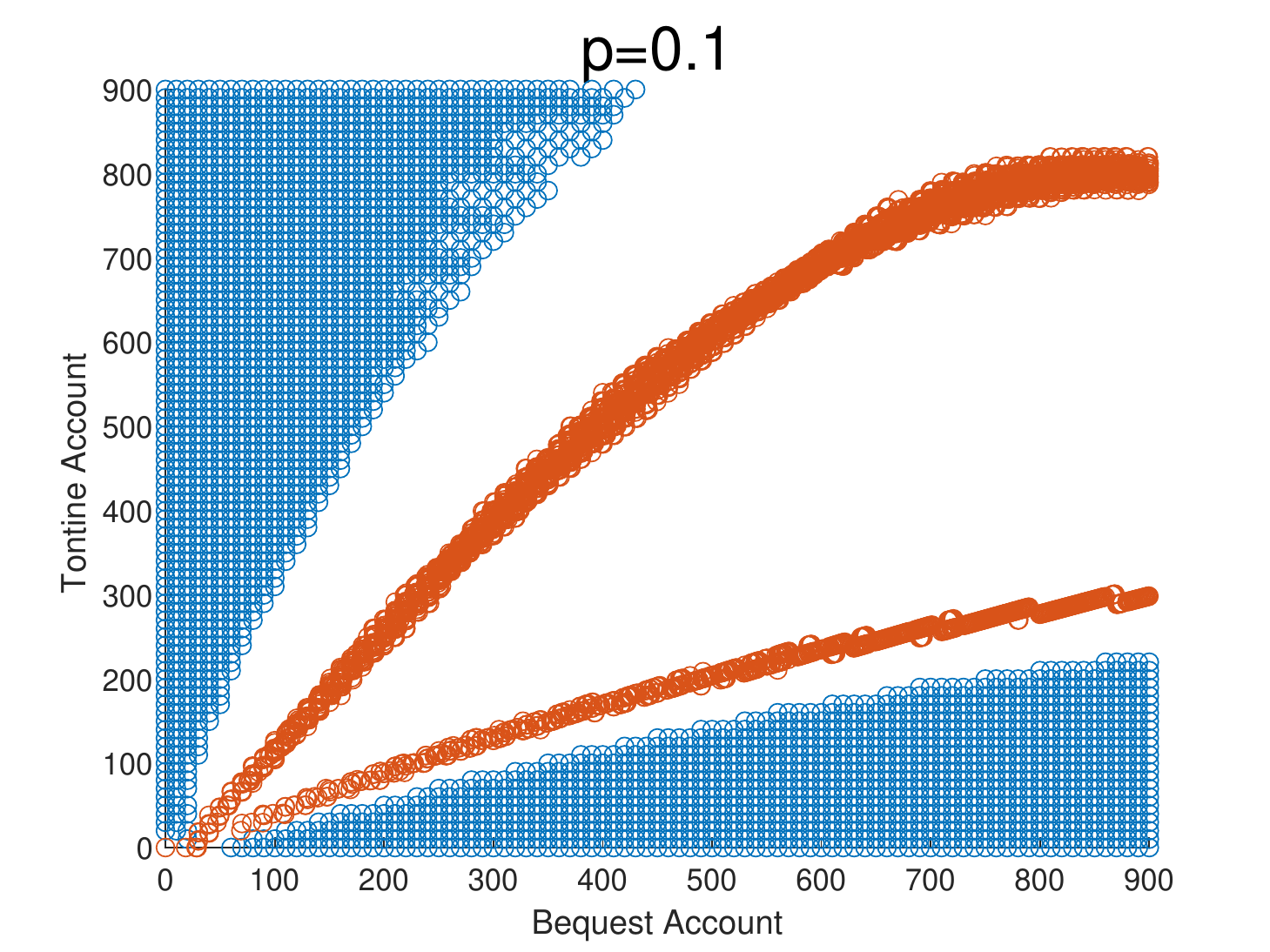}
	\end{subfigure}
	\begin{subfigure}[t]{0.22\textwidth}
		\centering
		\includegraphics[width=1\textwidth]{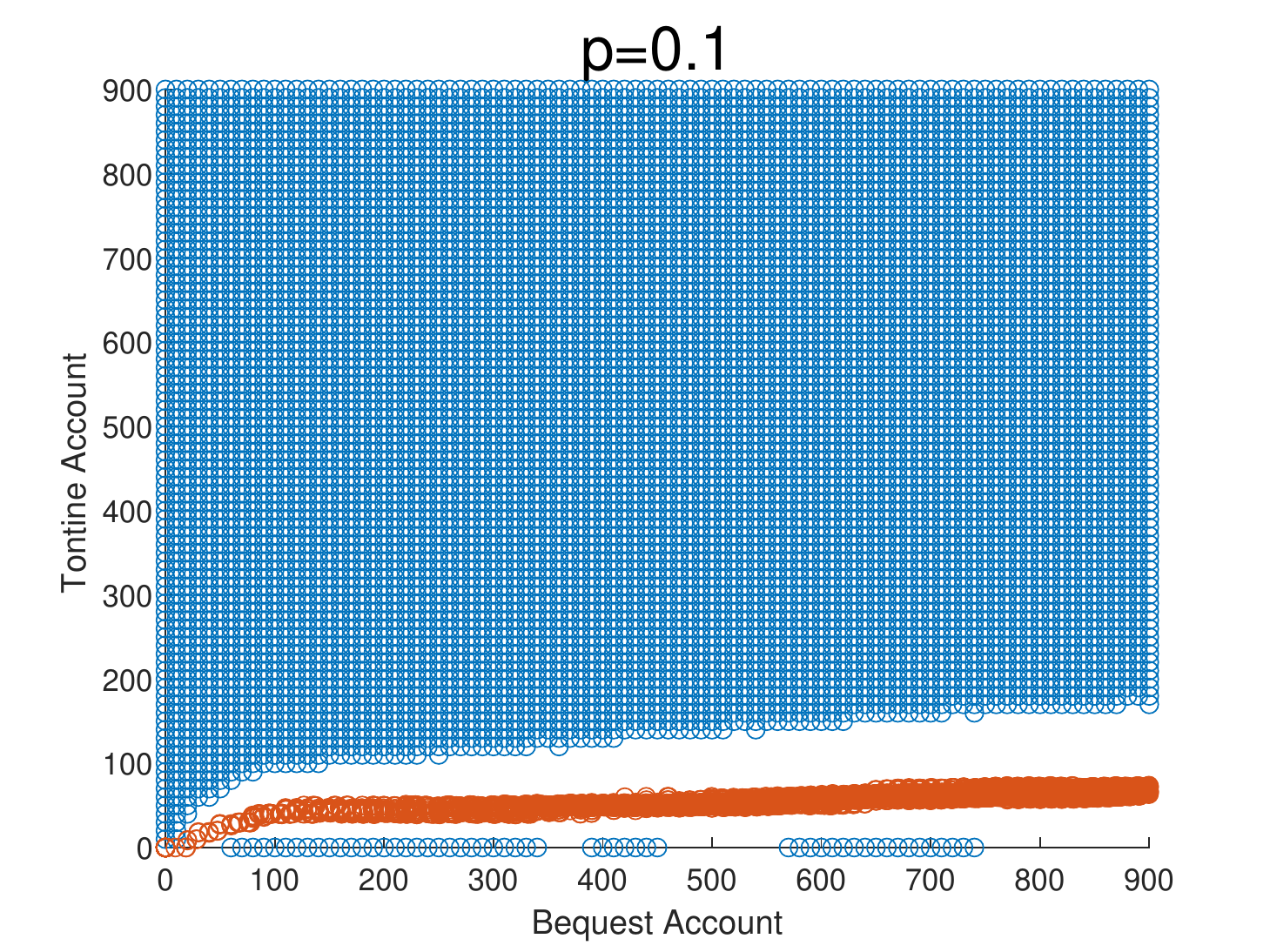}
	\end{subfigure}
	\begin{subfigure}[t]{0.22\textwidth}
		\centering
		\includegraphics[width=1\textwidth]{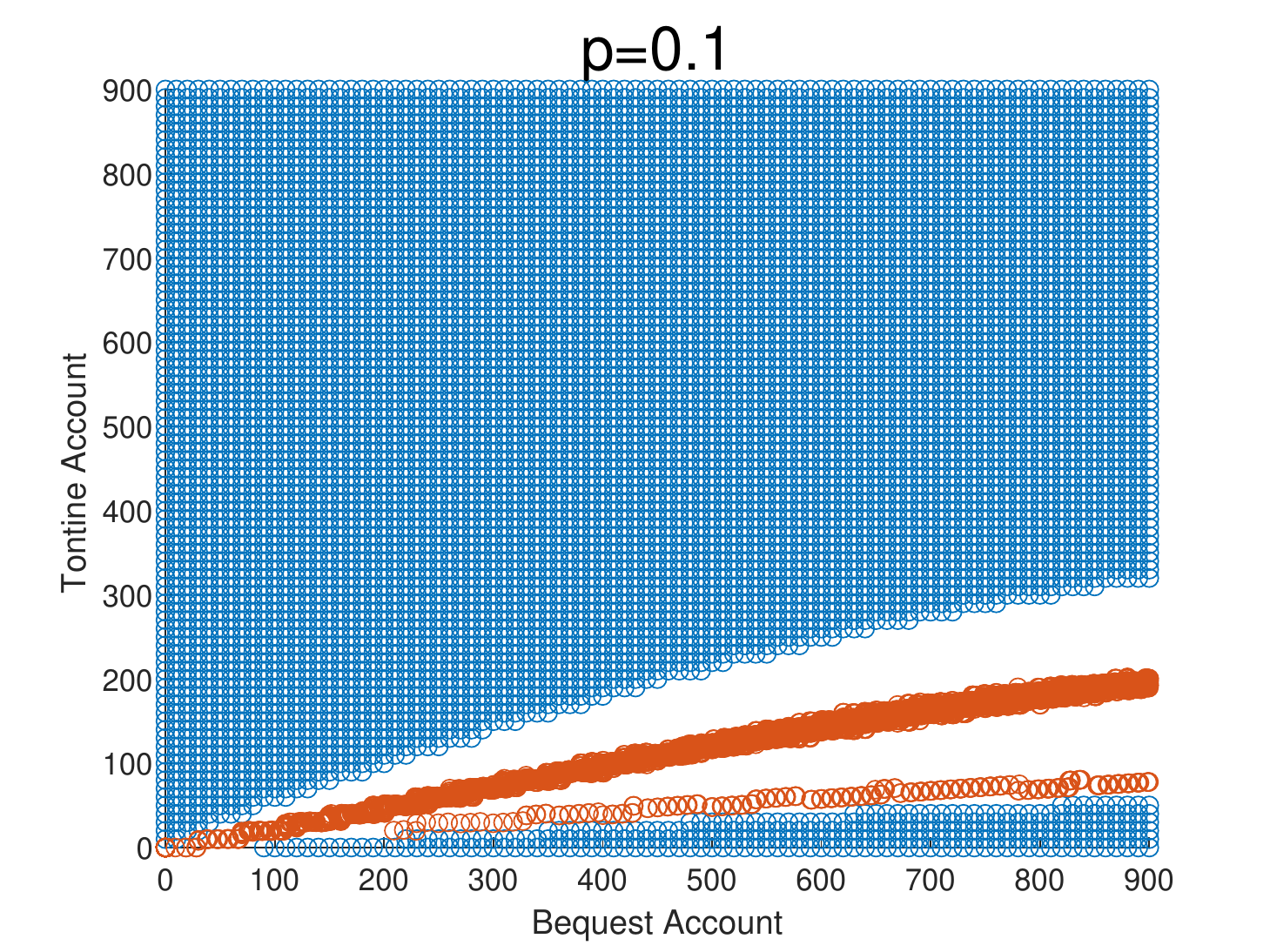}
	\end{subfigure}
	\begin{subfigure}[t]{0.22\textwidth}
		\centering
		\includegraphics[width=1\textwidth]{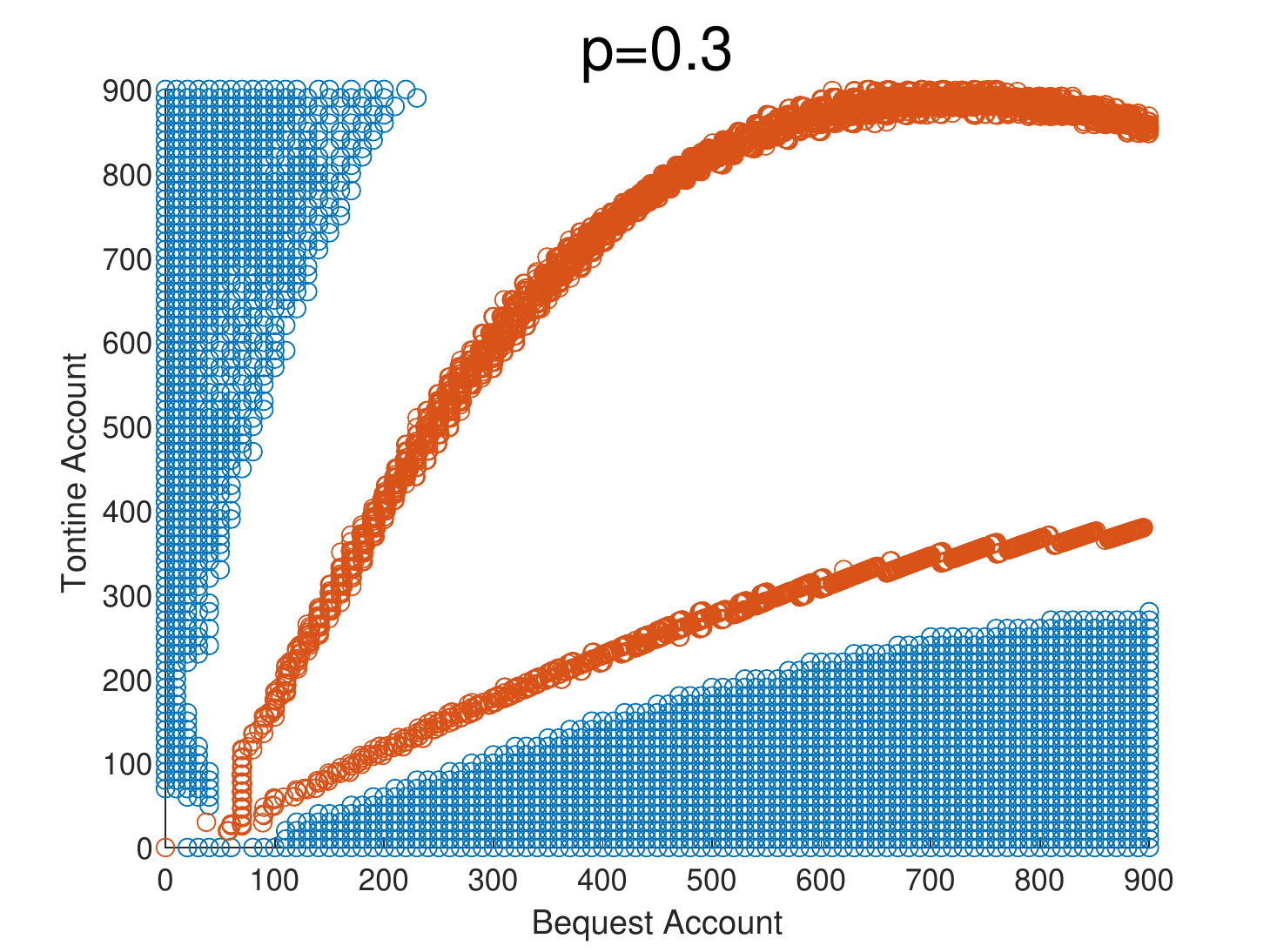}
	\end{subfigure}
	\begin{subfigure}[t]{0.22\textwidth}
		\centering
		\includegraphics[width=1\textwidth]{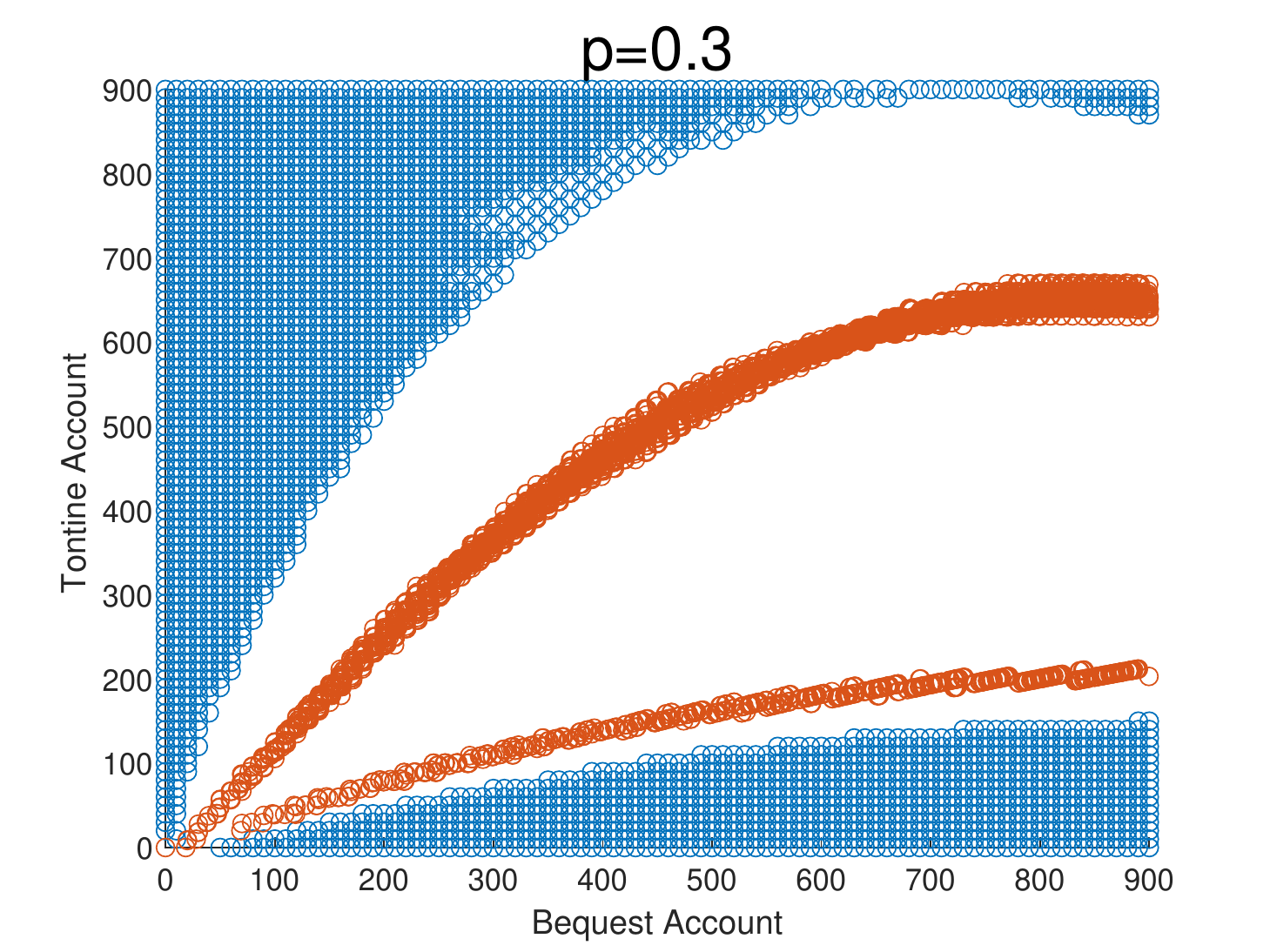}
	\end{subfigure}
	\begin{subfigure}[t]{0.22\textwidth}
		\centering
		\includegraphics[width=1\textwidth]{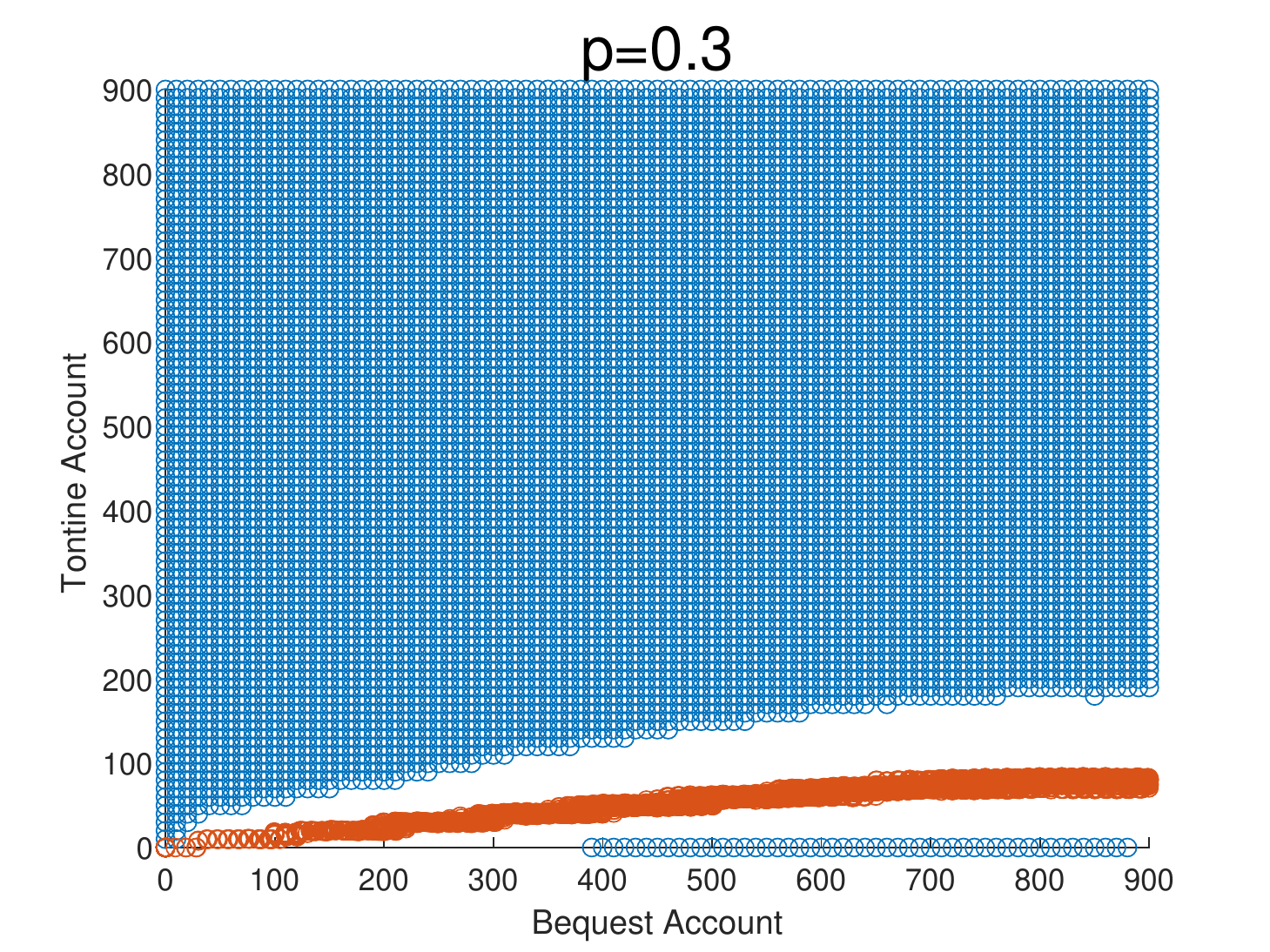}
	\end{subfigure}
	\begin{subfigure}[t]{0.22\textwidth}
		\centering
		\includegraphics[width=1\textwidth]{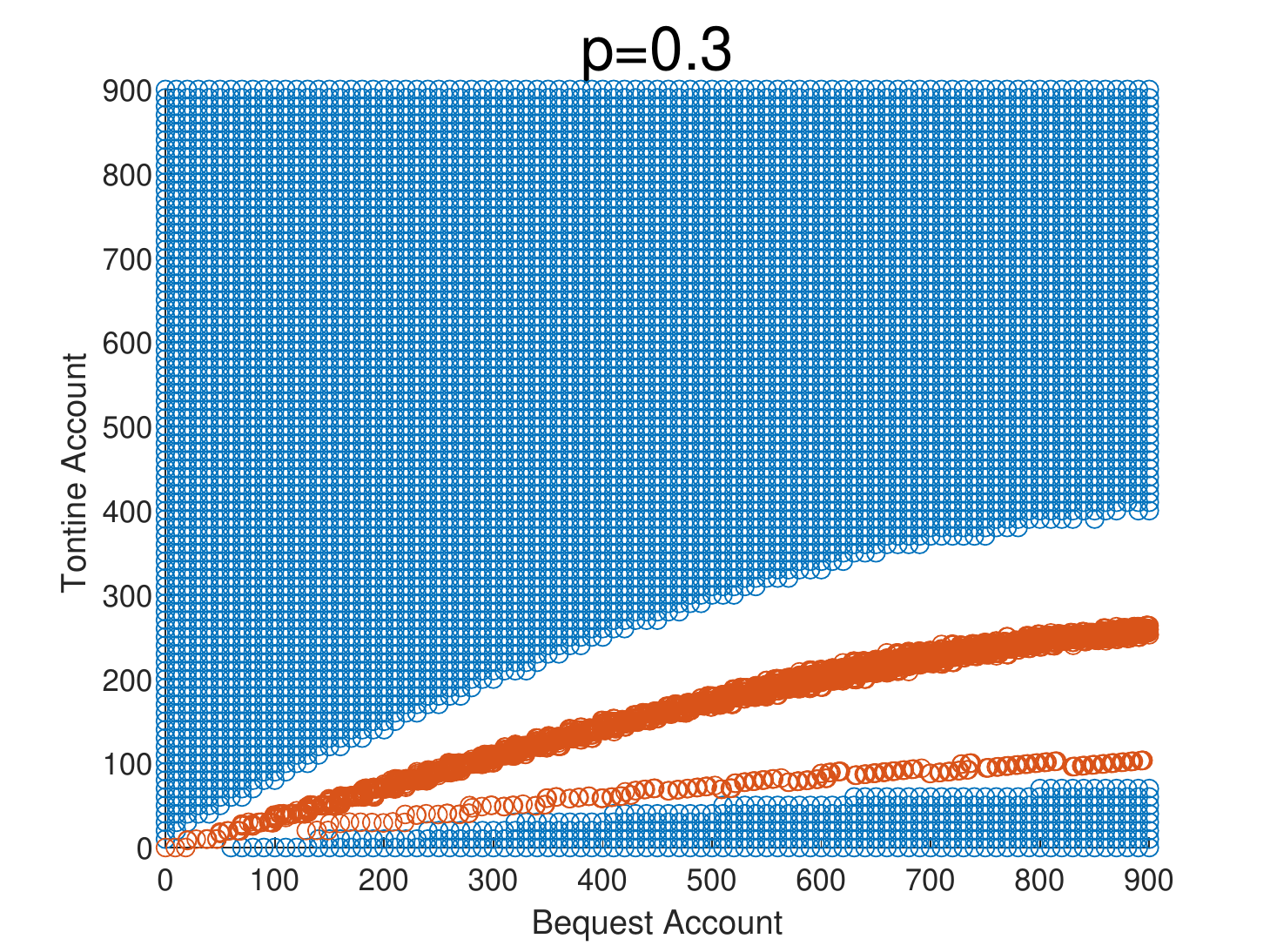}
	\end{subfigure}
	\begin{subfigure}[t]{0.22\textwidth}
		\centering
		\includegraphics[width=1\textwidth]{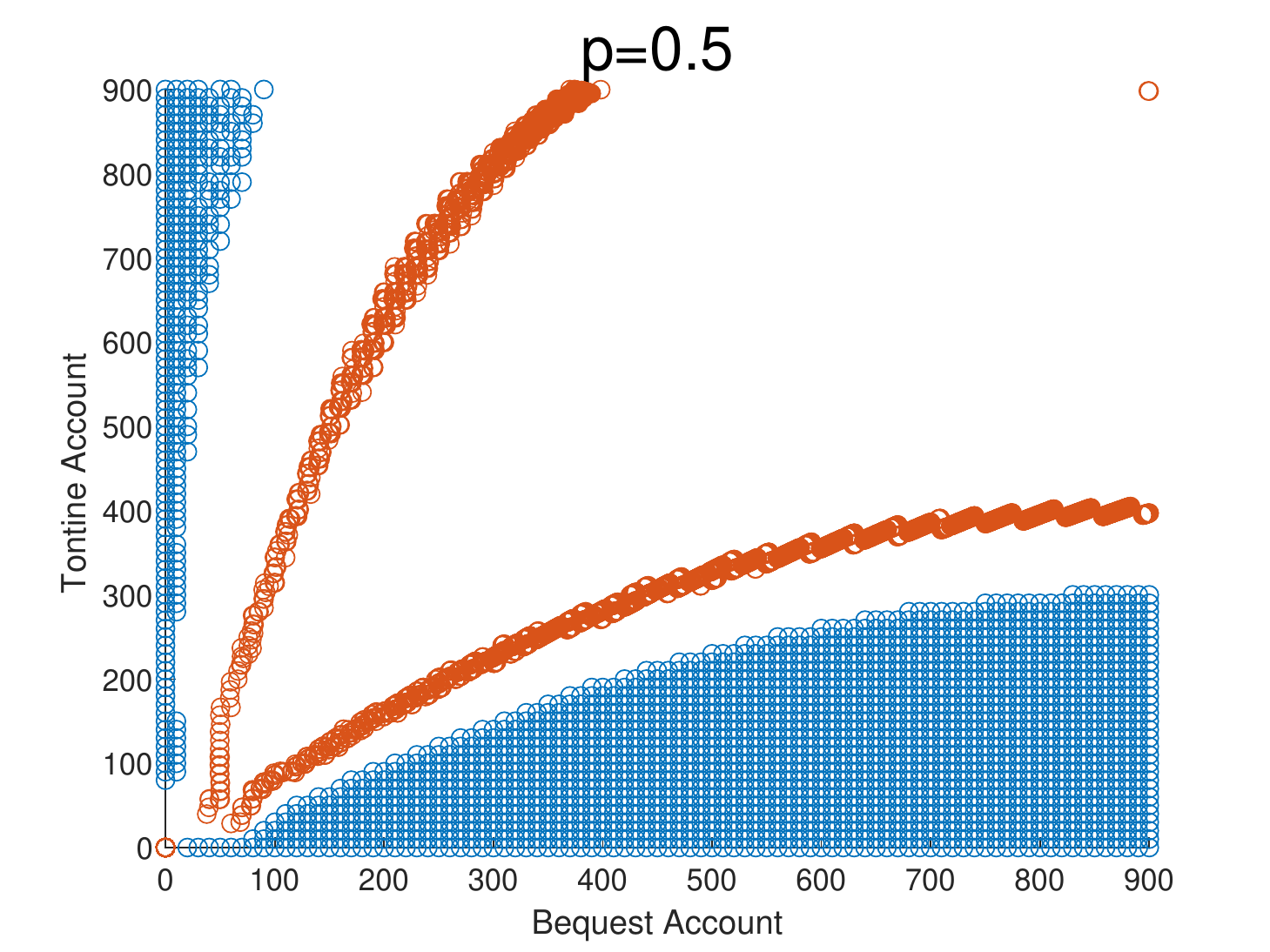}
	\end{subfigure}
	\begin{subfigure}[t]{0.22\textwidth}
		\centering
		\includegraphics[width=1\textwidth]{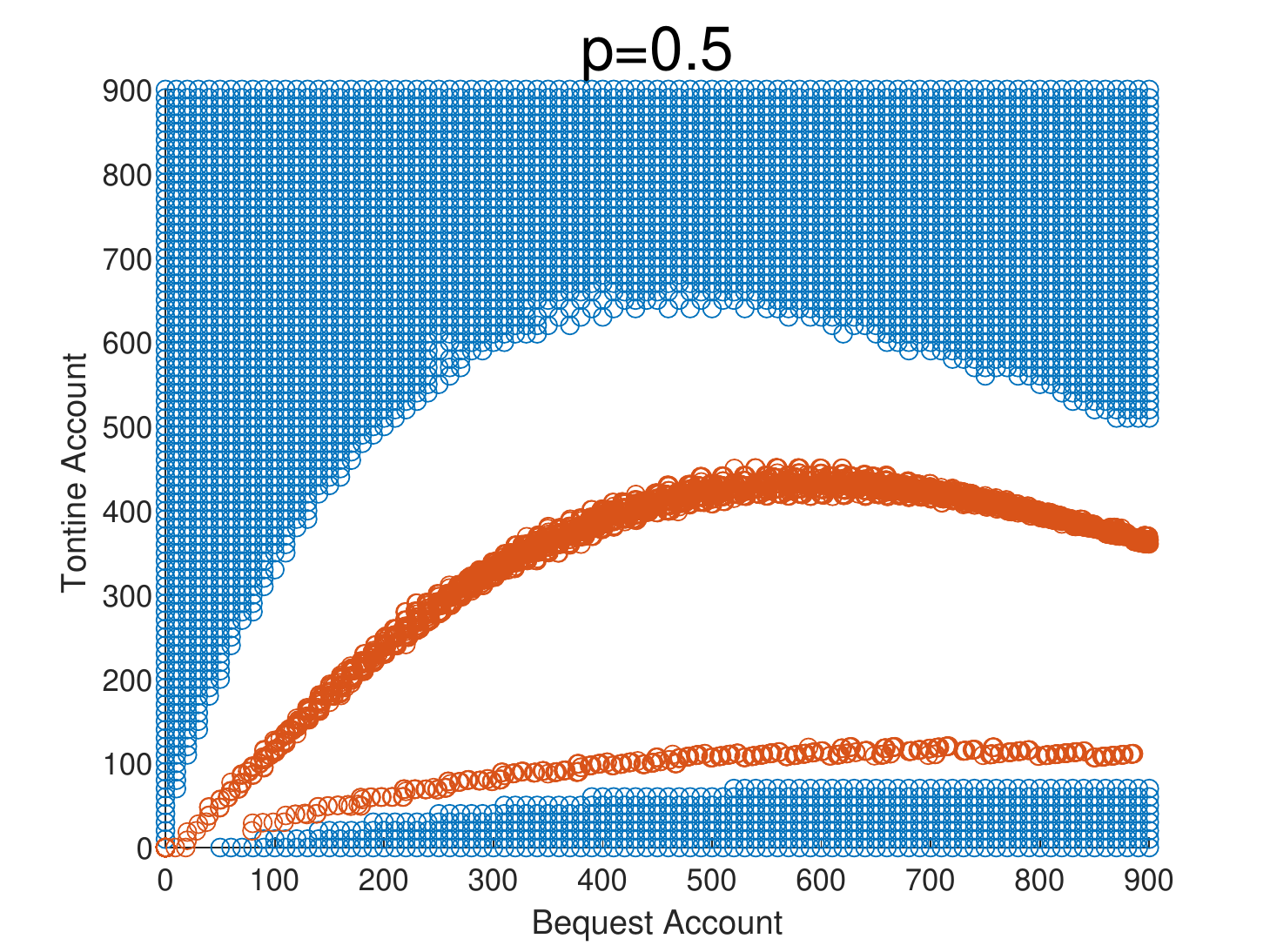}
	\end{subfigure}
	\begin{subfigure}[t]{0.22\textwidth}
		\centering
		\includegraphics[width=1\textwidth]{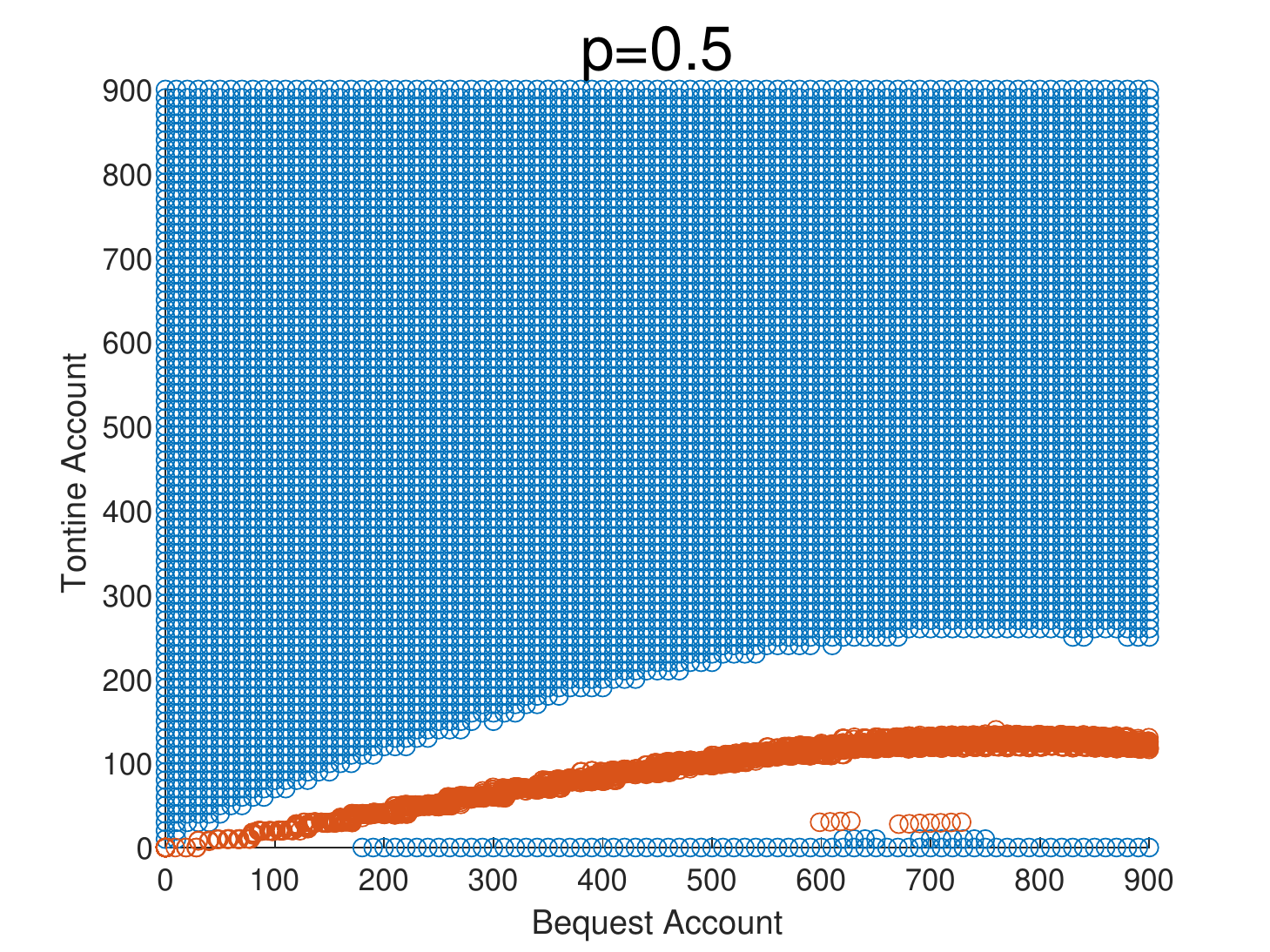}
	\end{subfigure}
	\begin{subfigure}[t]{0.22\textwidth}
		\centering
		\includegraphics[width=1\textwidth]{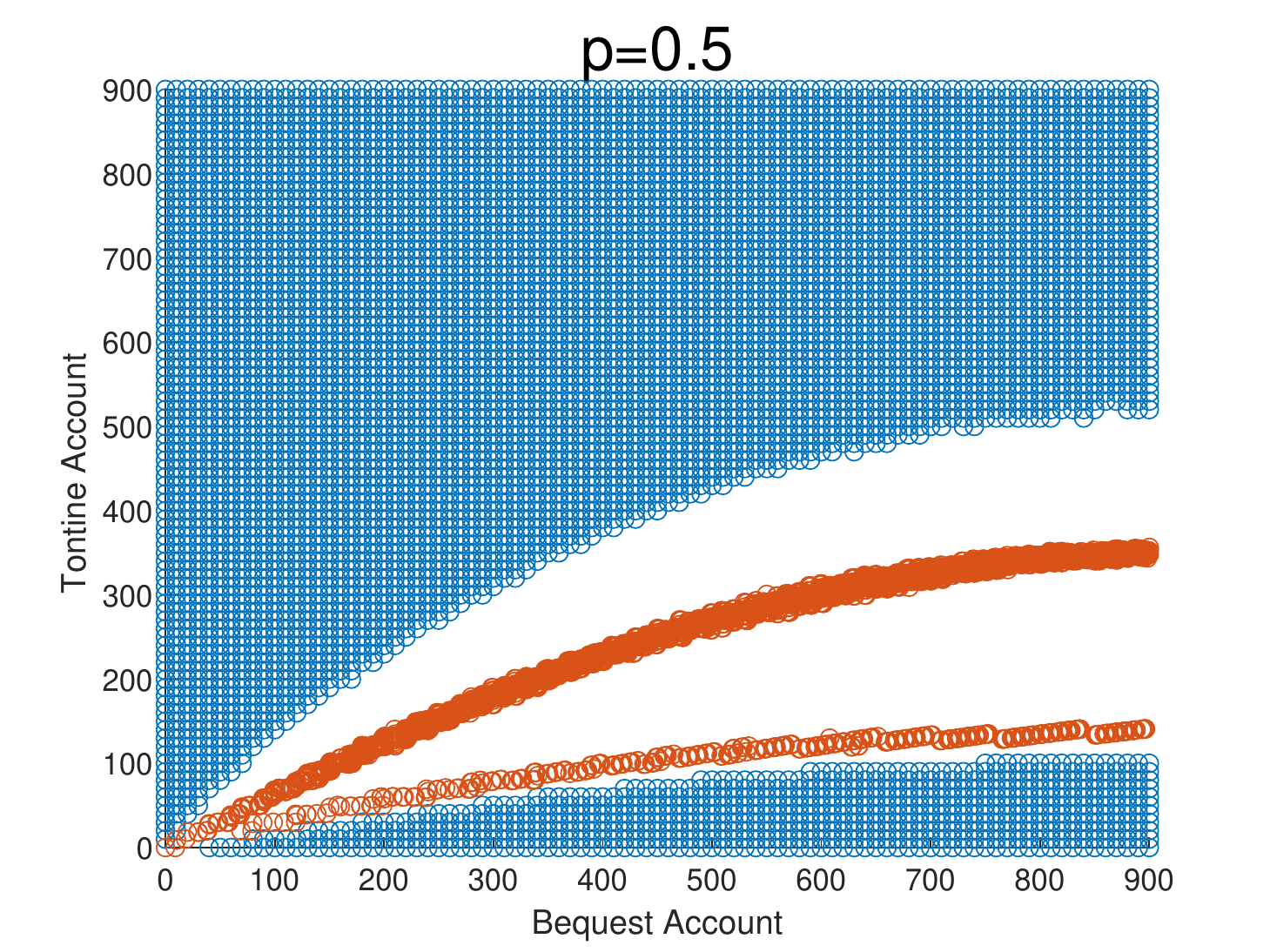}
	\end{subfigure}
	\begin{subfigure}[t]{0.22\textwidth}
		\centering
		\includegraphics[width=1\textwidth]{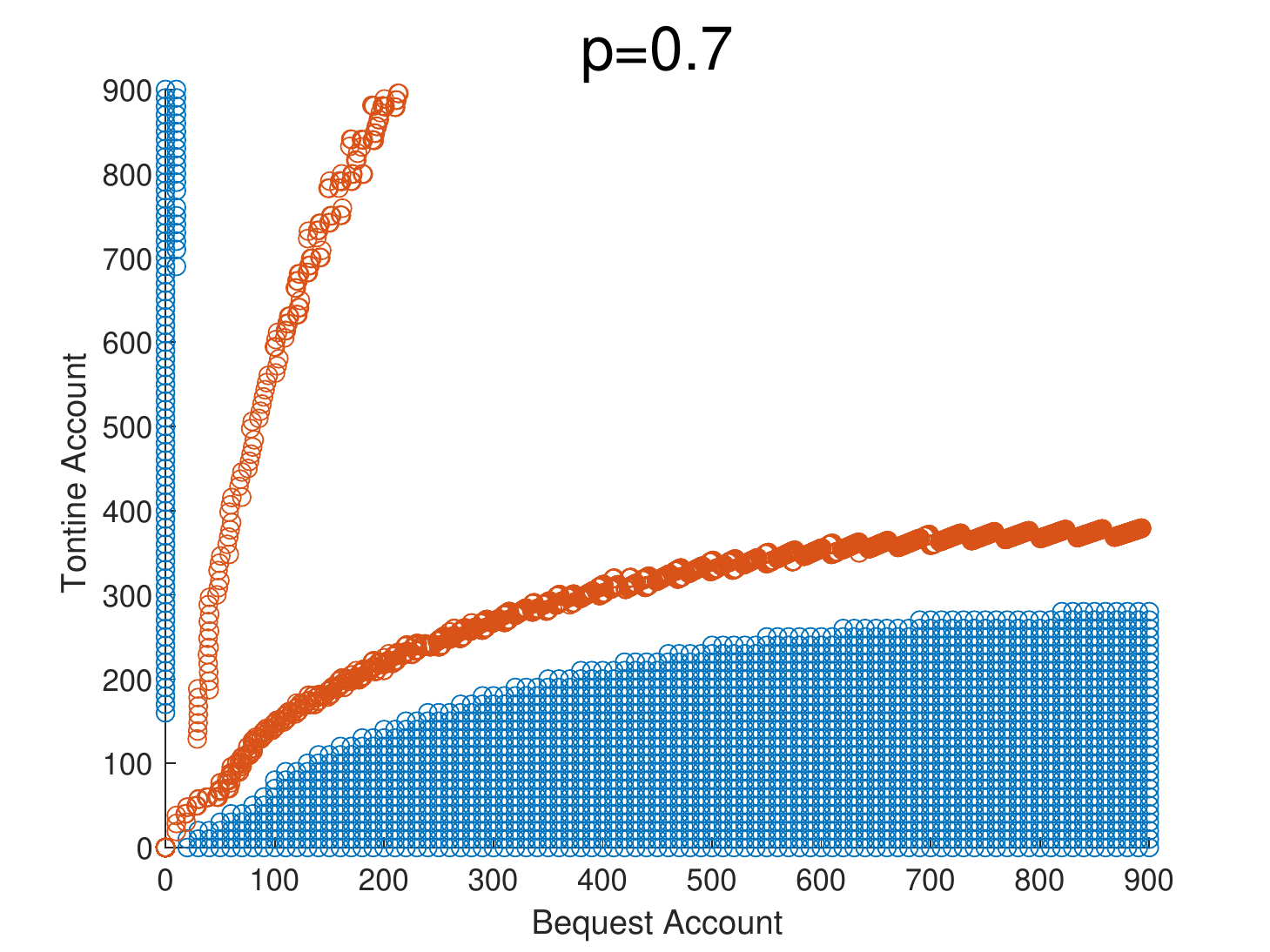}
		\subcaption*{65 years old}
	\end{subfigure}
	\begin{subfigure}[t]{0.22\textwidth}
		\centering
		\includegraphics[width=1\textwidth]{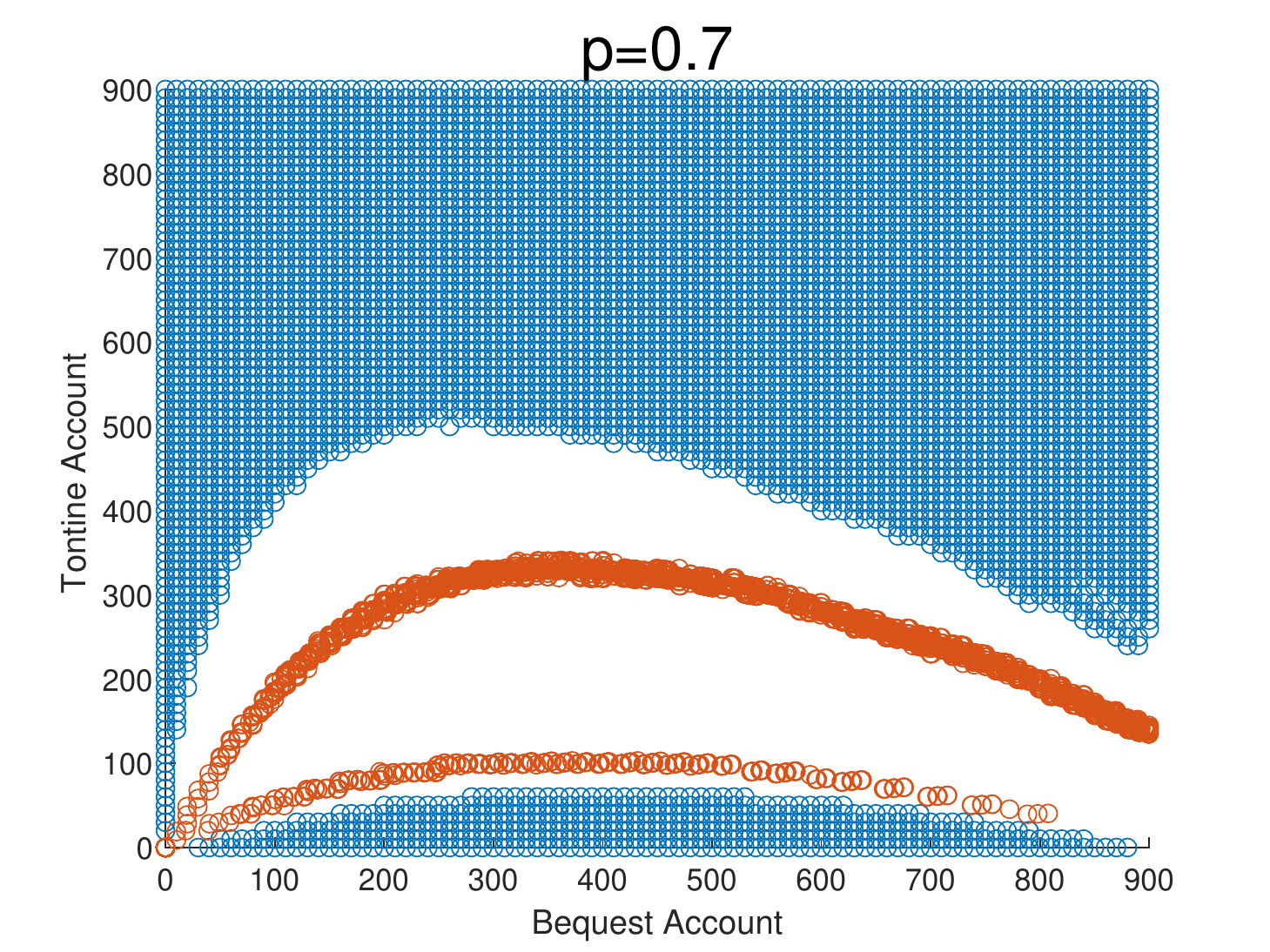}
		\subcaption*{80 years old}
	\end{subfigure}
	\begin{subfigure}[t]{0.22\textwidth}
		\centering
		\includegraphics[width=1\textwidth]{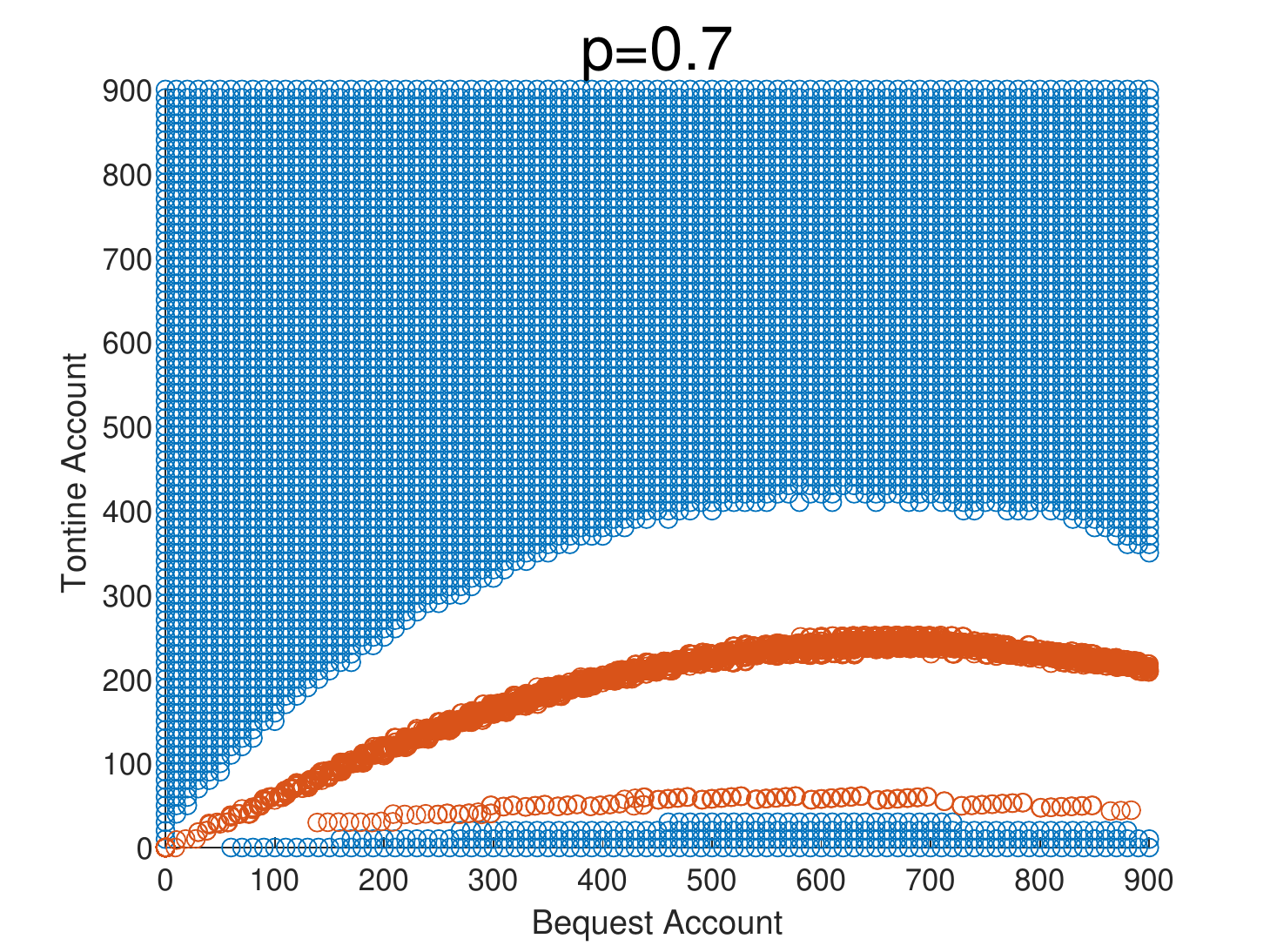}
		\subcaption*{95 years old}
	\end{subfigure}
	\begin{subfigure}[t]{0.22\textwidth}
		\centering
		\includegraphics[width=1\textwidth]{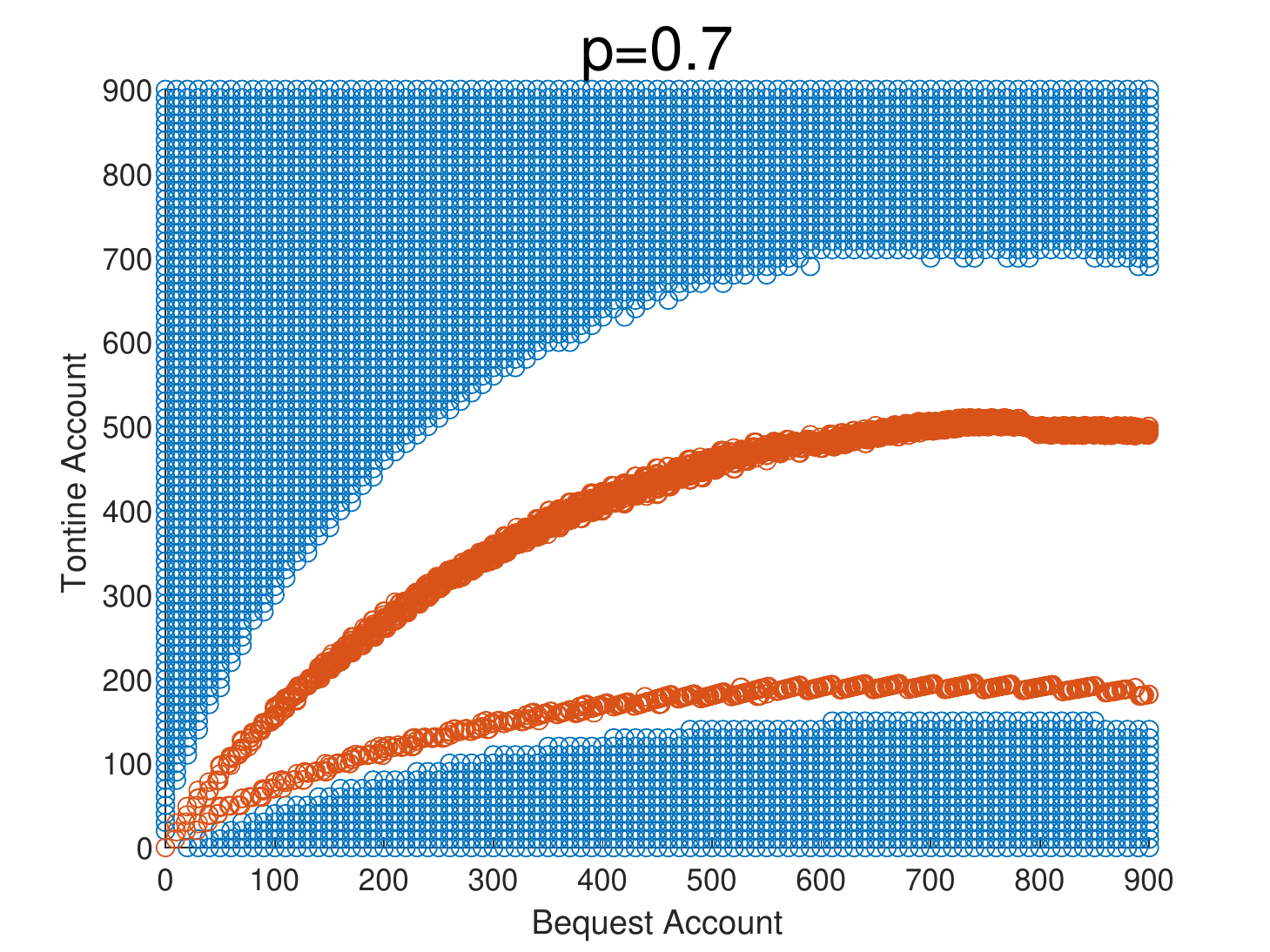}
		\subcaption*{110 years old}
	\end{subfigure}
	\caption{The impacts of relative risk aversion parameter on transaction regions}
	\label{figure_p}
\end{figure}
In Fig. \ref{figure_lambda}, we study the impacts of force of mortality on the transaction regions.
When the retiree is relatively young, she/he prefers to allocate less wealth in the tontine account in the case of higher mortality rate. In this circumstance, the longevity credit is higher. There is no need for the retiree to bear the risk in the tontine account. On the contrary, when the retiree is extremely old, she/he prefers to allocate more wealth in the tontine account to gamble for the huge longevity credit in the case of higher mortality rate.  It is noted that the force of mortality is also a decisive factor in determining the optimal tontine allocation policies.

%

\begin{figure}[]
	\centering
	\begin{subfigure}[t]{0.22\textwidth}
		\centering
		\includegraphics[width=1\textwidth]{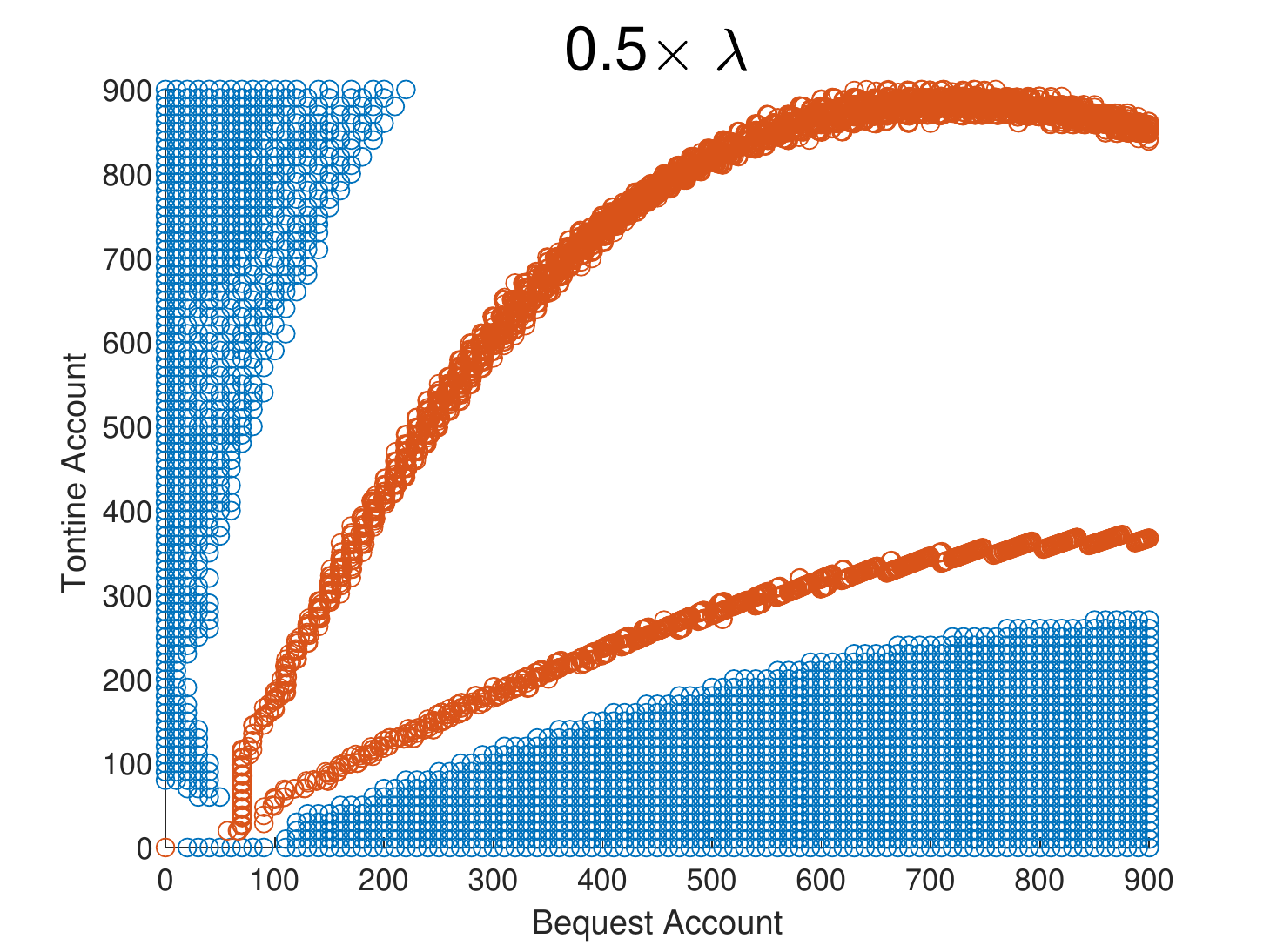}
	\end{subfigure}
	\begin{subfigure}[t]{0.22\textwidth}
		\centering
		\includegraphics[width=1\textwidth]{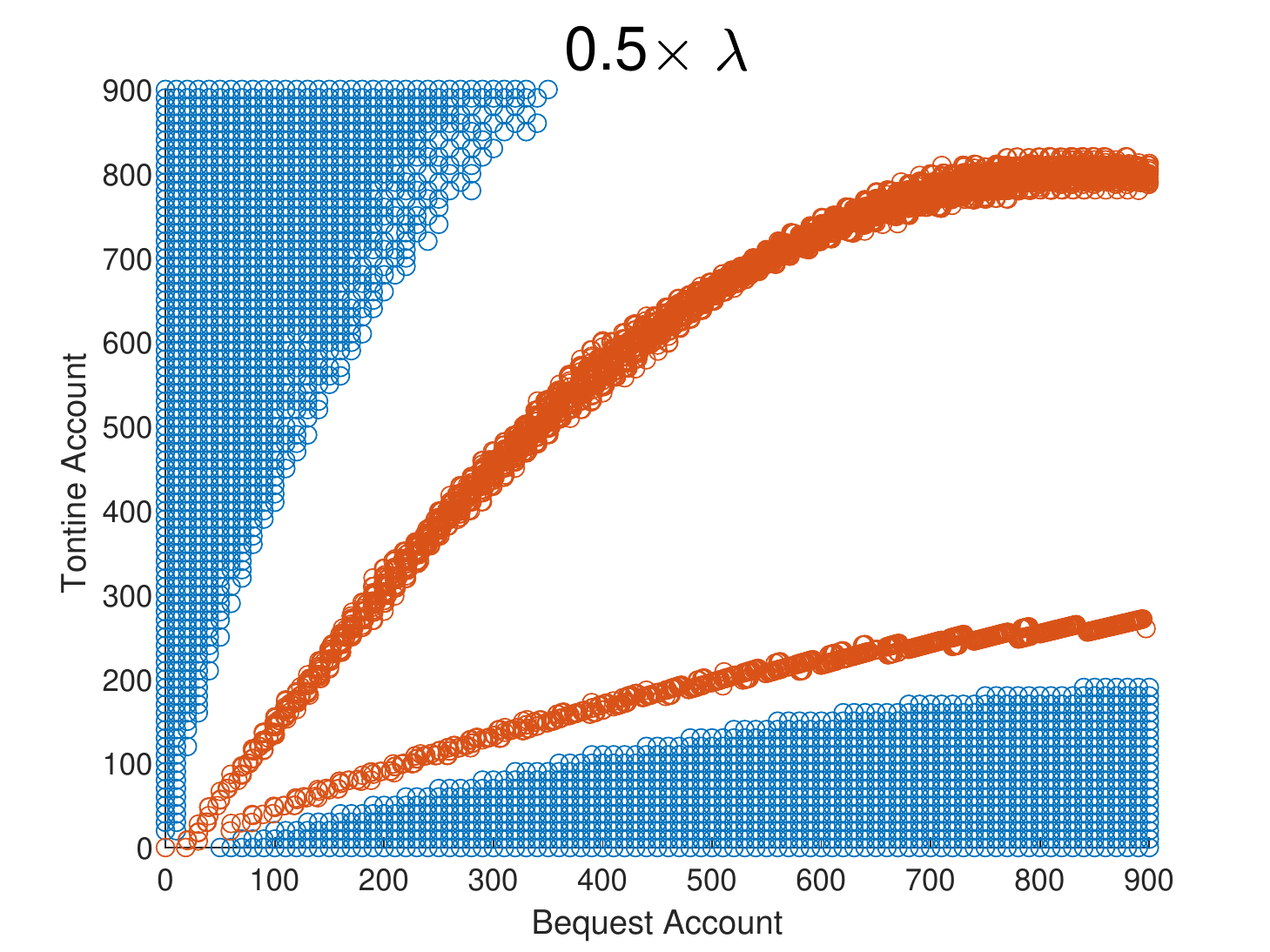}
	\end{subfigure}
	\begin{subfigure}[t]{0.22\textwidth}
		\centering
		\includegraphics[width=1\textwidth]{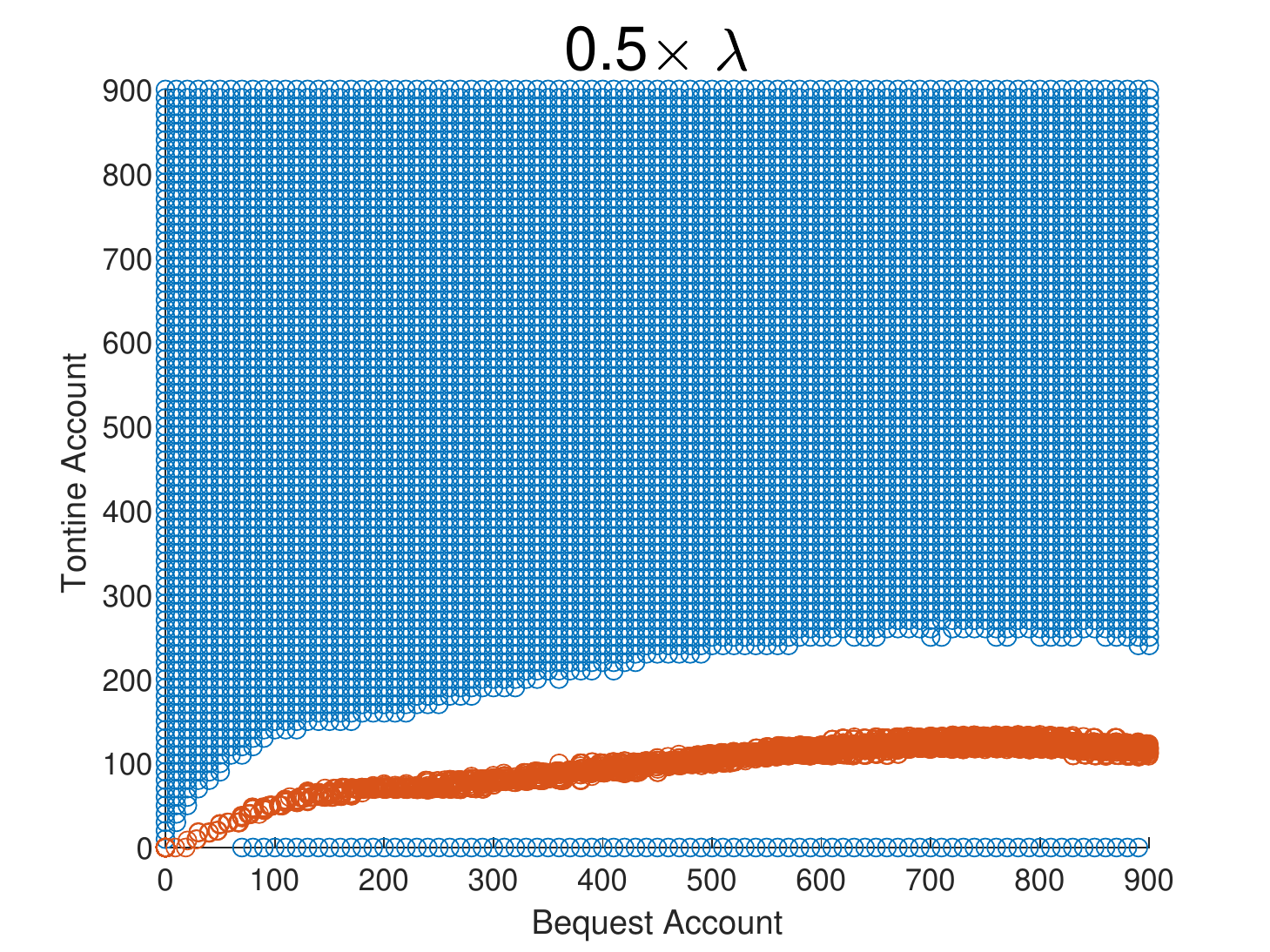}
	\end{subfigure}
	\begin{subfigure}[t]{0.22\textwidth}
		\centering
		\includegraphics[width=1\textwidth]{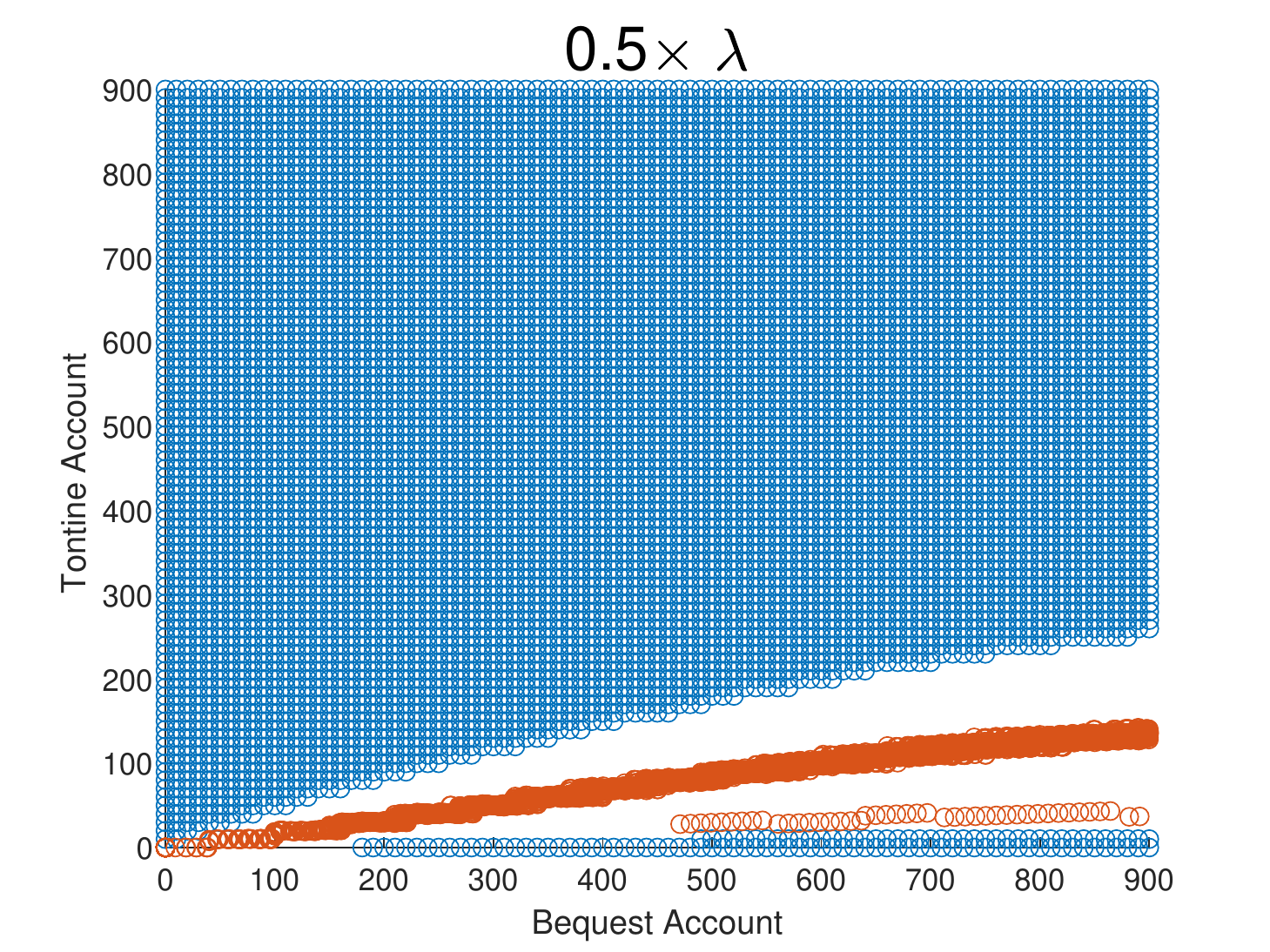}
	\end{subfigure}
	\begin{subfigure}[t]{0.22\textwidth}
		\centering
		\includegraphics[width=1\textwidth]{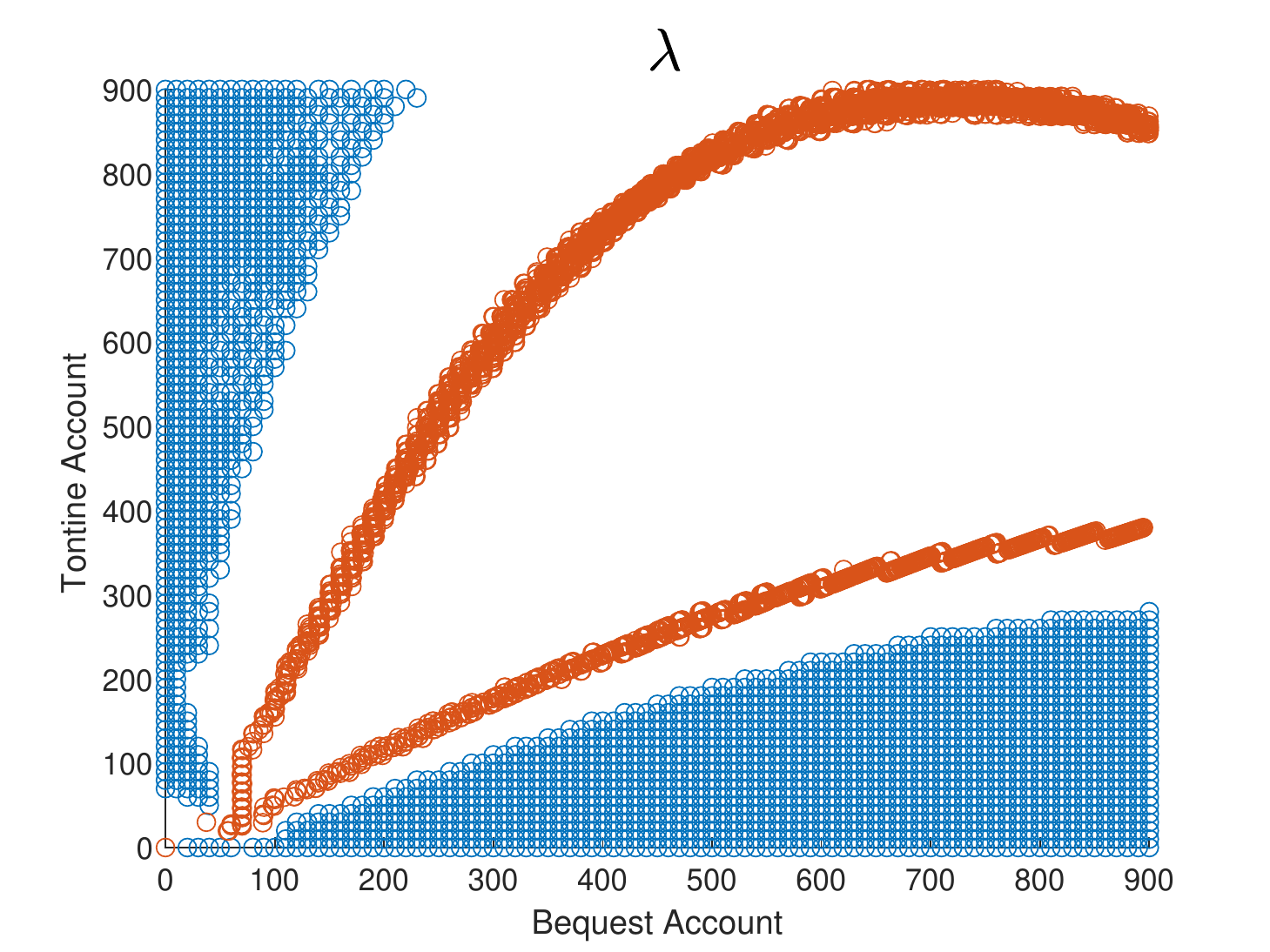}
	\end{subfigure}
	\begin{subfigure}[t]{0.22\textwidth}
		\centering
		\includegraphics[width=1\textwidth]{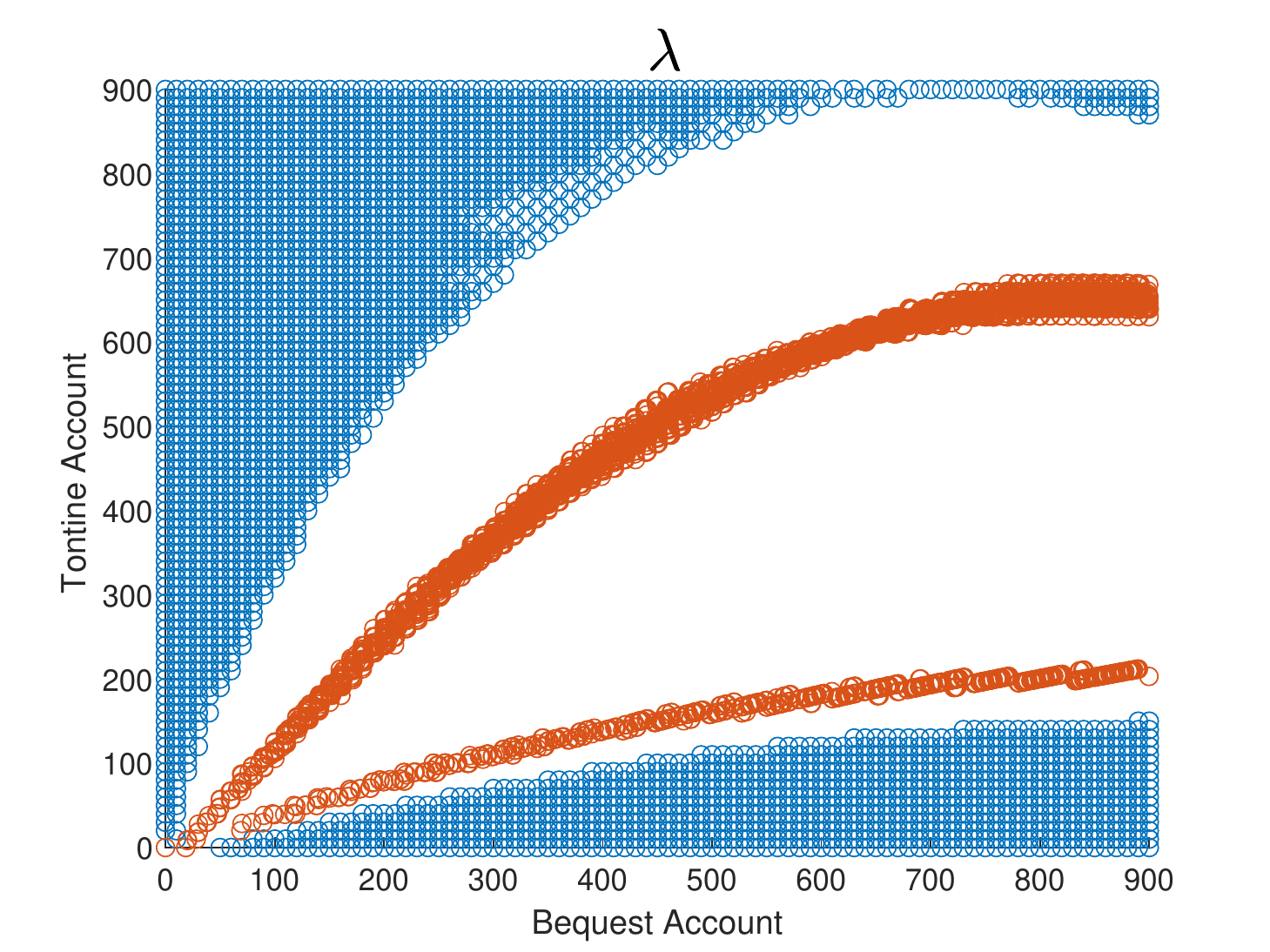}
	\end{subfigure}
	\begin{subfigure}[t]{0.22\textwidth}
		\centering
		\includegraphics[width=1\textwidth]{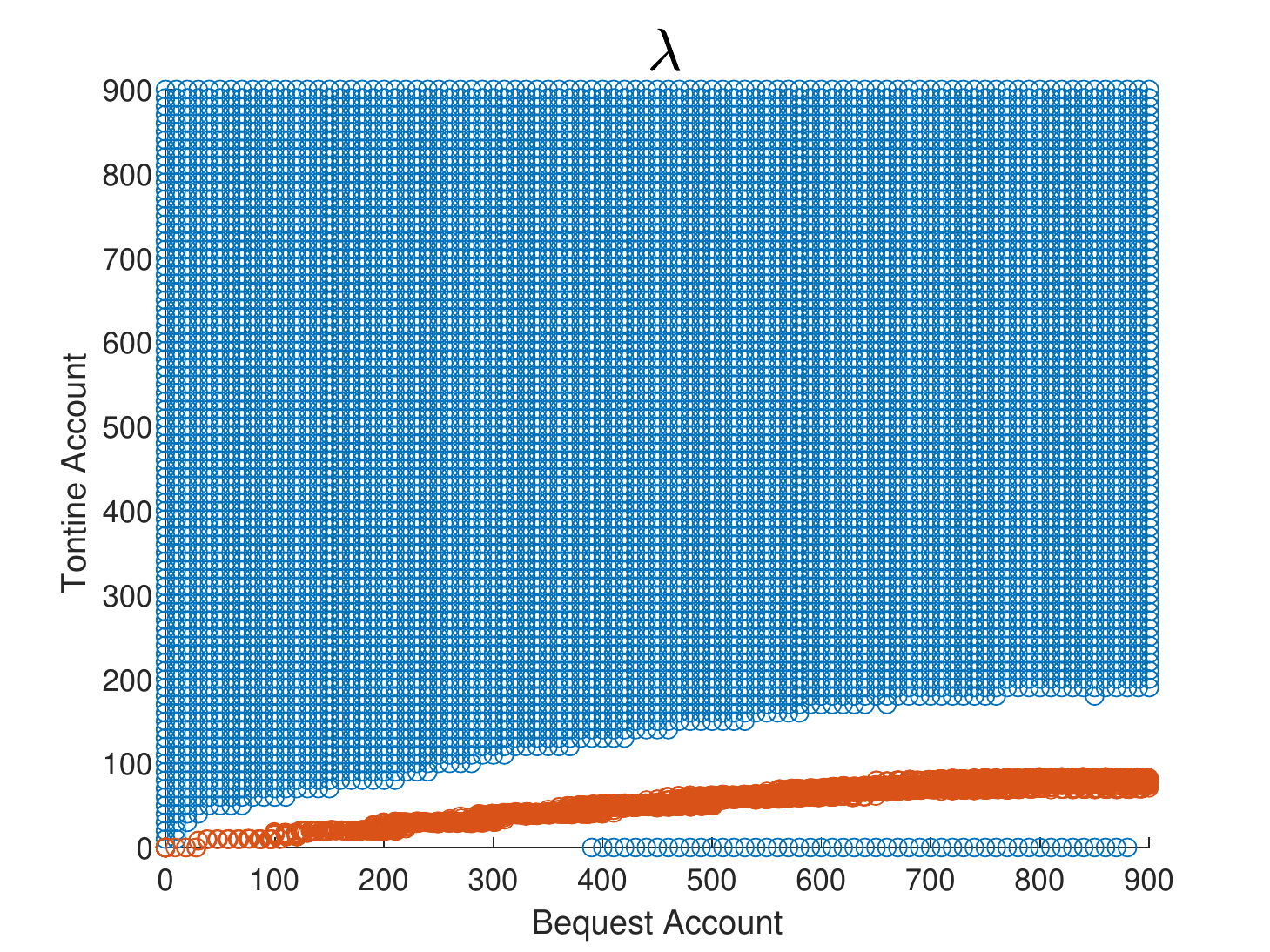}
	\end{subfigure}
	\begin{subfigure}[t]{0.22\textwidth}
		\centering
		\includegraphics[width=1\textwidth]{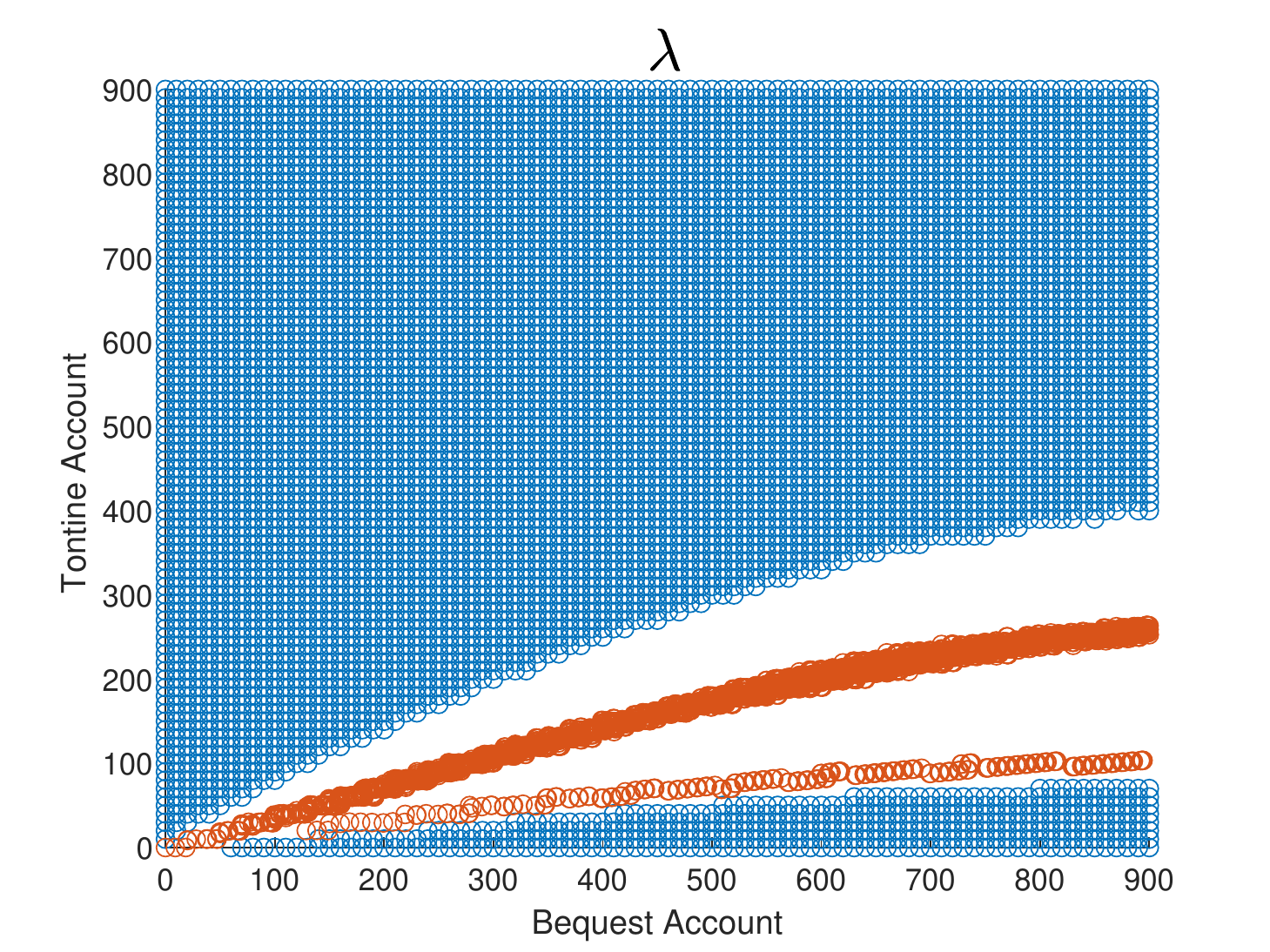}
	\end{subfigure}
	\begin{subfigure}[t]{0.22\textwidth}
		\centering
		\includegraphics[width=1\textwidth]{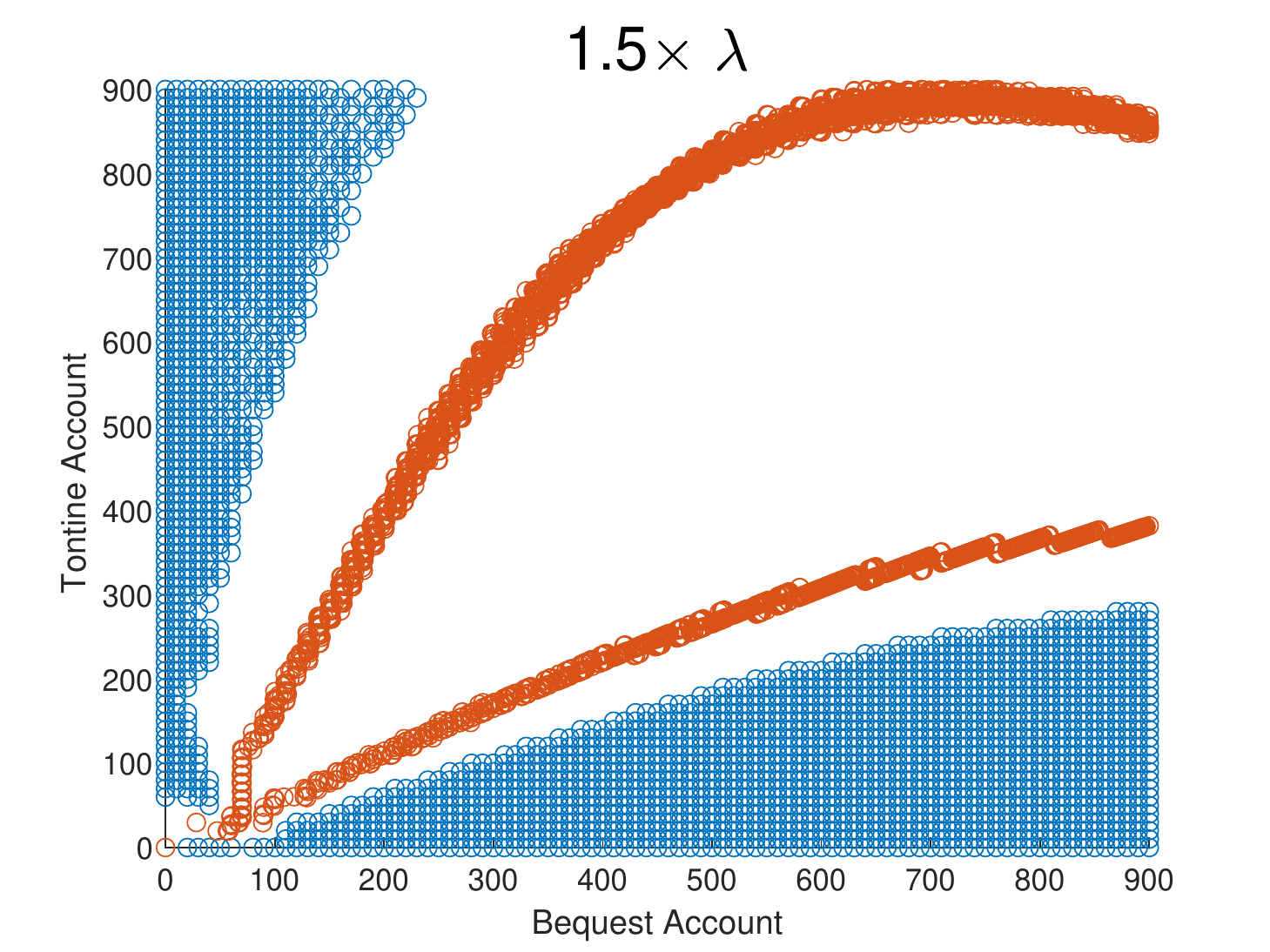}
		\subcaption*{65 years old}
	\end{subfigure}
	\begin{subfigure}[t]{0.22\textwidth}
		\centering
		\includegraphics[width=1\textwidth]{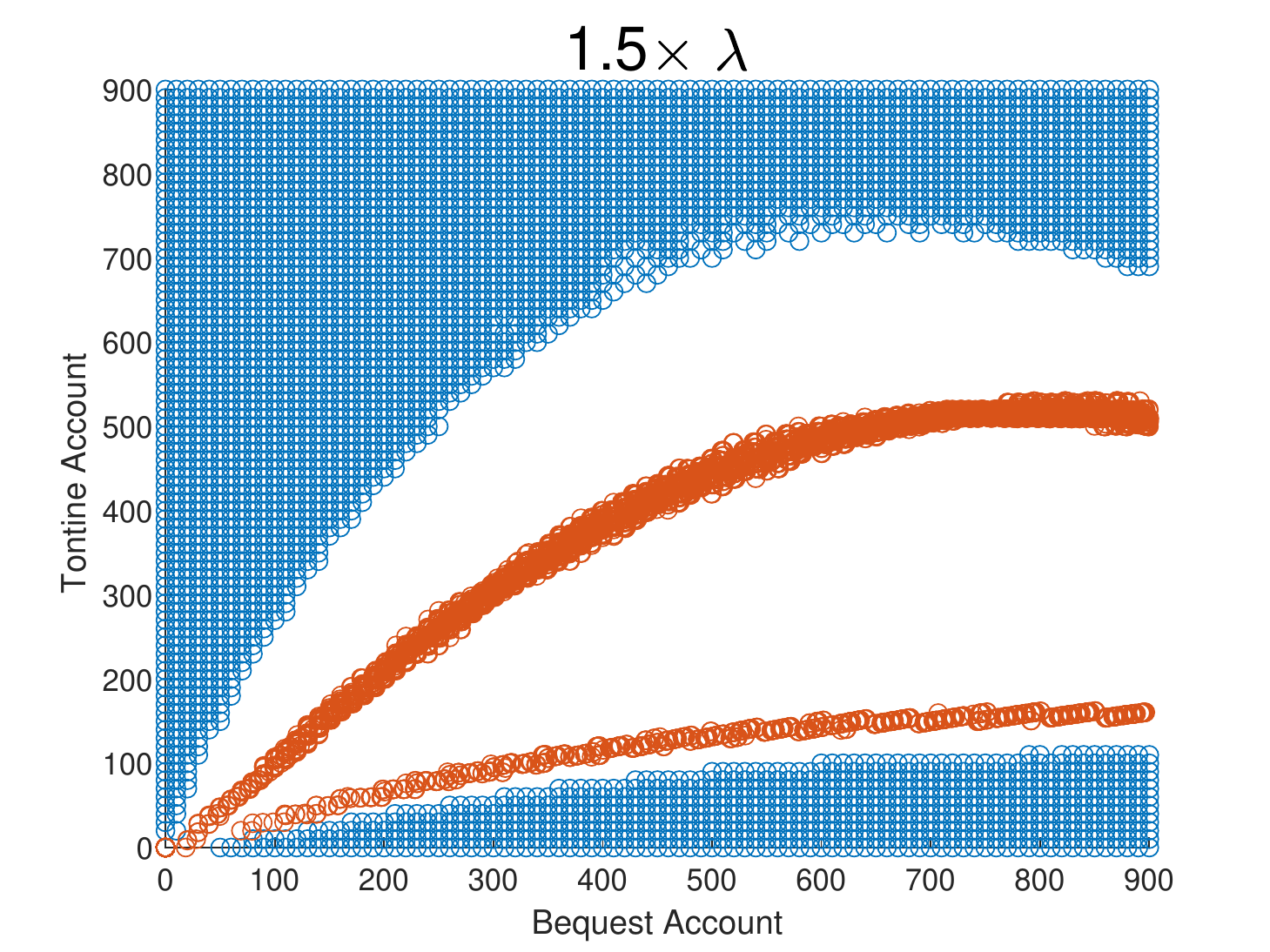}
		\subcaption*{80 years old}
	\end{subfigure}
	\begin{subfigure}[t]{0.22\textwidth}
		\centering
		\includegraphics[width=1\textwidth]{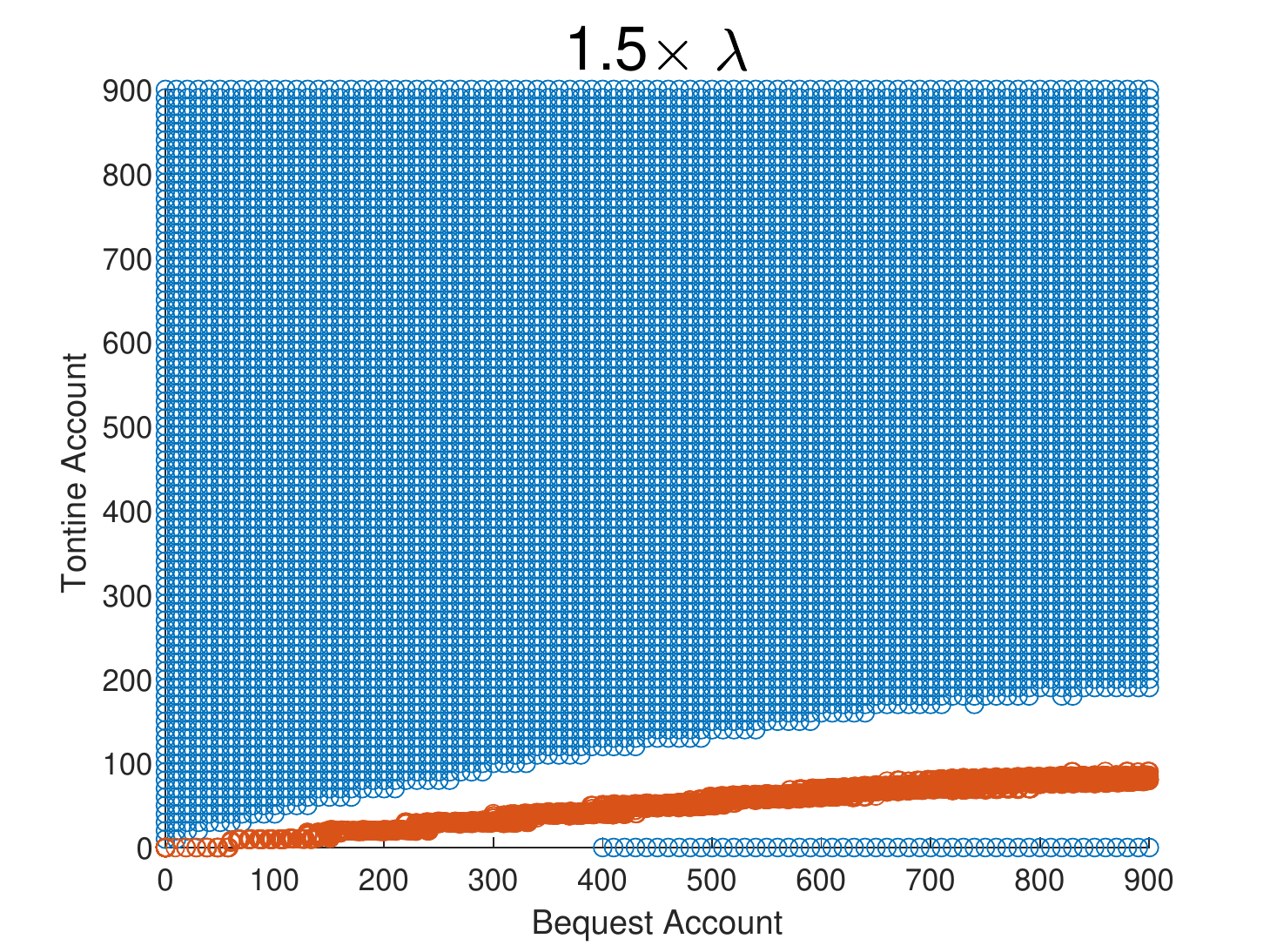}
		\subcaption*{95 years old}
	\end{subfigure}
	\begin{subfigure}[t]{0.22\textwidth}
		\centering
		\includegraphics[width=1\textwidth]{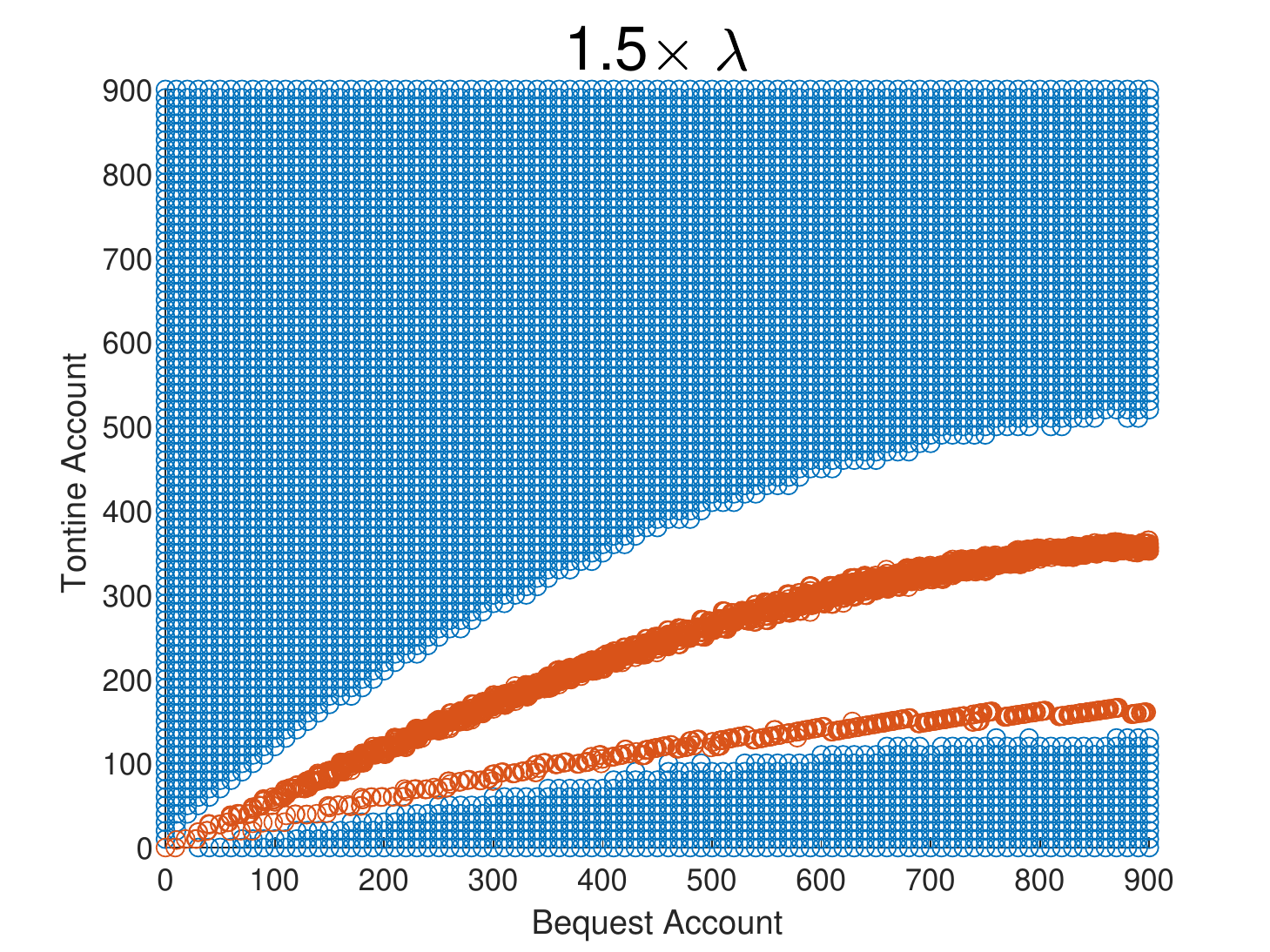}
		\subcaption*{110 years old}
	\end{subfigure}
	\caption{The impacts of force of mortality on transaction regions}
	\label{figure_lambda}
\end{figure}

\subsection{The Impacts of the Cost Parameter on Transaction Regions}
In this subsection, we study the impacts of the cost parameter on the transaction regions, and we take the proportional cost for an example.

In Fig. \ref{figure_xi}, the results show that when the proportional cost becomes larger, the No Transaction region slightly enlarges. That is, the Sell and Buy tontine regions both shrink. Naturally, the retiree will decrease the transaction frequency to reduce the transaction cost. Thus, the transaction only occurs when the wealth gap between the two accounts is relatively large. Moreover, the impacts of the cost parameter on the transaction policies are very small within its reasonable value range.
%
%
%
\subsection{The Impacts of the Expected Return and Proportional Transaction Cost on the Utility Improvement Ratio}
In this subsection, we study the utility improvement brought by participation in the tontine account. First, we compute the value function of the retiree when she/he cannot participate in the tontine account. The retiree's wealth process satisfies
\begin{equation*}
	X(s)=x+\int\limits_t^s[rX(u)-c(u)]\mathrm{d}u,
\end{equation*}
where $c(s)\geq 0$ represents the consumption rate at time $s$. Then, the value function can be defined as follows:
\begin{equation*}
	\tilde{V}(t,x)=\sup_{c(\cdot)}\mathbb{E}\!\left[\int\limits_t^{+\infty}\!\!\!\mathrm{e}^{-\int_t^s(\lambda(u)+\rho)\mathrm{d}u}\left\{U(c(s))\!+\!b\lambda(s)U(X(s))\right\}\mathrm{d}s\right].
\end{equation*}
We can easily obtain  $\tilde{V}(t,x)=h(t)x^p$, where $\tilde{h}(t)=D(0,t)h(t)$ satisfies the following ordinary differential equations
\begin{align*}
	\left \{
	\begin{array}{ll}
		\tilde{h}'(t)+(1-p)(\frac{D(0,t)}{p})^{\frac{1}{1-p}}\tilde{h}(t)^{\frac{p}{p-1}}+\frac{b\lambda(t)D(0,t)}{p}=0,\nonumber\\\\
		\lim\limits_{t\to +\infty}\tilde{h}(t)=0.
	\end{array}
	\right.
\end{align*}
Moreover, we define the utility improvement ratio as $\frac{V(0,x,0)-\tilde{V}(0,x)}{\tilde{V}(0,x)}$.
\begin{figure}[h]
	\centering
	\begin{subfigure}[t]{0.4\textwidth}
		\centering
		\includegraphics[width=1\textwidth]{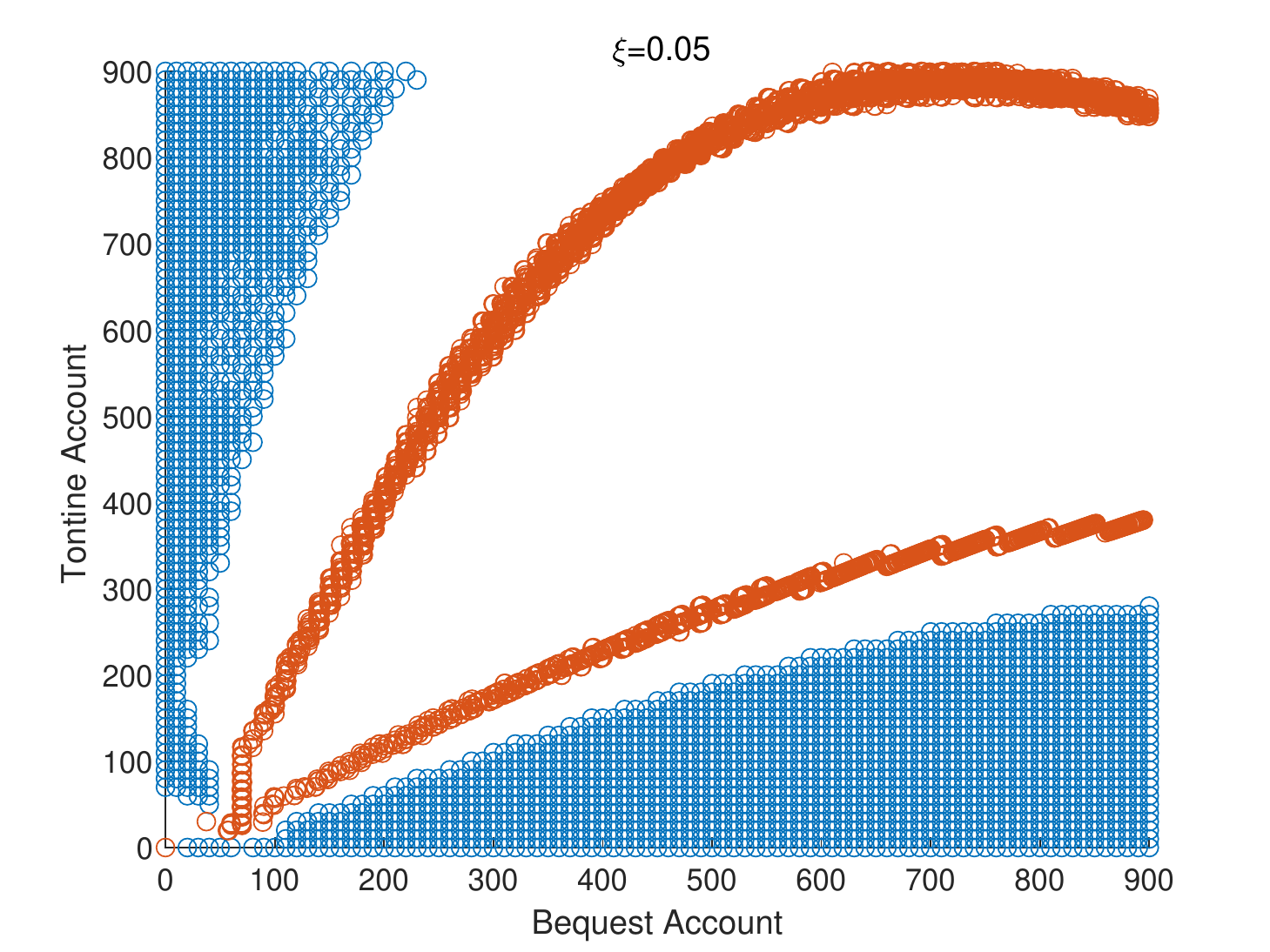}
	\end{subfigure}
	\begin{subfigure}[t]{0.4\textwidth}
		\centering
		\includegraphics[width=1\textwidth]{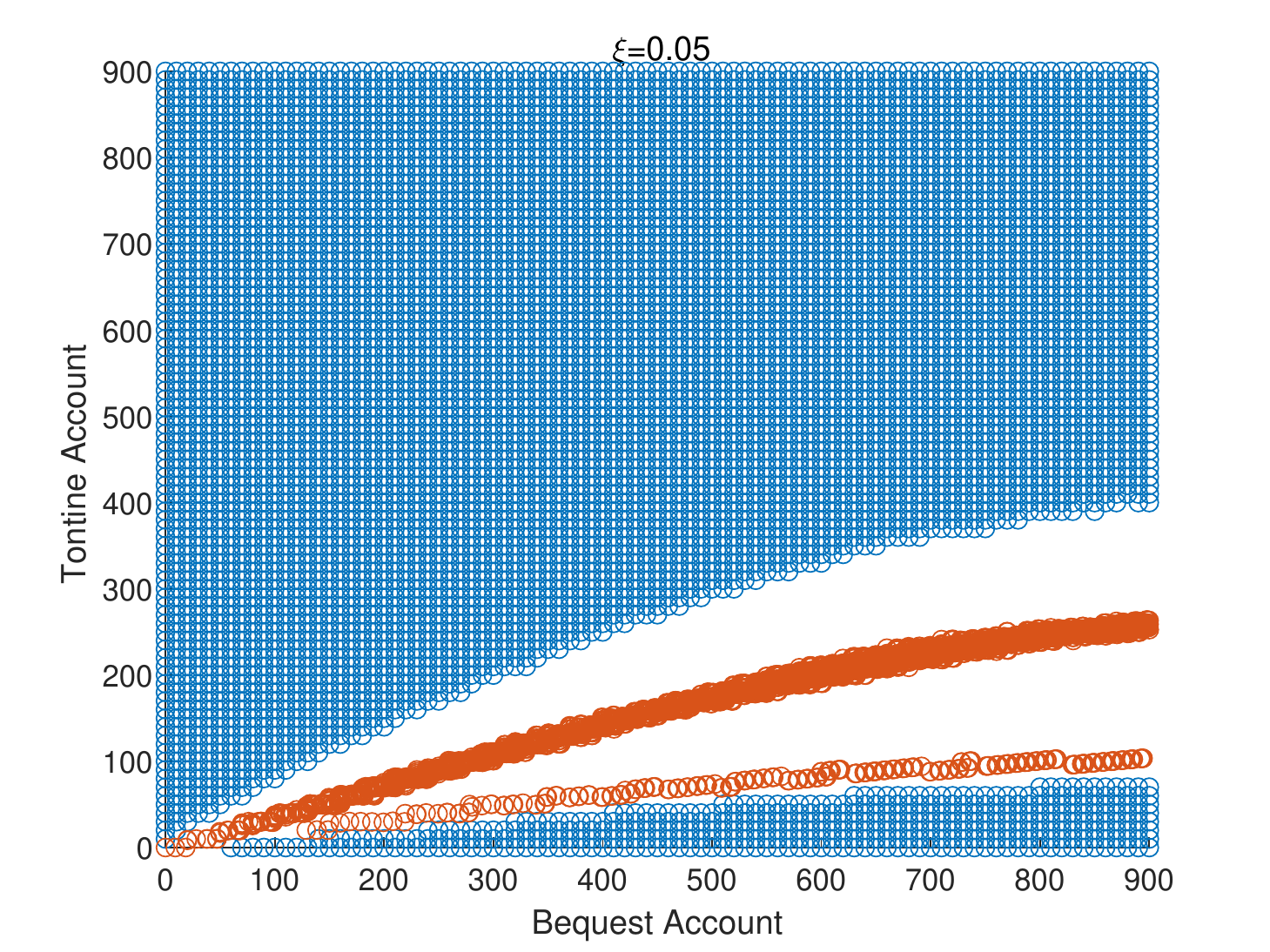}
	\end{subfigure}
	
	\begin{subfigure}[t]{0.4\textwidth}
		\centering
		\includegraphics[width=1\textwidth]{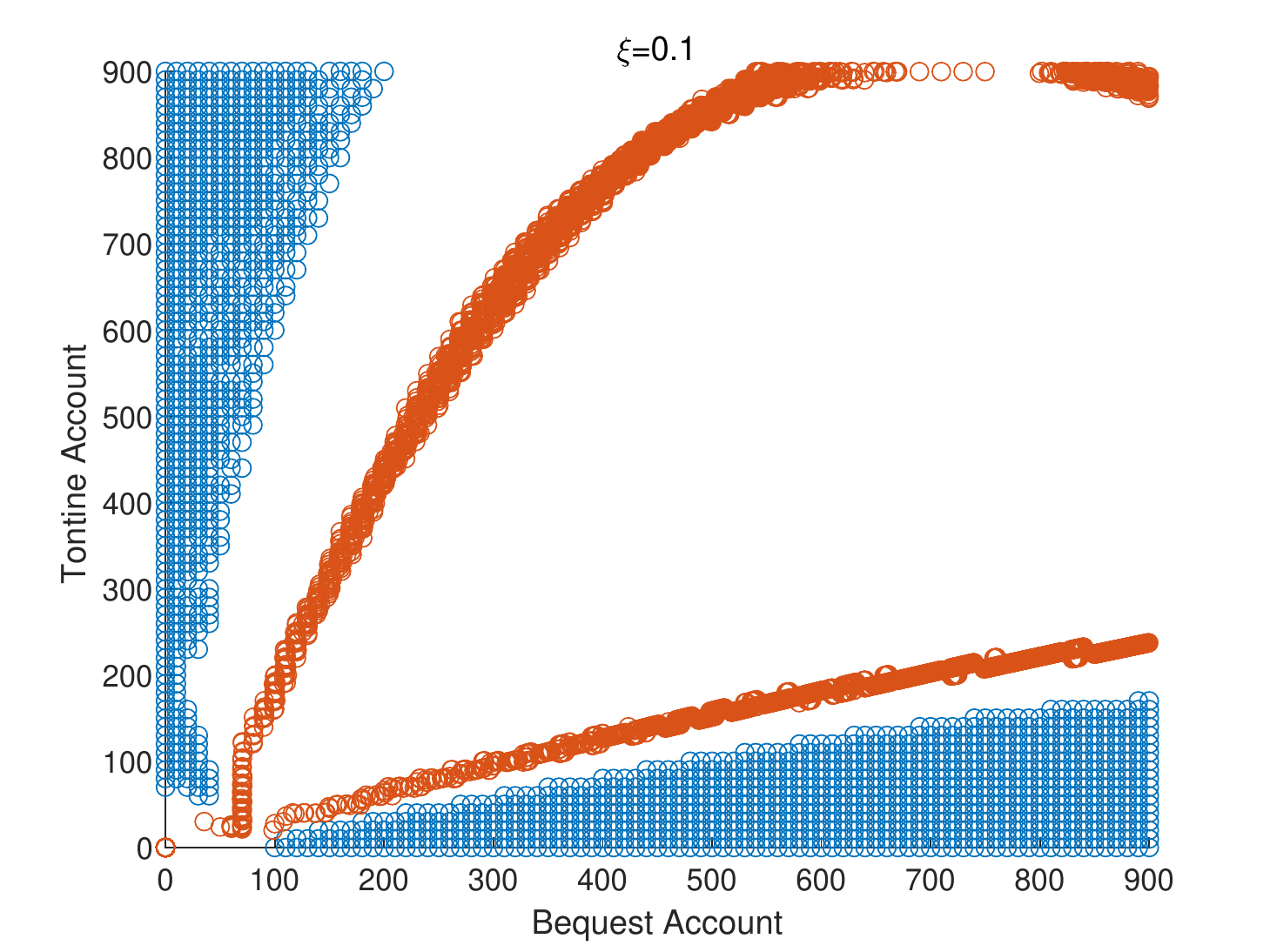}
		\caption*{65 years old}
	\end{subfigure}
	\begin{subfigure}[t]{0.4\textwidth}
		\centering
		\includegraphics[width=1\textwidth]{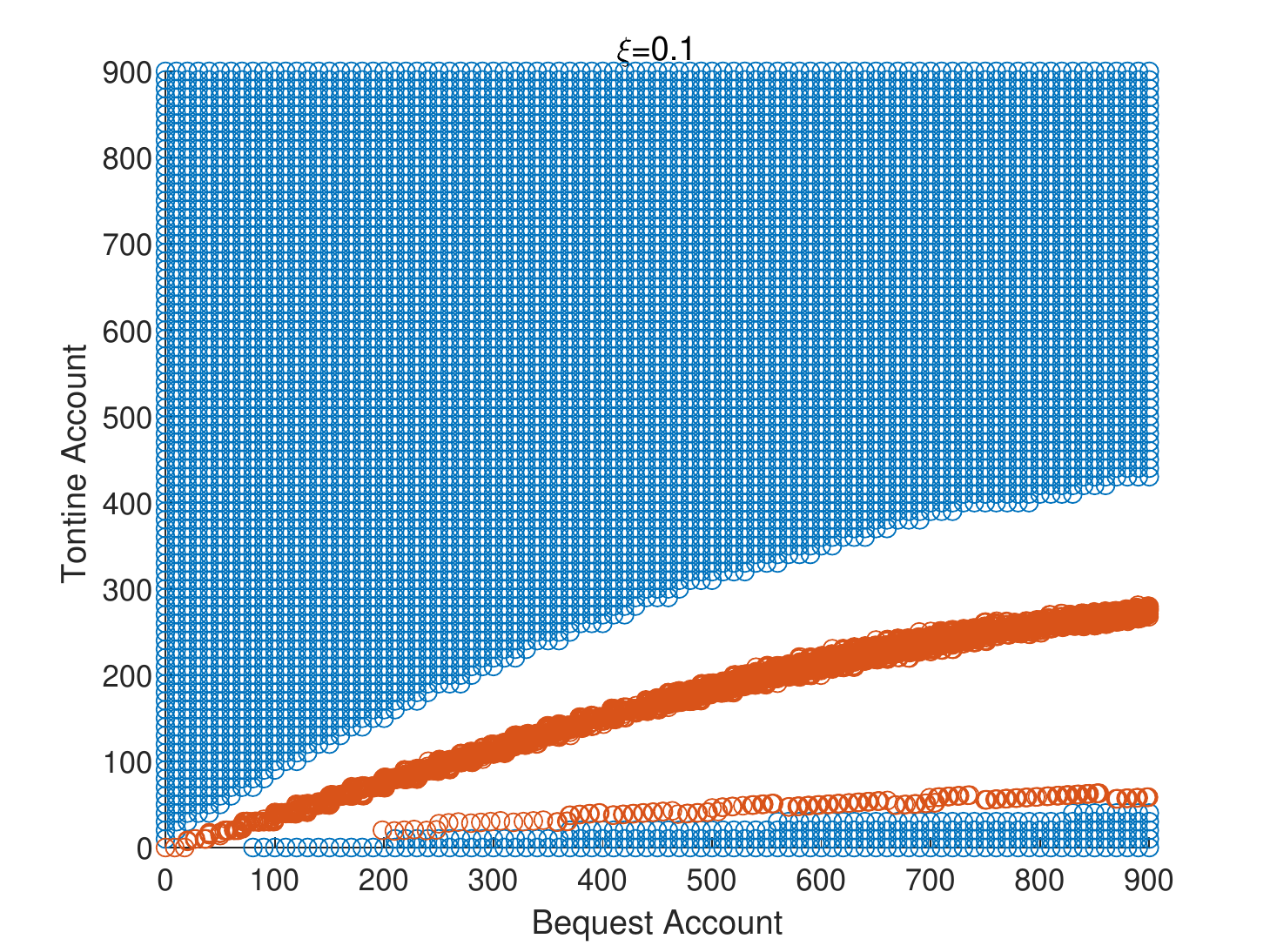}
		\caption*{110 years old}
	\end{subfigure}
	
	\caption{The impacts of the proportional cost parameter on transaction regions}  \label{figure_xi}
\end{figure}

\begin{figure}[h]
	\centering
	\includegraphics[width=0.7\textwidth]{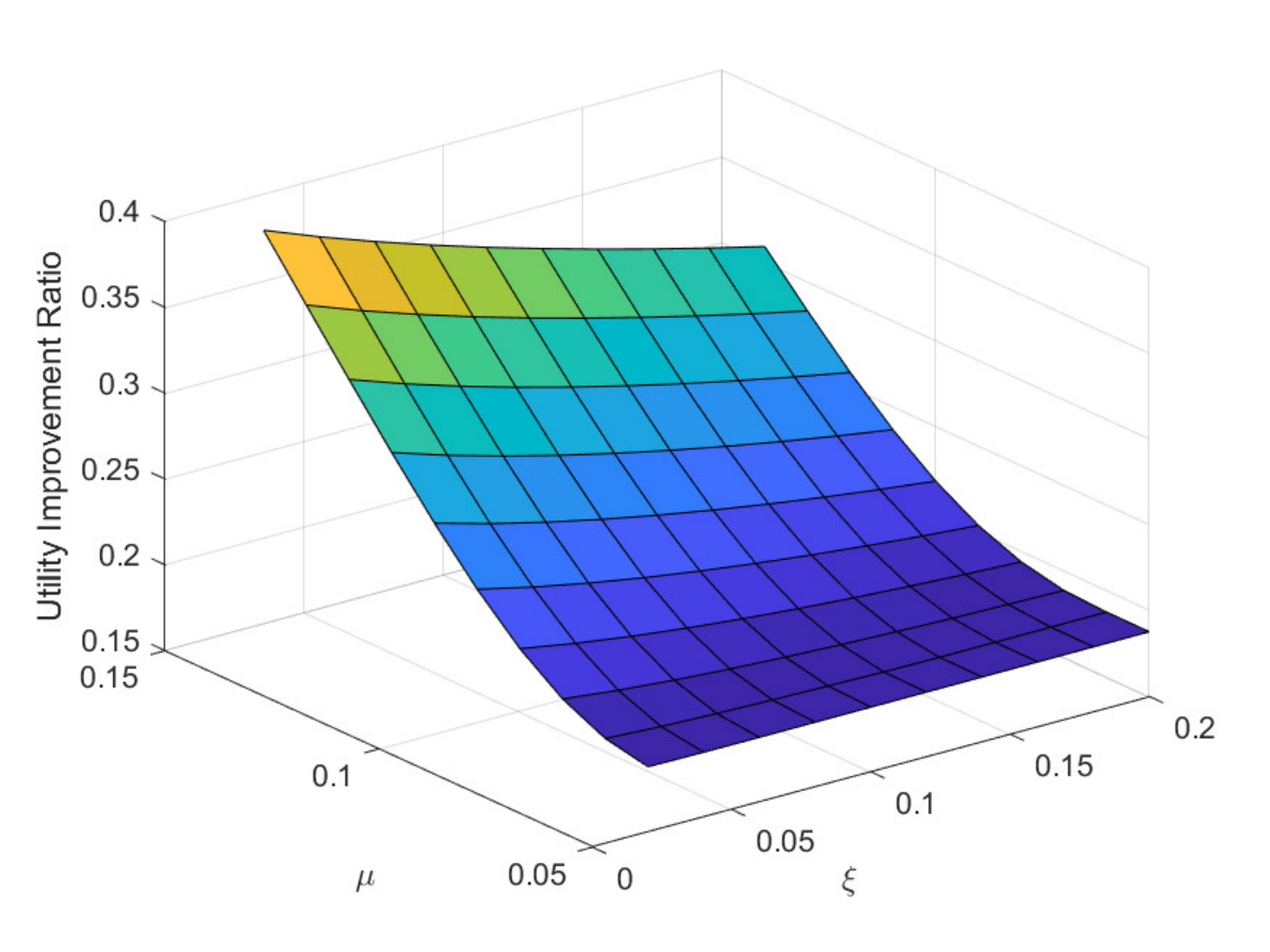}
	\caption{The impacts of expected return and transaction cost on utility improvement ratio}  \label{ratio}
\end{figure}
Fig. \ref{ratio} exhibits the impacts of the expected return of the tontine account  and proportional transaction cost on the utility improvement ratio when $x=600$.
The results show that the ratio is positively correlated with the expected return and negatively correlated with the proportional transaction cost, which is consistent with common sense. Moreover, when the expected return is small, this ratio is hardly affected by the changes of the transaction cost. In this circumstance, the tontine account is less attractive and the retiree will reduce  the tontine participation rate  as well as the transaction frequency. In the extreme case, when the expected return is $\mu=r=0.05$ and the proportional transaction cost is  $\xi=20\%$, the ratio is $18.77\%$. Even if the return rate of tontine account is not high and the transaction cost is large, the retirees will benefit considerably from the longevity credit by participating in the tontine account.


\vskip 15pt
\setcounter{equation}{0}
\section{{ {\bf Conclusions}}}\label{conclude}
\vskip 5pt
In this paper, we propose a new mechanism of modern tontine. The retiree dynamically allocates the wealth into the bequest account and the tontine account to achieve the consumption and bequest utilities. Particularly, each transaction between the two accounts incurs the transaction cost. The optimization problem is a combined stochastic and impulse control problem with infinite time horizon.  Using a WDPP, we prove that the value function is a viscosity solution of a HJBQVI.  Furthermore, the uniqueness of the HJBQVI and the continuity of the value function are characterized by a comparison theorem. The main findings of this paper are twofold. First, the V-shaped transaction region consists of two stages. The proportion allocated in the tontine account decreases in the former stage and increases in the latter stage. The gradual rise of the longevity credit first reduces the demand for risky tontine investment, and then increases the motivation of survival gambling. This fact confirms the rationalities we proposed. Second, in the latter stage, the transaction policies are hardly affected by the magnitude of bequest motive. Moreover, the risk averse attitude and the force of mortality rate become the decisive factors.

\vskip 15pt \setcounter{equation}{0}
\section*{Acknowledgments}
The authors acknowledge the
support from the National Natural Science Foundation of China (Nos.11871036,12271290).
The authors also thank the members of the group of
Actuarial Science and Mathematical Finance at the Department of
Mathematical Sciences, Tsinghua University for their feedbacks and
useful conversations.
\vskip 15pt \setcounter{equation}{0}

\vskip 15pt
\setcounter{equation}{0}
\appendix
\section{Details of the Proofs}
\subsection{Proof of Theorem \ref{WDPP}}\label{proof WDPP}
\vskip 5pt
Without loss of generality, we assume $\Omega=\{\omega|\omega\in C([0,\infty),\mathbb{R}),\omega(0)=0\}$. $\mathbb{P}$ is the Wiener measure, $W$ is the canonical Brownian motion,  and $\{\mathcal{F}_t,t\geq 0\}$ is the augment filtration of $W$.

The proof of \eqref{DPP1} is similar to \cite{AS2017}, we give its proof here for completeness.
Fixed $\nu=(c,\tau,\Delta)\in \mathcal{A}(t,x,y)$,  we denote  $(\theta^{\nu},c^{\nu}(s),X^{\nu}(s),Y^{\nu}(s))$ by $(\theta,c(s),X(s),Y(s))$ for  simplicity. Fixed $\omega\in \Omega$, we define $\nu^{\theta,\omega}(\omega',s)=\nu(\omega\overset{\theta}\oplus\omega',s)$ for $s\geq \theta(\omega)$, where
\begin{equation*}
	\omega\overset{\theta}\oplus\omega'=\begin{cases}\omega_z,&z\in[0,\theta(\omega)),\\
		\omega'_{z-\theta(\omega)}+\omega_{\theta(\omega)},&z\in[\theta(\omega),\infty).
	\end{cases}			
\end{equation*}
It is then clear that $\nu^{\theta,\omega}\in \mathcal{A}(\theta(\omega),X(\theta(\omega)-),Y(\theta(\omega)-))$, thus
\begin{align*}
	&	\mathbb{E}\left[\int\limits_t^{+\infty}D(t,s)\bar{U}(s,c(s),X(s))\mathrm{d}s\Bigg|\mathcal{F}_{\theta}\right](\omega)=\int\limits_t^{\theta(\omega)}D(t,s)\bar{U}(s,c(s),X(s))\mathrm{d}s\\
	&+D(t,\theta(\omega))\int_{\Omega}\int_{\theta(\omega)}^{\infty}D(\theta(\omega),s)\bar{U}(s,c^{\nu^{\theta,\omega}}(\omega',s),X^{\nu^{\theta,\omega}}(\omega',s))\mathrm{d}s\mathbb{P}(\mathrm{d}\omega')\\
	&\leq\int\limits_t^{\theta(\omega)}D(t,s)\bar{U}(s,c(s),X(s))\mathrm{d}s+D(t,\theta(\omega))V(\theta(\omega),X(\theta(\omega)-),Y(\theta(\omega)-))\\
	&\leq\int\limits_t^{\theta(\omega)}D(t,s)\bar{U}(s,c(s),X(s))\mathrm{d}s+D(t,\theta(\omega))V^*(\theta(\omega),X(\theta(\omega)-),Y(\theta(\omega)-)).
\end{align*}
Taking expectations on the two sides, and by the arbitrariness of $\nu$, we obtain \eqref{DPP1}.

To prove \eqref{DPP2}, we denote the second row of \eqref{DPP2} by $\underline{V}(t,x,y)$. For any $\epsilon>0$, we can choose a $\nu^{\epsilon}\in \mathcal{A}(t,x,y)$ such that
\begin{align}
	&\underline{V}(t,x,y)<\epsilon+\nonumber\\ &\mathbb{E}\left\{\int_t^{\theta^{\nu^{\epsilon}}}\!\!\!D(t,s)\bar{U}(s,c^{\nu^{\epsilon}}(s),X^{\nu^{\epsilon}}(s))\mathrm{d}s+D(t,\theta^{\nu^{\epsilon}})\varphi(\theta^{\nu^{\epsilon}},X^{\nu^{\epsilon}}(\theta^{\nu^{\epsilon}}),Y^{\nu^{\epsilon}}(\theta^{\nu^{\epsilon}}))\right\}.\label{condition1}
\end{align}
We assume that $\theta^{\epsilon}$ takes only countable values, i.e., $\theta^{\epsilon}(\omega)\in\{t_m,m\in \mathbb{N}\}$.
We proceed to find a countable cover of $\{t_m\}\times \mathbb{R}^2_+$ for some fixed $m\in\mathbb{N} $, and divide the cover into several cases:
\begin{itemize}
	\item $S_1=\{t_m\}\times(0,\infty)^2$
	
	For any $\zeta=(t_m,\bar{x},\bar{y})\in S_1$, we define a subset of $S_1$ as follows:
	\begin{equation*}
		\mathcal{R}^1_{\epsilon}(\zeta)=\{(t_m,x',y')\in S_1|x'>\bar{x},y'>\bar{y},\varphi(t_m,x',y')<\varphi(t_m,\bar{x},\bar{y})+\epsilon\}.
	\end{equation*}
	Since $\varphi$ is continuously differential,  $\mathcal{R}^1_{\epsilon}(\zeta)$ is relatively open in $S_1$ and  \\$\cup_{\zeta\in S_1}\mathcal{R}^1_{\epsilon}(\zeta)$ is an open cover of $S_1$. Hence, we can extract a countable subcover
	\begin{equation*}
		S_1=\cup_{n=1}^{\infty}\mathcal{R}^1_{\epsilon}(\zeta^1_n) ,
	\end{equation*}
	where $\zeta^1_n\in S_1$ is different from each other.
	\item$S_2=\{t_m\}\times(0,\infty)\times\{0\}$
	
	For any $\zeta=(t_m,\bar{x},0)\in S_2$, we define a subset of $S_2$ as follows:
	\begin{equation*}
		\mathcal{R}^2_{\epsilon}(\zeta)=\{(t_m,x',0)\in S_2|x'>\bar{x},\varphi(t_m,x',0)<\varphi(t_m,\bar{x},0)+\epsilon\}.
	\end{equation*}
	Similarly, we can extract a countable subcover
	\begin{equation*}
		S_2=\cup_{n=1}^{\infty}\mathcal{R}^2_{\epsilon}(\zeta^2_n) ,
	\end{equation*}
	where $\zeta^2_n\in S_2$ is different from each other.
	\item$S_3=\{t_m\}\times\{0\}\times(0,\infty)$
	
	For any $\zeta=(t_m,0,\bar{y})\in S_3$, we define a subset of $S_3$ as follows:
	\begin{equation*}
		\mathcal{R}^3_{\epsilon}(\zeta)=\{(t_m,0,y')\in S_3|y'>\bar{y},\varphi(t_m,0,y')<\varphi(t_m,0,\bar{y})+\epsilon\}.
	\end{equation*}
	Similarly, we can extract a countable subcover
	\begin{equation*}
		S_3=\cup_{n=1}^{\infty}\mathcal{R}^3_{\epsilon}(\zeta^3_n) ,
	\end{equation*}
	where $\zeta^3_n\in S_3$ is different from each other.
	\item$S_4=(t_m,0,0)$
	
	It	is only one point. We denote it by 	
	\begin{equation*}
		S_4=\mathcal{R}^4_{\epsilon}(\zeta^4_1).
	\end{equation*}
\end{itemize}

Considering all the $\{t_m\}$  and  rearranging the collections, we obtain a countable cover of $\{t_m,m\in\mathbb{N}\}\times \mathbb{R}^2_+=\cup_{n=1}^{\infty}\mathcal{R}_{\epsilon}(\zeta_n)$. To get a disjoint union, we define
\begin{equation*}
	A_1=\mathcal{R}_{\epsilon}(\zeta_1),A_2=\mathcal{R}_{\epsilon}(\zeta_2)\setminus A_1,
\end{equation*}
and iteratively define
\begin{equation*}
	A_n=\mathcal{R}_{\epsilon}(\zeta_n)\setminus (\cup_{k=1}^{n-1}A_k).
\end{equation*}
Then, $\{A_n\}_{n\in \mathbb{N}}$ satisfies:
\begin{itemize}
	\item $A_n$ is disjoint with each other and $\{t_m,m\in\mathbb{N}\}\times \mathbb{R}^2_+=\cup_{n=1}^{\infty}A_n$.
	\item For $\zeta_n=(\bar{t}_n,\bar{x}_n,\bar{y}_n)$ \footnote{$\zeta_n$ may be not in $A_n$.} and $\eta=(t',x',y')\in A_n$, we have $t'= \bar{t}_n $, $x'\geq\bar{x}_n$, $y'\geq\bar{y}_n$ and  $\varphi(\eta)\leq\varphi(\zeta_n)+\epsilon$.
	\item If $\nu_n\in \mathcal{A}(\zeta_n)$,
	then for all $\eta\in A_n$, we have $\nu_n\in \mathcal{A}(\eta)$.
\end{itemize}
By the definition of $V(\zeta_n)$, we can choose $\nu_n\in \mathcal{A}(\zeta_n)$ such that
\begin{equation}
	\varphi(\zeta_n)\leq V(\zeta_n)\leq \epsilon+
	\mathbb{E}\int\limits_{\bar{t}_n}^{+\infty}D(\bar{t}_n,s)\bar{U}(s,c^{\nu_n}(s),X^{\nu_n}(s))\mathrm{d}s.\label{c2}
\end{equation}
Then for all $\eta=(\bar{t}_n,x',y')\in A_n$, $\nu_n\in\mathcal{A}(\eta)$ and
\begin{equation}
	\varphi(\eta)-\epsilon \leq \epsilon+\mathbb{E}
	\int\limits_{\bar{t}_n}^{+\infty}D(\bar{t}_n,s)\bar{U}(s,c^{\nu_n}(s),X^{\nu_n}(s))\mathrm{d}s.\label{c1}
\end{equation}
Note that we generalize the symbol here. $X^{\nu_n}(\cdot)$ in the right hand of \eqref{c2} and \eqref{c1} have initial values  $\bar{x}_n$ and $x'$, respectively.
For $(t',x',y')\in\{t_m,m\in\mathbb{ N}\}\times\mathbb{R}^2_+)\cap A_n$, we define $I(t',x',y')=n$.  For $\omega\in \Omega$, we  denote $N(\omega)=I(\theta^{\nu^{\epsilon}}(\omega),X^{\nu^{\epsilon}}(\theta^{\nu^{\epsilon}}(\omega)),\\Y^{\nu^{\epsilon}}(\theta^{\nu^{\epsilon}}(\omega)))$. Moreover, we establish the following policy ($\theta=\theta^{\nu^{\epsilon}}$):
\begin{equation*}
	\nu^*(\omega\overset{\theta}\oplus\omega',s)=\begin{cases}
		v^{\epsilon}(\omega,s), &s\in [t,\theta(\omega)],\\
		v^{N(w)}(\omega',s),&s\in[\theta(\omega),\infty).
	\end{cases}
\end{equation*}
It follows that
\begin{align*}
	&\mathbb{E}\left[ \int\limits_{{t}}^{+\infty}D(t,s)\bar{U}(s,c^{\nu^{*}}(s),X^{\nu^{*}}(s))\mathrm{d}s\Bigg|\mathcal{F}_{\theta}\right](\omega)\\
	&= \int\limits_{{t}}^{\theta(\omega)}D(t,s)\bar{U}(s,c^{\nu^{\epsilon}}(s),X^{\nu^{\epsilon}}(s))\mathrm{d}s\\
	&+D(t,\theta(\omega))\int_{\Omega}\int_{\theta(\omega)}^{\infty}D(\theta(\omega),s)\bar{U}(s,c^{\nu^{N(\omega)}}(\omega',s),X^{\nu^{N(\omega)}}(\omega',s))\mathrm{d}s\mathbb{P}(\mathrm{d}\omega')
	\\
	&\geq \!\!\!\! \int\limits_{{t}}^{\theta(\omega)}D(t,s)\bar{U}(s,c^{\nu^{\epsilon}}(s),X^{\nu^{\epsilon}}(s))\mathrm{d}s+D(t,\theta(\omega))(\varphi(\theta(\omega),X^{\nu^{\epsilon}}(\theta(\omega)),Y^{\nu^{\epsilon}}(\theta(\omega)))-2\epsilon),
\end{align*}
where we employ the definition of $v^{*}$ and \eqref{c1}, respectively. Taking the expectation and using \eqref{condition1}, we obtain
\begin{equation*}
	\mathbb{E} \int\limits_{{t}}^{+\infty}D(t,s)\bar{U}(s,c^{\nu^*}(s),X^{\nu^*}(s))\mathrm{d}s\geq\underline{V}(t,x,y)-3\epsilon.
\end{equation*}
Since $\epsilon $ is arbitrary, \eqref{DPP2} is valid. Finally, if $\theta^{\nu^{\epsilon}}$ takes uncountable values, we choose $\theta_n$ which takes countable values and monotonically decreases to $\theta^{\nu^{\epsilon}}$. Then, using dominated convergence theorem and Fatou's lemma, we have
\begin{align*}
	&\varliminf_{n\to\infty}\mathbb{E}\left\{\int_t^{\theta_n}\!\!\!D(t,s)\bar{U}(s,c^{\nu^{\epsilon}}(s),X^{\nu^{\epsilon}}(s))\mathrm{d}s+D(t,\theta_n)\varphi(\theta_n,X^{\nu^{\epsilon}}(\theta_n),Y^{\nu^{\epsilon}}(\theta_n))\right\}	\\
	&=
	\mathbb{E}\left\{\int_t^{\theta^{\nu^{\epsilon}}}\!\!\!D(t,s)\bar{U}(s,c^{\nu^{\epsilon}}(s),X^{\nu^{\epsilon}}(s))\mathrm{d}s\right\}\!+\!\varliminf_{n\to\infty}\mathbb{E}[D(t,\theta_n)\varphi(\theta_n,X^{\nu^{\epsilon}}(\theta_n),Y^{\nu^{\epsilon}}(\theta_n))]\\
	&\geq
	\mathbb{E}\left\{\int_t^{\theta^{\nu^{\epsilon}}}\!\!\!D(t,s)\bar{U}(s,c^{\nu^{\epsilon}}(s),X^{\nu^{\epsilon}}(s))\mathrm{d}s\right\}\!+\!\mathbb{E}\varliminf_{n\to\infty}[D(t,\theta_n)\varphi(\theta_n,X^{\nu^{\epsilon}}(\theta_n),Y^{\nu^{\epsilon}}(\theta_n))]\\
	&= \mathbb{E}\left\{\int_t^{\theta^{\nu^{\epsilon}}}\!\!\!D(t,s)\bar{U}(s,c^{\nu^{\epsilon}}(s),X^{\nu^{\epsilon}}(s))\mathrm{d}s+D(t,\theta^{\nu^{\epsilon}})\varphi(\theta^{\nu^{\epsilon}},X^{\nu^{\epsilon}}(\theta^{\nu^{\epsilon}}),Y^{\nu^{\epsilon}}(\theta^{\nu^{\epsilon}}))\right\}.
\end{align*}
Therefore, without loss of generality, we can replace $\theta^{\nu^{\epsilon}}$ by some $\theta_N$ which takes only countable values.

\subsection{Proof of Theorem \ref{vis char}} \label{proof of vis char }
First, we  prove that $V_*$ is a viscosity supersolution of $\eqref{HJBQVI}$, and the proof is similar to the proof of Proposition 4.3.1 in \cite{Pham2009}. We take $(t,x,y)\in \mathbb{R}_+\times (\mathbb{R}^2_+\setminus{(0,0)})$ and $\varphi\in C^{1,2}(\mathbb{R}_+\times (\mathbb{R}^2_+\setminus{(0,0)}))$ such that $\varphi-V_*$ attains maximum $0$ at $(t,x,y)$. Using Point 9 of Proposition \ref{VM}, we have \begin{equation*}
	V_*(t,x,y)\geq \mathcal{M}[V_*](t,x,y)\geq \mathcal{M}[V_*]_*(t,x,y).
\end{equation*}
Thus, for $x\neq 0$, it is enough to prove
\begin{equation*}
	\mathcal{L}[\varphi](t,x,y)-b\lambda(t)U(x)-f(\varphi_x(t,x,y))\geq 0,
\end{equation*}
or equivalently
\begin{equation*}
	\mathcal{L}[\varphi](t,x,y)-b\lambda(t)U(x)-(U(c)-c\varphi_x(t,x,y))\geq 0,
\end{equation*}
for all $c\geq 0$. We choose $(t_n,x_n,y_n)\in \mathbb{R}_+\times (\mathbb{R}^2_+\setminus{(0,0)})$ such that $\lim\limits_{n\to \infty}(t_n,x_n,y_n,\\V(t_n,x_n,y_n))=(t,x,y,V_*(t,x,y))$ and $0<\frac{x}{2}<x_n$ for all $n$. Fix $c\geq 0$, processes $\{X^n(s)\}_{s\geq t_n}$ and $\{Y^n(s)\}_{s\geq t_n}$ are defined by
\begin{align*}
	&X^n(s)=x_n+\int_{t_n}^s(rX^n(u)-c)\mathrm{d}u,\\
	&Y^n(s)=y_n+\int_{t_n}^s[\mu+\lambda(u)]Y^n(u)\mathrm{d}u+\int_{t_n}^{s}\sigma Y^n(u)\mathrm{d}W(u),
\end{align*}
or equivalently
\begin{align*}
	&X^n(s)=\mathrm{e}^{r(s-t_n)}x_n-c\mathrm{e}^{rs}\int_{t_n}^s\mathrm{e}^{-ru}\mathrm{d}u,\\
	&Y^n(s)=y_n\mathrm{e}^{\int_{t_n}^s[\mu-\frac{1}{2}\sigma^2+\lambda(u)]\mathrm{d}u+\sigma[W(s)-W(t_n)]}.
\end{align*}
Now we define the stopping time $\theta_n$ as follows:
\begin{equation*}
	\theta_n=\inf\{s\geq t_n|(s,X^n(s),Y^n(s))\notin B_{\delta}(t_n,x_n,y_n)\},
\end{equation*}
where $B_{\delta}(t_n,x_n,y_n)=\{(\bar{t},\bar{x},\bar{y})\in\mathbb{R}_+\times (\mathbb{R}^2_+\setminus{(0,0)})|\|(t_n,x_n,y_n)-(\bar{t},\bar{x},\bar{y})\|<\delta\}$ and $0<\delta<\frac{x}{4}$. Here and after $\|\cdot\|$ represents the $L^{\infty}-norm$, that is,  $\|({t},{x},{y})\|=\max\{|t|,|x|,|y|\}$.
We define $\alpha_n=V(t_n,x_n,y_n)-\varphi(t_n,x_n,y_n)\geq0$. According to the selection of $(t_n,x_n,y_n)$, we have $\lim\limits_{n\to \infty}\alpha_n=0$. We then choose $\beta_n>0$ such that $ \lim\limits_{n\to \infty}\beta_n=0$ and $\lim\limits_{n\to \infty}\alpha_n/\beta_n=0$. Moreover, we define the stopping time $\tau_n=\theta_n\wedge(t_n+\beta_n)$. According to the WDPP \eqref{DPP2}, we have
\begin{equation*}
	V(t_n,x_n,y_n)\geq\mathbb{E}\{\int_{t_n}^{\tau_n}D(t_n,s)\bar{U}(s,c,X^n(s))\mathrm{d}s+D(t_n,\tau_n)\varphi(\tau_n,X^n(\tau_n),Y^n(\tau_n))\}.
\end{equation*}
And then
\begin{align*}
	&\varphi(t_n,x_n,y_n)\!+\!\alpha_n\!\geq\!\mathbb{E}\{\int_{t_n}^{\tau_n}\!\!D(t_n,s)\bar{U}(s,c,X^n(s))\mathrm{d}s\!+\!D(t_n,\tau_n)\varphi(\tau_n,X^n(\tau_n),Y^n(\tau_n))\}\\
	&=\mathbb{E}\{\int_{t_n}^{\tau_n}\!\!D(t_n,s)[\bar{U}(s,c,X^n(s))-\mathcal{L}[\varphi](s,X^n(s),Y^n(s))-c\varphi_x(s,X^n(s),Y^n(s))]\mathrm{d}s\}\\
	&\phantom{eeeeeeeeeeeeeeeeeeeeeeeeeeeeeeeeeeeeedddddddddddddddddddddddd}+\varphi(t_n,x_n,y_n).
\end{align*}
According to the definition of $X^n$ and $Y^n$ and $\tau_n$, we know  $\tau_n(\omega)=t_n+\beta_n$ when $n\geq N(\omega)$ for some $N(\omega)\in \mathbb{N}$ large enough.  Hence, using the mean value theorem, we get
\begin{align*}
	&\lim\limits_{n\to \infty}	\frac{\int_{t_n}^{\tau_n}D(t_n,s)[\bar{U}(s,c,X^n(s))-\mathcal{L}[\varphi](s,X^n(s),Y^n(s))-c\varphi_x(s,X^n(s),Y^n(s))]\mathrm{d}s}{\beta_n}\\
	&=-\mathcal{L}[\varphi](t,x,y)-c\varphi_x(t,x,y)+U(c)+b\lambda(t)U(x),
\end{align*}
and using the dominated convergence theorem, we have
\begin{equation*}
	-\mathcal{L}[\varphi](t,x,y)-c\varphi_x(t,x,y)+U(c)+b\lambda(t)U(x)\leq 0.
\end{equation*}
Therefore, we complete the proof when $x>0$. Letting $c=0$ in the above proof, we can obtain a similar proof for the case of $x=0$.

Second, we prove that $V^*$ is a viscosity subsolution of \eqref{HJBQVI}. We take $(t,x,y)\in \mathbb{R}_+\times (\mathbb{R}^2_+\setminus{(0,0)})$ and $\varphi\in C^{1,2}(\mathbb{R}_+\times (\mathbb{R}^2_+\setminus{(0,0)}))$ such that $\varphi-V^*$ attains minimum $0$ at $(t,x,y)$. If  $V^*(t,x,y)\leq \mathcal{M}[V^*](t,x,y)$, the subsolution inequality holds trivially.
Thus, we assume
\begin{equation}
	V^*(t,x,y)>\mathcal{M}[V^*](t,x,y).\label{contradiction1}
\end{equation}
We want to show
\begin{equation*}
	\mathcal{L}[\varphi](t,x,y)-b\lambda(t)U(x)-f(\varphi_x(t,x,y))\leq 0.
\end{equation*}
We argue by contradiction and assume that there exists a $\eta>0$ such that
\begin{align*}
	\mathcal{L}[\varphi](t,x,y)-b\lambda(t)U(x)-f(\varphi_x(t,x,y))>\eta.
\end{align*}
A by-product is $\varphi_x(t,x,y)>0$.
Inspired by \cite{AS2017}, we can choose a $\alpha>0$ such that
\begin{equation*}
	\mathcal{L}[\varphi](t,x,y)-b\lambda(t)U(x)-f(\varphi_x(t,x,y)-\alpha)>\eta.
\end{equation*}
Then, there exists a $\delta>0$ such that $\overline{B_{2\delta}(t,x,y)}\subset \mathbb{R}_+\times (\mathbb{R}^2_+\setminus{(0,0)}) $. For all $(\bar{t},\bar{x},\bar{y})\in B_{2\delta}(t,x,y)$, we have
\begin{equation*}
	\mathcal{L}[\varphi](\bar{t},\bar{x},\bar{y})-b\lambda(\bar{t})U(\bar{x})-f(\varphi_x(\bar{t},\bar{x},\bar{y})-\alpha)>\eta,
\end{equation*}
or equivalently
\begin{equation*}
	\mathcal{L}[\varphi](\bar{t},\bar{x},\bar{y})-b\lambda(\bar{t})U(\bar{x})-\sup_{c\geq 0}\{U(c)+c\alpha-c\varphi_x(\bar{t},\bar{x},\bar{y})\}>\eta.
\end{equation*}
Therefore, for all $c\geq0 $ and  $(\bar{t},\bar{x},\bar{y})\in \overline{B_{2\delta}(t,x,y)}$, we have
\begin{equation*}
	\mathcal{L}[\varphi](\bar{t},\bar{x},\bar{y})-b\lambda(\bar{t})U(\bar{x})-U(c)+c\varphi_x(\bar{t},\bar{x},\bar{y})>\eta+c\alpha.
\end{equation*}
Now we choose $\{(t_n,x_n,y_n)\}_{n\geq 1}$ such that  $\lim\limits_{n\to \infty}(t_n,x_n,y_n,V(t_n,x_n,y_n))=(t,x,y,\\V^*(t,x,y))$ and $  B_{\delta}(t_n,x_n,y_n)\subset B_{2\delta}(t,x,y)$  for all $n\in\mathbb{N}$. We also choose $\{\epsilon_n\}_{n\geq 1}$ such that $0<\epsilon_n<\delta$ and $\epsilon_n\downarrow 0$. For $n\geq 1$, we choose $\nu^n=(c^n,\tau^n,\Delta^n)\in \mathcal{A}(t_n,x_n,y_n)$ such that
\begin{equation*}
	V(t_n,x_n,y_n)\leq \mathbb{E}\!\left[\int\limits_{t_n}^{+\infty}D(t_n,s)\bar{U}(s,c^n(s),X^{\nu^n}(s))\mathrm{d}s\right]+\epsilon_n.
\end{equation*}
Define
\begin{equation*}
	\theta^n=\tau^n_1\wedge\inf\{s\geq t_n|(s,X^{\nu^n}(s),Y^{\nu^n}(s))\notin B_{\delta} (t_n,x_n,y_n)\}.
\end{equation*}
Based on the proof of \eqref{DPP1}, we have
\begin{align*}
	&V(t_n,x_n,y_n)\\&\leq\!\epsilon_n+\mathbb{E}\left[\int\limits_{t_n}^{\theta^n}\!\!D(t_n,s)\bar{U}(s,c^{n}(s),X^{\nu^n}(s))\mathrm{d}s\!+\!D(t_n,\theta^n)\varphi(\theta^n,X^{\nu^n}(\theta^n-),Y^{\nu^n}(\theta^n-))\right]\!\\
	&=\!\epsilon_n+\varphi(t_n,x_n,y_n)\\
	&\!+\!\mathbb{E}\!\!\int\limits_{t_n}^{\theta^n}\!\!D(t_n,s)[\bar{U}(s,c^{n},X^{\nu^n}(s))\!-\!\mathcal{L}[\varphi](s,X^{\nu^n}(s),Y^{\nu^n}(s))\!-\!c\varphi_x(s,X^{\nu^n}(s),Y^{\nu^n}(s))]\mathrm{d}s\\
	&\leq \epsilon_n+ \varphi(t_n,x_n,y_n)+\mathbb{E}\int_{t_n}^{\theta^n}D(t_n,s)[-\eta-\alpha c^n(s)]\mathrm{d}s\\&\leq\epsilon_n+\varphi(t_n,x_n,y_n)+D(t_n,\theta^n)\mathbb{E}\int_{t_n}^{\theta^n}[-\eta-\alpha c^n(s)]\mathrm{d}s.
\end{align*}
Thus,
\begin{align*}
	&	\lim\limits_{n\to\infty}\mathbb{E}[\theta^n-t_n]=0,\\&\lim\limits_{n\to\infty}\mathbb{E}\int_{t_n}^{\theta^n}c^n(s)\mathrm{d}s=0.
\end{align*}
To find an upper bound for $V(t_n,x_n,y_n)$, we consider two disjoint sets $A_n=\{\tau^n_1>\theta^n\}$ and $A_n^c=\{\tau^n_1\leq\theta^n\}$. We proceed to prove  $\lim\limits_{n\to\infty}\mathbb{P}(A_n)=0$. We consider the processes before the first impulse happens:
\begin{align*}
	&X^n(s)=x_n+\int_{t_n}^s(rX^n(u)-c^n(u))\mathrm{d}u,\\
	&Y^n(s)=y_n+\int_{t_n}^s[\mu+\lambda(u)]Y^n(u)\mathrm{d}u+\int_{t_n}^{s}\sigma Y^n(u)\mathrm{d}W(u),
\end{align*}
or equivalently
\begin{align*}
	&X^n(s)=\mathrm{e}^{r(s-t_n)}x_n-\mathrm{e}^{rs}\int_{t_n}^s\mathrm{e}^{-ru}c^n(u)\mathrm{d}u,\\
	&Y^n(s)=y_n\mathrm{e}^{\int_{t_n}^s[\mu-\frac{1}{2}\sigma^2+\lambda(u)]\mathrm{d}u+\sigma[W(s)-W(t_n)]}.
\end{align*}
Define
\begin{equation*}
	\tilde{\theta}^n=\inf\{s\geq t_n|(s,X^{n}(s),Y^{n}(s))\notin B_{\delta} (t_n,x_n,y_n)\},
\end{equation*}
and then $A_n\subset \{\omega\in\Omega|\tau_1^n(\omega)>\tilde{\theta}^n(\omega)\}$. Now we turn to proving
\begin{equation*}
	\lim\limits_{n\to\infty}\mathbb{P}(A_n)=0.
\end{equation*}
If it is not valid, then there will exist a $0<\epsilon<1$ such that
\begin{equation*}
	\varlimsup_{n\to\infty}\mathbb{P}(A_n)=2\epsilon.
\end{equation*}
Without loss of generality, we assume  $\mathbb{P}(A_n)>\epsilon$ for all $n\in \mathbb{N}$. As $\theta^n$ is bounded  and $ \lim\limits_{n\to\infty}\mathbb{E}\int_{t_n}^{\theta^n}c^n(s)\mathrm{d}s=0$ , $ \mathrm{e}^{r\theta^n}\int_{t_n}^{\theta^n}\mathrm{e}^{-ru}c^n(u)\mathrm{d}u$ converges to $0$ in probability.  Thus, for $\frac{\delta}{2}>0$,
\begin{equation*}
	\lim\limits_{n\to \infty}\mathbb{P}(\mathrm{e}^{r\theta^n}\int_{t_n}^{\theta^n}\mathrm{e}^{-ru}c^n(u)\mathrm{d}u\geq\frac{\delta}{2})=0.
\end{equation*}
Letting $B_n=\{\mathrm{e}^{r\theta^n}\int_{t_n}^{\theta^n}\mathrm{e}^{-ru}c^n(u)\mathrm{d}u>\frac{\delta}{2}\}$, we assume  $\mathbb{P}(B_n)<\frac{\epsilon}{4}$ for all $n\in\mathbb{N}$. By the definition of $Y^n$  and the fact that  continuous function is uniformly continuous on a compact set, it is easy to see
\begin{equation*}
	\lim\limits_{0<\tilde{\delta}\downarrow 0}\sup_{n\in\mathbb{N}}\sup_{t_n\leq s\leq t_n+\tilde{\delta}}|Y^n(s)-y_n|=0\,\, a.s.\, \mathbb{P}.
\end{equation*}
Thus, there exists a $0<\tilde{\delta}<\delta$ such that
\begin{equation*}
	\mathbb{P}(\sup_{t_n\leq s\leq t_n+\tilde{\delta}}|Y^n(s)-y_n|\geq\delta)<\frac{\epsilon}{4},\, \forall n\in \mathbb{N}.
\end{equation*}
Letting $C_n=\{\sup\limits_{t_n\leq s\leq t_n+\tilde{\delta}}|Y^n(s)-y_n|\geq{\delta}\}$, we assume  $\sup\limits_{n}|\mathrm{e}^{r\tilde{\delta}}-1|x_n<\frac{\delta}{2}$. Then,
\begin{equation*}
	A_n\cap B_n^c\cap C_n^c\subset \{\tau_1^n>\tilde{\theta}^n,\tilde{\theta}^n-t_n>\tilde{\delta}\}=\{\tau_1^n>{\theta}^n,{\theta}^n-t_n>\tilde{\delta}\}\subset \{\theta^n-t_n>\tilde{\delta}\},
\end{equation*}
that is,
\begin{equation*}
	\mathbb{P}(\theta^n-t_n>\tilde{\delta})\geq \mathbb{P}(A_n\cap B_n^c\cap C_n^c)\geq\mathbb{P}(A_n)-\mathbb{P}(B_n)-\mathbb{P}(C_n)\geq \frac{\epsilon}{2},
\end{equation*}
which is in contradiction with $\lim\limits_{n\to\infty}\mathbb{E}[\theta^n-t_n]=0$. Therefore, $\lim\limits_{n\to\infty}\mathbb{P}(A_n)=0$.  But then,
\begin{align*}
	&V(t_n,x_n,y_n)\leq\epsilon_n+\mathbb{E}\left[\int\limits_{t_n}^{\theta^n}D(t_n,s)\bar{U}(s,c^{n}(s),X^{\nu^n}(s))\mathrm{d}s\right]+\\&\mathbb{E}\!\!\left\{D(t_n,\theta^n)\left[V^*(\theta^n,X^{\nu^n}(\theta^n\!-\!),Y^{\nu^n}(\theta^n\!-\!))1_{A_n}\!\!\!+\!\mathrm{M}[V^*](\theta^n,X^{\nu^n}(\theta^n\!-\!),Y^{\nu^n}(\theta^n\!-\!))1_{A_n^c}\right]\right\}\\
	&\leq\epsilon_n+\mathbb{E}\left[\int\limits_{t_n}^{\theta^n}D(t_n,s)\bar{U}(s,c^{n}(s),X^{\nu^n}(s))\mathrm{d}s\right]+\\&\mathbb{E}\!\left\{\!\!D(t_n,\theta^n)\!\!\left[V^*(\theta^n,X^{\nu^n}(\theta^n-),Y^{\nu^n}(\theta^n-))1_{A_n}\!+\!\!\sup_{(\bar{t},\bar{x},\bar{y})\in B_{2\delta}(t,x,y)}\mathrm{M}[V^*](\bar{t},\bar{x},\bar{y})1_{A_n^c}\right]\right\}.
\end{align*}
Letting $n\to \infty$, using $c^p\leq 1+c$, $V^*(\bar{t},\bar{x},\bar{y})\leq C(1+\bar{x}+\bar{y})^p$ and the dominated convergence theorem, we have
\begin{equation*}
	V^*(t,x,y)\leq \sup_{(\bar{t},\bar{x},\bar{y})\in B_{2\delta}(t,x,y)}\mathrm{M}[V^*](\bar{t},\bar{x},\bar{y}).
\end{equation*}
Letting $\delta\to 0$, we obtain
\begin{equation*}
	V^*(t,x,y)\leq \mathrm{M}[V^*]({t},{x},{y}),
\end{equation*}
which is in contradiction with \eqref{contradiction1}. Hence, we complete the proof.

\subsection{Proof of Theorem \ref{compare}}\label{proof of compare}
We re-define $v$ on $\mathbb{R}_+\times\{0\}\times\mathbb{R}_+$ by \eqref{v}, and $v$ is still a viscosity supersolution of \eqref{HJBQVI2} in any (relative) open subset of $\mathbb{R}_+\times(0,+\infty)\times\mathbb{R}_+$.

Using Lemma \ref{supsolution}, we first choose $p<q<1$ and $C>0$, and then define
$\Phi(t,x,y)=D(0,t)\Psi(t,x,y)=CD(0,t)(1+x+y)^q$ and $K(t,x,y)=D(0,t)K(x,y)$ such that
\begin{align*}
	\min\{\mathcal{\tilde{L}}&[\Phi](t,x,y)-bD(0,t)\lambda(t)U(x)-[D(0,t)]^{\frac{1}{1-p}}f(\Phi_x(t,x,y)),\nonumber\\ &\phantom{eeeeeeeeeeeeeeeeee}\Phi(t,x,y)-\mathcal{M}[\Phi](t,x,y)\}\geq K(t,x,y).
\end{align*}
Choosing $C$	large enough, we can assume  $0\leq u,v\leq \Phi$. Fixing $0<\epsilon<1$, we define
\begin{equation*}
	v_{\epsilon}=(1-\epsilon)v+\epsilon\Phi.
\end{equation*}
Now we will prove $u\leq v_{\epsilon}$ for all $0<\epsilon<1$. Then, letting $\epsilon\to 0$, we obtain $u\leq v$. We argue by contradiction as assuming
\begin{equation*}
	\Gamma=\sup_{(t,x,y)\in\mathbb{R}_+^3}(u(t,x,y)-v_{\epsilon}(t,x,y))>0.
\end{equation*}
Denote
\begin{align*}
	&F=\argmax_{(t,x,y)\in\mathbb{R}_+^3}(u(t,x,y)-v_{\epsilon}(t,x,y)),\\
	&\tilde{F}=\{(t,x,y)\in\mathbb{R}_+^3|(u(t,x,y)-v_{\epsilon}(t,x,y))>\frac{\Gamma}{2}\}.
\end{align*}
From \eqref{uv}, we deduce that $\tilde{F}$ is bounded in $(x,y)$ and then bounded in $t$.  Because $u-v_{\epsilon}$ is upper semi-continuous,    $F\subset\tilde{F}$ is nonempty and compact. That is why we consider \eqref{HJBQVI2} rather than \eqref{HJBQVI}.
Thus, we can choose an open (relative)  subset of $\mathbb{R}_+^3$ such that $F\subset G$ and $G$ is bounded. We also assume  $\bar{G}\cap(\mathbb{R}_+\times\{0\}\times\mathbb{R}_+)=\emptyset$ when $F\cap (\mathbb{R}_+\times\{0\}\times\mathbb{R}_+)=\emptyset$.  And we further assume  $(\mathbb{R}_+\times(0,0))\cap \bar{G}=\emptyset$ due to $(\mathbb{R}_+\times(0,0))\cap F=\emptyset$.

For simplicity, we denote  $(x,y)$ by $z$, and $|\cdot|$ represents the $2$-norm. That is, $|z|=\sqrt{x^2+y^2}$. For each $n\in \mathbb{N}\cup\{0\}$, we define
\begin{equation*}
	h_n(t,z,\hat{t},\hat{z})=u(t,z)-v_{\epsilon}(\hat{t},\hat{z})-\frac{n}{2}((t-\hat{t})^2+|z-\hat{z}|^2),
\end{equation*}
and
\begin{equation*}
	\Gamma_n=\sup_{(t,z,\hat{t},\hat{z})\in \bar{G}\times \bar{G}}	h_n(t,z,\hat{t},\hat{z}).
\end{equation*}
Thus, $\Gamma_0\geq \Gamma_1\geq\cdots\geq \Gamma_n\geq\cdots\geq \Gamma$.
As $h_n\in$ USC for all $n\in\mathbb{N}\cup\{0\}$,  we can choose $(t_n,z_n,\hat{t}_n,\hat{z}_n)\in \bar{G}\times\bar{G}$ such that
\begin{equation*}
	\Gamma_n=h_n(t_n,z_n,\hat{t}_n,\hat{z}_n),\quad \forall n\in\mathbb{N}\cup\{0\}.
\end{equation*}
Considering a subsequence, we assume  that $(t_n,z_n,\hat{t}_n,\hat{z}_n)$ converges to some point in $\bar{G}\times\bar{G}$.
Using
\begin{equation*}
	\frac{n}{2}((t_n-\hat{t}_n)^2+|z_n-\hat{z}_n|^2)\leq \Gamma_0,
\end{equation*}
we  have
\begin{equation*}
	\lim\limits_{n\to\infty}(t_n,z_n,\hat{t}_n,\hat{z}_n)=(\bar{t},\bar{z},\bar{t},\bar{z})\in \bar{G}\times\bar{G}.
\end{equation*}
Then
\begin{align*}
	&0\leq \varlimsup_{n\to \infty}\frac{n}{2}((t_n-\hat{t}_n)^2+|z_n-\hat{z}_n|^2)\\&=\varlimsup_{n\to \infty}( u(t_n,z_n)-v_{\epsilon}(\hat{t}_n,\hat{z}_n)-\Gamma_n)
	\leq u(\bar{t},\bar{z})-v_{\epsilon}(\bar{t},\bar{z})-\Gamma\leq 0.
\end{align*}
Therefore,
\begin{align*}
	&\lim\limits_{n\to\infty}(t_n,z_n,\hat{t}_n,\hat{z}_n)=(\bar{t},\bar{z},\bar{t},\bar{z}),\quad \lim\limits_{n\to \infty}n((t_n-\hat{t}_n)^2+|z_n-\hat{z}_n|^2)=0,\\
	&\lim\limits_{n\to \infty}u(t_n,z_n)=u(\bar{t},\bar{z}),\quad \lim\limits_{n\to \infty}v_{\epsilon}(\hat{t}_n,\hat{z}_n)=v_{\epsilon}(\bar{t},\bar{z}),\quad \lim\limits_{n\to\infty}\Gamma_n\!=\!\Gamma\!=\!u(\bar{t},\bar{z})-v_{\epsilon}(\bar{t},\bar{z}).
\end{align*}
Thus, $(\bar{t},\bar{z})\in F$ and $(t_n,z_n,\hat{t}_n,\hat{z}_n)\in G\times G$	when $n$ is large enough.
Without loss of generality, we assume  $(t_n,z_n,\hat{t}_n,\hat{z}_n)\in G\times G$ for all $n\in \mathbb{N}$. 	

Case 1: $\bar{x}>0$.

We employ Ishii's lemma (see \cite{GIL(1992)} and \cite{FS2005}). There are $X_n \in\mathbb{S}^3$ and $Y_n\in\mathbb{S}^3$ with
\begin{equation*}
	\begin{pmatrix}
		X_n  & 0\\
		0 & -Y_n
	\end{pmatrix}\leq 3n
	\begin{pmatrix}
		1 & 0 & 0&0\\
		0 & I_2 & 0&-I_2\\
		0 & 0 & 1&0\\
		0 &- I_2 & 0&I_2\\
	\end{pmatrix},
\end{equation*}
such that
\begin{align*}
	&(n(t_n-\hat{t}_n),n(z_n-\hat{z}_n),X_n)\in \overline{\mathcal{J}}^{2,+}u(t_n,z_n),\\
	&	(n(t_n-\hat{t}_n),n(z_n-\hat{z}_n),Y_n)\in \overline{\mathcal{J}}^{2,-}v_{\epsilon}(\hat{t}_n,\hat{z}_n).
\end{align*}
Let $M_n(N_n)\in\mathbb{S}^2$ be the matrix obtained by removing the first row and the first column of  $X_n(Y_n)$, then
\begin{equation}
	\begin{pmatrix}
		M_n  & 0\\
		0 & -N_n
	\end{pmatrix}\leq 3n
	\begin{pmatrix}
		I_2&-I_2\\
		-I_2&I_2
	\end{pmatrix},\label{ine}
\end{equation}
and
\begin{align*}
	&(n(t_n-\hat{t}_n),n(z_n-\hat{z}_n),M_n)\in \overline{\mathcal{P}}^{2,+}u(t_n,z_n),\\
	&	(n(t_n-\hat{t}_n),n(z_n-\hat{z}_n),N_n)\in \overline{\mathcal{P}}^{2,-}v_{\epsilon}(\hat{t}_n,\hat{z}_n).
\end{align*}
Because $u$ is a viscosity subsolution of \eqref{HJBQVI2}, we have
\begin{equation*}
	\min\{F(t_n,z_n,n(t_n-\hat{t}_n),n(z_n-\hat{z}_n),M_n),u(t_n,z_n)-\mathcal{M}[u](t_n,z_n)\}\leq 0.
\end{equation*}
Using Lemma \eqref{Ishi}, we have
\begin{equation}
	\min\{F(\hat{t}_n,\hat{z}_n,n(t_n-\hat{t}_n),n(z_n-\hat{z}_n),N_n),v_{\epsilon}(\hat{t}_n,\hat{z}_n)-\mathcal{M}[v_{\epsilon}]_*(\hat{t}_n,\hat{z}_n)\}\geq \epsilon\tilde{k}.\label{contra2}
\end{equation}
After dropping a subsequence, we can assume
\begin{equation}
	F(t_n,z_n,n(t_n-\hat{t}_n),n(z_n-\hat{z}_n),M_n)\leq 0.\label{contra1}
\end{equation}
The proof is inspired by the proof of Step 3 of Theorem 5.4 in \cite{BS2021} and the proof of Theorem 3.8 in \cite{OS2002}. Throughout the proof, $N\in\mathbb{N}$ is a constant that may vary from line to line. We argue by contradiction and assume
\begin{equation}\label{Q1}
	u(t_n,z_n)-\mathcal{M}[u](t_n,z_n)\leq 0, \quad \forall n>N.
\end{equation}
We can then deduce  $z_n\notin S_{\emptyset}$ for all $n>N$.  According to the definition of $\Gamma_n$, we have
\begin{equation}\label{Q2}
	\Gamma=u(\bar{t},\bar{z})-v_{\epsilon}(\bar{t},\bar{z})\leq u(t_n,z_n)-v_{\epsilon}(\hat{t}_n,\hat{z}_n),\quad \forall n>N.
\end{equation}
Moreover, because $\Phi(\hat{t}_n,\hat{z}_n)$ converges to $\Phi(\bar{t},\bar{z})>0$, we suppose
\begin{equation}\label{Q3}
	\Phi(\hat{t}_n,\hat{z}_n)>\frac 34 \Phi(\bar{t},\bar{z}),\quad \forall n>N.
\end{equation}
We now choose $\zeta=\frac 12 \min\{\epsilon\Phi(\bar{t},\bar{z}), \frac 12\epsilon \tilde{k}\}>0$. The upper semi-continuity of $\mathcal{M}[u]$ yields
\begin{equation}\label{Q4}
	\mathcal{M}[u](t_n,z_n)\leq \mathcal{M}[u](\bar{t},\bar{z})+\zeta,\quad  \forall n>N,
\end{equation}
and \eqref{Q1}-\eqref{Q4} give
\begin{align}
	\Gamma\leq \mathcal{M}[u](\bar{t},\bar{z})+\zeta-v_{\epsilon}(\hat{t}_n,\hat{z}_n)\label{Q7}\\
	\leq \mathcal{M}[u](\bar{t},\bar{z})-\frac 14 \epsilon\Phi(\bar{t},\bar{z})\nonumber.
\end{align}
If $z\in \partial S_{\emptyset}$, $\mathcal{M}[u](\bar{t},\bar{z})=0$. Thus we have $\Gamma\leq-\frac 14 \epsilon\Phi(\bar{t},\bar{z})<0$, which is in contradiction with $\Gamma>0$. Therefore, $z\in \mathbb{R}^2_+\setminus\bar{S_{\emptyset}}$. We then assume
\begin{equation}\label{Q5}
	\mathcal{M}[v_{\epsilon}]_*(\hat{t}_n,\hat{z}_n)>\mathcal{M}[v_{\epsilon}]_*(\bar{t},\bar{z})-\frac 12 \epsilon\tilde{k}=\mathcal{M}[v_{\epsilon}](\bar{t},\bar{z})-\frac 12 \epsilon\tilde{k},\quad \forall n>N.
\end{equation}
Together with \eqref{contra2}, \eqref{Q7} and \eqref{Q5}, we have
\begin{equation*}
	\Gamma\leq \mathcal{M}[u](\bar{t},\bar{z})-\mathcal{M}[v_{\epsilon}](\bar{t},\bar{z})+\zeta-\frac 12 \epsilon \tilde{k}\leq \Gamma-\frac 14 \epsilon \tilde{k},
\end{equation*}
which is a contradiction. Moreover, \eqref{contra2} and \eqref{contra1} give
\begin{equation*}
	F(\hat{t}_n,\hat{z}_n,n(t_n-\hat{t}_n),n(z_n-\hat{z}_n),N_n)-		F(t_n,z_n,n(t_n-\hat{t}_n),n(z_n-\hat{z}_n),M_n)\geq \epsilon\tilde{k}.
\end{equation*}
Similar to the proof of Example 3.6 in \cite{GIL(1992)},   using \eqref{ine}, we can prove that there is a constant $L>0$ such that
\begin{align*}
	F(\hat{t}_n,\hat{z}_n,n(t_n-\hat{t}_n),n(z_n-\hat{z}_n),N_n)-		F(t_n,z_n,n(t_n-\hat{t}_n),n(z_n-\hat{z}_n),M_n)
	\\ \leq Ln((t_n-\hat{t}_n)^2+|z_n-\hat{z}_n|^2)+bD(0,t_n)\lambda(t_n)U(x_n)-bD(0,\hat{t}_n)\lambda(\hat{t}_n)U(\hat{x}_n)\\
	+\left(D(0,t_n)^{\frac{1}{1-p}}-D(0,\hat{t}_n)^{\frac{1}{1-p}}\right)f(n(x_n-\hat{x}_n)).
\end{align*}
If we can show $\varliminf\limits_{n\to \infty}n(x_n-\hat{x}_n)>\eta$ for some $\eta>0$, then $n\to \infty$ will give a contradiction.
Indeed, we consider
\begin{equation*}
	g_n(x)\triangleq  h_n(t_n,z_n,\hat{t}_n,x,\hat{y}_n)=C_n -(1-\epsilon)v(\hat{t}_n,x,\hat{y}_n)-\epsilon\Phi(\hat{t}_n,x,\hat{y}_n)-\frac n2 (x-x_n)^2,
\end{equation*}
where $C_n$ is a constant. Then
\begin{equation*}
	\sup_{\{x|(\hat{t}_n,x,\hat{y}_n)\in G\}}g_n(x)=g_n(\hat{x}_n).
\end{equation*}
Note that the function
\begin{equation*}
	m_n(x)=-\epsilon\Phi(\hat{t}_n,x,\hat{y}_n)-\frac n2 (x-x_n)^2
\end{equation*}
has the first and second derivatives as
\begin{align*}
	&m'_n(x)=-\epsilon q CD(0,\hat{t}_n)(1+x+\hat{y}_n)^{q-1}-n (x-x_n),\\
	&m''_n(x)=-\epsilon q(q-1) CD(0,\hat{t}_n)(1+x+\hat{y}_n)^{q-2}-n.
\end{align*}
When $n$ is large enough,  $m''_n<0$. Since $\bar{x}>0$, when $n$ is large enough, $m_n(x)$ attains its maximum over $\mathbb{R}_+$ at a unique point $ \tilde{x}_n>0$, where $\tilde{x}_n$ satisfies $(\hat{t}_n,\tilde{x}_n,\hat{y}_n)\in G$ and $\tilde{x}_n$ solves
\begin{equation*}
	-\epsilon q C(1+\tilde{x}_n+\hat{y}_n)^{q-1}-n (\tilde{x}_n-x_n)=0.
\end{equation*}
Using the monotonicity of $v$ in $x$, we have $\hat{x}_n\leq\tilde{x}_n$. Thus, we obtain $$\varliminf_{n\to \infty}n(x_n-\hat{x}_n)>\eta, \ \ \mbox{ for some $\eta>0$}.$$

Case 2: $\bar{x}=0$.

We choose $(t_n,x_n,y_n)\in G$, $x_n>0$ such that $\lim\limits_{n\to\infty}(t_n,x_n,y_n,v(t_n,x_n,y_n))=(\bar{t},0,\bar{y},v(\bar{t},0,\bar{y}))$. Letting $\epsilon_n=\sqrt{(t_n-\bar{t})^2+x_n^2+(y_n-\bar{y})^2}$, we define
\begin{equation*}
	Q_n(t,z,\hat{t},\hat{z})=u(t,z)-v_{\epsilon}(\hat{t},\hat{z})-\varphi(t,z,\hat{t},\hat{z}),
\end{equation*}
where
\begin{equation*}
	\varphi(t,z,\hat{t},\hat{z})=\frac 14 (t-\bar{t})^4\!+\!\frac 14|x-\bar{x}|^4\!+\!\frac 14|y-\bar{y}|^4\!+\!\frac{(t-\hat{t})^2+|z-\hat{z}|^2}{2\epsilon_n}\!+\!\frac 14\left(\frac{\hat{x}-x}{x_n}-1\right)^4.
\end{equation*}
We then choose $(t^n,z^n,\hat{t}^n,\hat{z}^n)\in \bar {G}\times \bar{G}$ such that
\begin{equation*}
	\Sigma_n\triangleq\sup_{(t,z,\hat{t},\hat{z})\in \bar {G}\times \bar{G}}Q_n(t,z,\hat{t},\hat{z})=Q_n(t^n,z^n,\hat{t}^n,\hat{z}^n).
\end{equation*}
Without loss of generality, we assume $\lim\limits_{n\to \infty}(t^n,z^n,\hat{t}^n,\hat{z}^n)=(t^*,z^*,\hat{t}^*,\hat{z}^*)\in \bar{G}\times\bar{G}$. We have
\begin{equation*}
	\Sigma_n\geq Q_n(\bar{t},0,\bar{y},t_n,x_n,y_n)=u(\bar{t},0,\bar{y})-v_{\epsilon}(t_n,x_n,y_n)-\frac{\epsilon_n}{2}\rightarrow \Gamma (n\to \infty).
\end{equation*}
As $u$ and $-v_{\epsilon}$ are upper bounded in $\bar{G}$, $\frac{(t^n-\hat{t}^n)^2+|z^n-\hat{z}^n|^2}{\epsilon_n}$ is bounded. Therefore, $(\hat{t}^*,\hat{z}^*)=(t^*,z^*)$ and then
\begin{equation*}
	\varlimsup_{n\to \infty}\varphi(t^n,z^n,\hat{t}^n,\hat{z}^n)\!=\!\varlimsup_{n\to \infty}(u(t^n,z^n)\!-\!v_{\epsilon}(\hat{t}^n,\hat{z}^n)\!-\!\Sigma_n)\leq u(t^*,z^*)\!-\!v_{\epsilon}(t^*,z^*)\!-\!\Gamma\leq 0,
\end{equation*}
which gives
\begin{equation*}
	(t^*,z^*)=(\bar{t},0,\bar{y}),\lim\limits_{n\to \infty}\frac{(t^n-\hat{t}^n)^2+|z^n-\hat{z}^n|^2}{\epsilon_n}=0,\lim\limits_{n\to \infty}\frac{\hat{x}^n-x^n}{x_n}=1.
\end{equation*}
Thus, we can assume $\hat{x}^n>x^n\geq0$ for all $n\in \mathbb{N}$, and $(t^n,z^n,\hat{t}^n,\hat{z}^n)\in G\times G$ for all $n\in \mathbb{N}$. By Ishii's lemma, for some $\eta_n\downarrow 0$, there are $M_n\in\mathbb{S}^2$ and $N_n\in\mathbb{S}^2$ such that
\begin{align*}
	&(q_n,p_n,r_n,M_n)\in \overline{\mathcal{P}}^{2,+}u(t^n,z^n),\\
	&	(q'_n,p'_n,r'_n,N_n)\in \overline{\mathcal{P}}^{2,-}v_{\epsilon}(\hat{t}^n,\hat{z}^n),\\
	&	\begin{pmatrix}
		M_n  & 0\\
		0 & -N_n
	\end{pmatrix}\leq D^2\varphi_{z,\hat{z}}(t^n,z^n,\hat{t}^n,\hat{z}^n)+\eta_n [D^2\varphi_{z,\hat{z}}(t^n,z^n,\hat{t}^n,\hat{z}^n)]^2,
\end{align*}
where
\begin{align*}
	q_n=\frac{t^n-\hat{t}^n}{\epsilon_n}+(t^n-\bar{t})^3,\quad & q'_n=\frac{t^n-\hat{t}^n}{\epsilon_n},\\
	p_n=(x^n-\bar{x})^3+  \frac{x^n-\hat{x}^n}{\epsilon_n}\!-\!\frac{1}{x_n}(\frac{\hat{x}^n-x^n}{x_n}-1)^3,\!\quad& \!
	p'_n=  \frac{x^n-\hat{x}^n}{\epsilon_n}\!-\!\frac{1}{x_n}(\frac{\hat{x}^n-x^n}{x_n}-1)^3,\\ r_n=(y^n-\bar{y})^3+\frac{y^n-\hat{y}^n}{\epsilon},\quad &r'_n=\frac{y^n-\hat{y}^n}{\epsilon},
\end{align*}
and
\begin{equation*}
	D^2\varphi_{z,\hat{z}}(t^n,z^n,\hat{t}^n,\hat{z}^n)=	\begin{pmatrix}
		A_n+B_n  & -A_n\\
		-A_n & A_n
	\end{pmatrix},
\end{equation*}
where
\begin{equation*}
	A_n=\begin{pmatrix}
		\frac{1}{\epsilon_n} + \frac{1}{x^2_n}(\frac{\hat{x}^n-x^n}{x_n}-1)^2& 0\\
		0 & \frac{1}{\epsilon_n}
	\end{pmatrix},\quad B_n=\begin{pmatrix}
		3(x^n-\bar{x})^2  & 0\\
		0 & 3(y^n-\bar{y})^2
	\end{pmatrix}.
\end{equation*}
We choose $\eta_n\downarrow 0$ such that $\eta_n [D^2\varphi_{z,\hat{z}}(t^n,z^n,\hat{t}^n,\hat{z}^n)]^2\rightarrow 0(n\to \infty)$. As $\hat{x}^n>0$, we apply Lemma \ref{Ishi} and  treat $\mathcal{M}$ by the same way as in the first part to establish
\begin{equation*}
	F(\hat{t}^n,\hat{z}^n,q'_n,q'_n,r'_n,N_n)-F(t^n,z^n,q_n,p_n,r_n,M_n)\geq \epsilon\tilde{k}.
\end{equation*}
We proceed to prove that there exists a $\bar{\eta}>0$ such that
$\varliminf\limits_{n\to \infty}p'_n\geq \bar{\eta}$. We can then get a contradiction by the  same way as in the first part, which completes the proof. Indeed, we define
\begin{align*}
	f_n(x)&\triangleq  Q_n(t^n,z^n,\hat{t}^n,x,\hat{y}^n)\\&=C_n -(1-\epsilon)v(\hat{t}^n,x,\hat{y}^n)-\epsilon\Phi(\hat{t}^n,x,\hat{y}^n)-\frac { (x-x^n)^2}{2\epsilon_n}-\frac 14\left(\frac{x-x^n}{x_n}-1\right)^4.
\end{align*}
Then
\begin{equation*}
	\Sigma_n=\sup_{(\hat{t}^n,x,\hat{y}^n)\in G} f_n(x)=f_n(\hat{x}^n).
\end{equation*}
Note that the function
\begin{equation*}
	\bar{m}_n(x)=-\epsilon\Phi(\hat{t}^n,x,\hat{y}^n)-\frac { (x-x^n)^2}{2\epsilon_n}-\frac 14\left(\frac{x-x^n}{x_n}-1\right)^4
\end{equation*}
has the following first and second derivatives:
\begin{align*}
	&\bar{m}'_n(x)=-\epsilon q CD(0,\hat{t}^n)(1+x+\hat{y}_n)^{q-1}-\frac{(x-x^n)}{\epsilon_n}-\frac{1}{x_n}\left(\frac{x-x^n}{x_n}-1\right)^3,\\
	&\bar{m}''_n(x)=-\epsilon q(q-1) CD(0,\hat{t}^n)(1+x+\hat{y}_n)^{q-2}-\frac{1}{\epsilon_n}-\frac{1}{x_n^2}\left(\frac{x-x^n}{x_n}-1\right)^2.
\end{align*}
When $n$ is large enough,  $\bar{m}''_n<0$ and  $\bar{m}_n'(x^n)>0$. As $\hat{x}^n>x^n$ and $v$ is non-decreasing in $x$, we  have $m'_n(\hat{x}^n)\geq 0$ which means $\varliminf\limits_{n\to \infty}p'_n\geq \bar{\eta}$ for some $\bar{\eta}>0$.

In the above proof, we employ the following lemma which is used in the literature. However, in \cite{OS2002}, the paper does not give the proof. A similar lemma appears in \cite{BC(2019)}, but the paper does not consider the stochastic control. Therefore, we give a detailed proof here. Particularly, the validity of the lemma depends on the concavity of $-f$.

\begin{lemma}\label{Ishi}
	Let $(t_0,z_0)\in (\mathbb{R}_+\times(0,+\infty)\times \mathbb{R}_+)\cap B_{\delta}(\bar{t},\bar{z}) $ for some $\bar{z}\neq (0,0)$ and $0<\delta<{|\bar{z}|}$. If $(p_0,q_0,X_0)\in \overline{\mathcal{P}}^{2,-}v_{\epsilon}(t_0,z_0)$, then
	\begin{equation*}
		\min\{F(t_0,z_0,q_0,p_0,X_0),v_{\eta}(t_0,z_0)-\mathcal{M}[v_{\eta}]_*(t_0,z_0)\}\geq \epsilon{\tilde{k}},
	\end{equation*}
	where
	\begin{equation*}
		\tilde{k}=\min_{(t,z)\in\overline{B_{\delta}(\bar{t},\bar{z})}}K(t,z)>0.
	\end{equation*}
\end{lemma}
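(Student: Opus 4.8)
The plan is to exploit the fact that $v_\epsilon=(1-\epsilon)v+\epsilon\Phi$ is a \emph{strict} supersolution of \eqref{HJBQVI2}: it is a convex combination of the supersolution $v$ and of $\Phi$, which by the $D(0,\cdot)$-scaled version of Lemma \ref{supsolution} satisfies the supersolution inequality with strictly positive slack, $F[\Phi](t,z)=\tilde{\mathcal{L}}[\Phi]-bD(0,t)\lambda(t)U(x)-[D(0,t)]^{\frac1{1-p}}f(\Phi_x)\geq K(t,x,y)$ and $\Phi-\mathcal{M}[\Phi]\geq K(t,x,y)$, with $K(t,z)\geq\tilde k$ on $\overline{B_\delta(\bar t,\bar z)}\ni(t_0,z_0)$. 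I would establish the two members of the minimum separately; note first that if $z_0\in\bar S_\emptyset$ then $\mathcal{M}[v_\epsilon]_*(t_0,z_0)=-\infty$ and the $\mathcal{M}$-member is trivially $+\infty$, so one may assume $z_0\in\mathbb{R}^2_+\setminus\bar S_\emptyset$. Throughout, $\Phi_x(t_0,z_0)=qCD(0,t_0)(1+x_0+y_0)^{q-1}>0$.

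\emph{The $\mathcal{M}$-member.} Because $\Phi\in C^{1,2}$, one has $\overline{\mathcal{P}}^{2,-}v_\epsilon(t_0,z_0)=(1-\epsilon)\overline{\mathcal{P}}^{2,-}v(t_0,z_0)+\epsilon(\Phi_t,\Phi_z,\mathrm{D}^2_z\Phi)(t_0,z_0)$, so the given $(q_0,p_0,X_0)\in\overline{\mathcal{P}}^{2,-}v_\epsilon(t_0,z_0)$ produces $(\tilde q,\tilde p,\tilde X):=\tfrac1{1-\epsilon}\big((q_0,p_0,X_0)-\epsilon(\Phi_t,\Phi_z,\mathrm{D}^2_z\Phi)(t_0,z_0)\big)\in\overline{\mathcal{P}}^{2,-}v(t_0,z_0)$. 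Since $v$ is a supersolution of \eqref{HJBQVI2} and $x_0>0$, the supersolution inequality holds at $(t_0,z_0)$ along this jet (it extends to the semijet closure by continuity of $F$ and lower semicontinuity of $\mathcal{M}[v]_*$), giving simultaneously $F(t_0,z_0,\tilde q,\tilde p,\tilde X)\geq0$ and $v(t_0,z_0)-\mathcal{M}[v]_*(t_0,z_0)\geq0$. By positive homogeneity of $\mathcal{M}$ one has $\mathcal{M}[(1-\epsilon)v]_*=(1-\epsilon)\mathcal{M}[v]_*$ and $\mathcal{M}[\epsilon\Phi]=\epsilon\mathcal{M}[\Phi]$, and Point 7 of Proposition \ref{VM} (with $\phi_1=(1-\epsilon)v$, which is LSC, and $\phi_2=\epsilon\Phi$, which is continuous hence USC) gives $\mathcal{M}[v_\epsilon]_*\leq(1-\epsilon)\mathcal{M}[v]_*+\epsilon\mathcal{M}[\Phi]$. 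Therefore, at $(t_0,z_0)$,
\begin{equation*}
v_\epsilon-\mathcal{M}[v_\epsilon]_*\ \geq\ (1-\epsilon)\big(v-\mathcal{M}[v]_*\big)+\epsilon\big(\Phi-\mathcal{M}[\Phi]\big)\ \geq\ 0+\epsilon K(t_0,z_0)\ \geq\ \epsilon\tilde k.
\end{equation*}

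\emph{The $F$-member.} Since $F(t_0,z_0,\tilde q,\tilde p,\tilde X)\geq0$ is in particular finite, we must have $\tilde p_1>0$ (otherwise $f(\tilde p_1)=+\infty$ and $F=-\infty$); combined with $\Phi_x(t_0,z_0)>0$ this gives $p_0=(1-\epsilon)\tilde p_1+\epsilon\Phi_x(t_0,z_0)>0$, so all quantities below are finite. The map $(q,p,X)\mapsto F(t_0,z_0,q,p,X)$ is the sum of a term affine in $(q,p,X)$ and of the concave term $p_1\mapsto-[D(0,t_0)]^{\frac1{1-p}}f(p_1)$, hence concave --- this is the step where the concavity of $-f$ is essential. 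Writing $(q_0,p_0,X_0)=(1-\epsilon)(\tilde q,\tilde p,\tilde X)+\epsilon(\Phi_t,\Phi_z,\mathrm{D}^2_z\Phi)(t_0,z_0)$ and using concavity,
\begin{equation*}
F(t_0,z_0,q_0,p_0,X_0)\ \geq\ (1-\epsilon)F(t_0,z_0,\tilde q,\tilde p,\tilde X)+\epsilon F[\Phi](t_0,z_0)\ \geq\ 0+\epsilon K(t_0,z_0)\ \geq\ \epsilon\tilde k.
\end{equation*}
Combining the two members gives $\min\{F(t_0,z_0,q_0,p_0,X_0),\,v_\epsilon(t_0,z_0)-\mathcal{M}[v_\epsilon]_*(t_0,z_0)\}\geq\epsilon\tilde k$.

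The main obstacle I anticipate is the careful semijet bookkeeping: matching $\overline{\mathcal{P}}^{2,-}v_\epsilon$ with $\overline{\mathcal{P}}^{2,-}v$ under the convex combination with the smooth $\Phi$, reconciling this with the test-function form of the supersolution definition used in the paper, and checking that the $\min$-inequality for $v$ really passes to the closure of the semijet (by continuity of $F$ and lower semicontinuity of $\mathcal{M}[v]_*$). Coupled with this is the handling of the value $+\infty$ of $f$ on $(-\infty,0]$: it is precisely the interplay of $\Phi_x>0$ and the concavity of $-f$ that legitimizes splitting $F$ along the convex combination and that yields the uniform positive margin $\epsilon\tilde k$, exactly as anticipated in the remark preceding the lemma.
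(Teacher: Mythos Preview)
Your proof is correct and follows essentially the same approach as the paper: both exploit the decomposition $v_\epsilon=(1-\epsilon)v+\epsilon\Phi$, the concavity of $-f$ to split $F$ along this convex combination, Point 7 of Proposition \ref{VM} for the $\mathcal{M}$-part, and the strict margin $K\geq\tilde k$ coming from $\Phi$. The only difference is presentational: the paper explicitly picks an approximating sequence $(t_n,z_n,q_n,p_n,X_n)\in\mathcal{P}^{2,-}v_\epsilon$ and a test function $\varphi_n$ (then passes to $\frac{1}{1-\epsilon}\varphi_n-\frac{\epsilon}{1-\epsilon}\Phi$ as test function for $v$) before taking $\varlimsup$, whereas you encapsulate this in the standard identity $\overline{\mathcal{P}}^{2,-}\big((1-\epsilon)v+\epsilon\Phi\big)=(1-\epsilon)\overline{\mathcal{P}}^{2,-}v+\epsilon(\Phi_t,\Phi_z,\mathrm{D}^2_z\Phi)$ for smooth $\Phi$ and in the extension of the supersolution inequality to the closed semijet via continuity of $F$ and lower semicontinuity of $\mathcal{M}[v]_*$.
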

\begin{proof}
	As $(p_0,q_0,X_0)\in \overline{\mathcal{P}}^{2,-}v_{\eta}(t_0,z_0)$, we can choose $(t_n,z_n)\in B_{\delta}(\bar{t},\bar{z})$ and $(p_n,q_n,X_n)\in \mathcal{P}^{2,-}v_{\epsilon}(t_n,z_n)$ such that
	\begin{equation*}
		\lim\limits_{n\to\infty}(t_n,z_n,q_n,p_n,X_n,v_{\epsilon}(t_n,z_n))=(t_0,z_0,q_0,p_0,X_0,v_{\epsilon}(t_0,z_0)).
	\end{equation*}
	For each $n\in\mathbb{N}$, we choose $\varphi_n\in C^{1,2}(S)$ such that $\varphi_n-v_{\epsilon}$ attains its maximum $0$ at $(t_n,z_n)$ and $(\varphi_t(t_n,z_n),\varphi_z(t_n,z_n),\mathrm{D}^2_z\varphi(t_n,z_n))=(q_n,p_n,X_n)$. $\varphi_n\leq v_{\epsilon}$ gives  $v\geq \frac{1}{1-\epsilon}\varphi_n-\frac{\epsilon}{1-\epsilon}\Phi$. As $v$ is a viscosity supersolution, we have
	\begin{equation}
		\min\{F\left[\frac{1}{1-\epsilon}\varphi_n-\frac{\epsilon}{1-\epsilon}\Phi\right](t_n,z_n),v(t_n,z_n)-\mathcal{M}[v]_*(t_n,z_n)\}\geq 0.\label{concave}
	\end{equation}
	By the concavity of $-f$, for all $a$, $b\in \mathbb{R}$,
	\begin{align*}
		&
		-(1-\epsilon)f(\frac{1}{1-\epsilon}a-\frac{\epsilon}{1-\epsilon}b)-\epsilon f(b)\leq -f(a)\\		\Rightarrow &
		-f(\frac{1}{1-\epsilon}a-\frac{\epsilon}{1-\epsilon}b)\leq -\frac{1}{1-\epsilon}f(a)+\frac{\epsilon}{1-\epsilon}f(b).
	\end{align*}
	Combining with \eqref{concave}, we have
	\begin{equation*}
		\min\{\frac{1}{1-\epsilon}F[\varphi_n](t_n,z_n)-\frac{\epsilon}{1-\epsilon}F[\Phi](t_n,z_n)(t_n,z_n),v(t_n,z_n)-\mathcal{M}[v]_*(t_n,z_n)\}\geq 0,
	\end{equation*}
	or equivalently
	\begin{equation*}
		\!\min\{F[\varphi_n](t_n,z_n)\!-\!\epsilon F[\Phi](t_n,z_n)(t_n,z_n),(v_{\eta}\!-\!\epsilon\Phi)(t_n,z_n)\!-\!\mathcal{M}[v_{\eta}\!-\!\epsilon\Phi]_*(t_n,z_n)\}\!\geq\! 0.
	\end{equation*}
	Therefore,
	\begin{equation}
		F[\varphi_n](t_n,z_n)\geq \epsilon{\tilde{k}},\label{limit1}
	\end{equation}
	and
	\begin{equation*}
		(v_{\eta}-\epsilon\Phi)(t_n,z_n)-\mathcal{M}[v_{\eta}-\epsilon\Phi]_*(t_n,z_n)\geq 0.
	\end{equation*}
	Using Point 7 of  Proposition \ref{VM}, we have
	\begin{equation*}
		(v_{\eta}-\epsilon\Phi)(t_n,z_n)\geq \mathcal{M}[v_{\eta}-\epsilon\Phi]_*(t_n,z_n)\geq \mathcal{M}[v_{\eta}]_*(t_n,z_n)-\epsilon\mathcal{M}[\Phi](t_n,z_n),
	\end{equation*}
	and then
	\begin{equation}
		v_{\eta}(t_n,z_n)-\mathcal{M}[v_{\eta}]_*(t_n,z_n)\geq	\epsilon{\tilde{k}}.\label{limit2}
	\end{equation}
	Taking $\varlimsup\limits_{n\to\infty}$ in \eqref{limit1} and \eqref{limit2}, we complete the proof.
\end{proof}
\end{document}